\def\blfootnote{\xdef\@thefnmark{}\@footnotetext}
\newcommand{\degree}{\ensuremath{^\circ}{}}
\newcounter{mnotecount}[section]
\newcounter{mymnotecount}[section]
\renewcommand{\themymnotecount}{\thesection.\arabic{mymnotecount}}
\newcommand{\mymnote}[1]{\protect{\stepcounter{mnotecount}}${\raisebox{0.5\baselineskip}[0pt]{\makebox[0pt][c]{\color{blue}{\tiny\em$\bullet$\themymnotecount}}}}$\marginpar{\raggedright\tiny\em$\!\bullet$\themymnotecount:
\blue{#1}}\ignorespaces}
\renewcommand{\mymnote}[1]{}
\newcommand{\red}[1]{{\color{red}#1}}
\newcommand{\blue}[1]{{\color{blue}#1}}
\DeclareMathOperator{\tho}{\text{\rm \thorn}}
\DeclareMathOperator{\edt}{\text{\rm \eth}}
\numberwithin{equation}{section}
\theoremstyle{plain}
\newtheorem{theorem}{Theorem}[section]
\newtheorem{proposition}[theorem]{Proposition}
\newtheorem{lemma}[theorem]{Lemma}
\newtheorem{definition}[theorem]{Definition}
\newtheorem{remark}[theorem]{Remark}
\newtheorem{example}[theorem]{Example}
\renewcommand{\Re}{{\mathbb R}}         %
\newcommand{\Co}{{\mathbb C}}         %
\newcommand{\half}{\tfrac{1}{2}}         %
\newcommand{\eps}{\epsilon}
\newcommand{\diag}{\operatorname{diag}}
\newcommand{\Mink}{\mathbb M}
\newcommand{\NatNum}{\mathbb N}
\newcommand{\Mcal}{\mathcal M}
\newcommand{\sDiv}{\mathscr{D}}
\newcommand{\sCurl}{\mathscr{C}}
\newcommand{\sCurlDagger}{\mathscr{C}^\dagger}
\newcommand{\sTwist}{\mathscr{T}}
\newcommand{\sExt}{\mathscr{E}}
\newcommand{\SL}{\mathrm{SL}}
\newcommand{\SO}{\mathrm{SO}}
\newcommand{\Spin}{\text{Spin}}
\newcommand{\KillSpin}{\mathcal{KS}}
\newcommand{\PetrovD}{D}
\newcommand{\PetrovN}{N}
\newcommand{\PetrovO}{O}
\newcommand{\NPl}{l}
\newcommand{\NPn}{n}
\newcommand{\NPm}{m}
\newcommand{\NPmbar}{\bar m}
\newcommand{\KSigma}{\Sigma} 
\newcommand{\KDelta}{\Delta}
\newcommand{\angDelta}{\centernot{\Delta}}
\newcommand{\curlyR}{\mathcal{R}}
\newcommand{\Lie}{\mathcal L} 
\newcommand{\SpaceSlice}{\Sigma} 
\newcommand{\Spaceh}{h}
\newcommand{\SpaceK}{k}
\newcommand{\ordo}{o}
\newcommand{\met}{g} 
\newcommand{\ba}{\mathbf{a}}
\newcommand{\bb}{\mathbf{b}}
\newcommand{\bc}{\mathbf{c}}
\newcommand{\bA}{\mathbf{A}}
\newcommand{\bB}{\mathbf{B}}
\newcommand{\bC}{\mathbf{C}}
\newcommand{\SymSpin}{S}
\newcommand{\SymSpinSec}{\mathcal S}
\newcommand{\GenVec}{\nu} %
\newcommand{\EMTensorT}{X}
\newcommand{\etavarphi}{\varsigma}
\newcommand{\ua}{{\underline{a}}}
\newcommand{\ub}{{\underline{b}}}
\newcommand{\DiffCurlyRTilde}{{\tilde{\mathcal{R}}'}{}}
\newcommand{\newcommandMG}[3]{\newcommand{#1}{#3}} %
\newcommandMG{\fnMna} {{f_{1}}} {z}
\newcommandMG{\fnMnb} {{f_{2}}} {w}
\newcommandMG{\fnMca} {f_{1,1}}   {z_1}
\newcommandMG{\fnMcb} {f_{2,1}}   {w_1}
\newcommandMG{\fnMda} {f_{1,2}}   {z_2}
\newcommandMG{\fnMdb} {f_{2,2}}   {w_2}
\newcommand{\rp}{r_+}
\newcommand{\rs}{r_*}
\newcommand{\dr}{\partial_r}
\newcommand{\di}{\mathrm{d}} %
\newcommand{\Geodesic}{\gamma}
\newcommand{\GeodesicMass}{\boldsymbol{\mu}}
\newcommand{\GeodesicEnergy}{\boldsymbol{e}}
\newcommand{\GeodesicLz}{\boldsymbol{\ell_z}}
\newcommand{\GeodesicLsquared}{\boldsymbol{L}^2}
\newcommand{\GeodesicQ}{\boldsymbol{q}}
\newcommand{\TensorQ}{Q}
\newcommand{\GeodesicKCarter}{\boldsymbol{k}}
\newcommand{\TensorKCarter}{K}
\newcommand{\OperatorQ}{Q}
\newcommand{\OperatorKCarter}{K}
\newcommand{\Rop}{\mathrm{R}}
\newcommand{\Sop}{\mathrm{S}}
\newcommand{\Rsol}{R}
\newcommand{\Ssol}{S}
\newcommand{\bmu}{\upmu}
\newcommand{\GenEnergyGeodesic}[1]{e_{#1}}
\newcommand{\rorbit}{r_{o}} %
\newcommand{\vecMGeodesic}{A}
\newcommand{\fnMrGeodesic}{\mathcal{F}}
\newcommand{\fnMrMorawetz}{\mathcal{F}}
\newcommand{\fnMpGeodesic}{q_{\text{reduced}}}
\newcommand{\gMetric}{g}
\newcommand{\vecMprimary}{A}
\newcommand{\scalMprimary}{q}
\newcommand{\HypersurfaceGeneral}{\Sigma}
\newcommand{\vecfont}[1]{#1}
\newcommand{\vecX}{\vecfont{X}}
  \def\moverlay{\mathpalette\mov@rlay}
  \def\mov@rlay#1#2{\leavevmode\vtop{%
     \baselineskip\z@skip \lineskiplimit-\maxdimen
     \ialign{\hfil$#1##$\hfil\cr#2\crcr}}}
\newcommand{\squareTME}{\moverlay{\square\cr {\scriptscriptstyle \mathrm T}}}
\newcommand{\hme}{\widehat{g}}
\newcommand{\id}{\mathbf{i}}
\newcommand{\Action}{I}
\newcommand{\tr}{\text{tr}}
\newcommand{\Ucal}{\mathcal{U}}
\newcommand{\Vcal}{\mathcal{V}}
\newcommand{\Tcal}{\mathcal{T}}
\newcommand{\Rcal}{\mathcal{R}}
\newcommand{\Scri}{\mathcal{I}}
\newcommand{\Horizon}{\mathcal{H}}
\def\EMTensorH{\mathbf{H}}
\newcommand{\OrthFrameT}{\widehat T}
\newcommand{\OrthFrameX}{\widehat X}
\newcommand{\OrthFrameY}{\widehat Y}
\newcommand{\OrthFrameZ}{\widehat Z}
\def\MorawetzCurrJ{\mathbf{P}}
\title{Geometry of black hole spacetimes}
\author[L. Andersson]{Lars Andersson} 
\email{laan@aei.mpg.de}
\address{Albert Einstein Institute, Am M\"uhlenberg 1, D-14476 Potsdam,
Germany 
}
\author[T. B\"ackdahl]{Thomas B\"ackdahl} \email{thobac@chalmers.se}
\address{Mathematical Sciences, Chalmers University of Technology and University of Gothenburg, SE-412 96 Gothenburg, Sweden
\and 
The School of Mathematics, University of Edinburgh, James Clerk Maxwell Building, 
Peter Guthrie Tait Road, Edinburgh
EH9 3FD, UK
}
\author[P. Blue]{Pieter Blue} \email{P.Blue@ed.ac.uk}
\address{The School of Mathematics and the Maxwell Institute, University of Edinburgh, James Clerk Maxwell Building, 
Peter Guthrie Tait Road, Edinburgh
EH9 3FD,UK}
\date{October 11, 2016}
\begin{document} 

\begin{abstract} These notes, based on lectures given at the summer school on Asymptotic Analysis in General Relativity, collect material on the Einstein equations, the geometry of black hole spacetimes, and the analysis of fields on black hole backgrounds. The Kerr model of a rotating black hole in vacuum is expected to be unique and stable. The problem of proving these fundamental facts provides the background for the material presented in these notes. 
 
Among the many topics which are relevant for the uniqueness and stability problems are the theory of fields on black hole spacetimes, in particular for gravitational perturbations of the Kerr black hole, and more generally, the study of  nonlinear field equations in the presence of trapping. The study of these questions requires tools from several different fields, including Lorentzian geometry, hyperbolic differential equations and spin geometry, which are all relevant to the black hole stability problem. 
\end{abstract}

\maketitle
\blfootnote{Based on lectures given by the first named author at the 2014 Summer School on Asymptotic Analysis in General Relativity, held at Institut Fourier, Grenoble}

\tableofcontents

\section{Introduction} \label{sec:newintro}
A short time after Einstein published his field equations for general relativity in 1915, 
Karl Schwarzschild discovered an exact 
and explicit solution of the Einstein vacuum equations describing
the gravitational field of a spherical body at rest. 
In analyzing Schwarzschild's solution, one finds that if the central body
is sufficiently concentrated, light emitted from its surface cannot reach an
observer at infinity. It was not until the 1950's that the global structure of the Schwarzschild spacetime was understood. By this time causality  theory and the Cauchy problem for the Einstein equations was firmly established, although many important problems remained open. 
Observations of highly energetic phenomena occurring within small spacetime regions, eg. quasars, made it plausible that black holes played a significant role in astropysics, and by the late 1960's these objects were part of mainstream astronomy and astrophysics. The term ``black hole'' for this type of object came into use in the 1960's. According to our current understanding, black holes are ubiquitous in the universe, in particular most galaxies have a supermassive black hole at their center, and these play an important role in the life of the galaxy. Also our galaxy has at its center a very compact object, Sagittarius A*, with a diameter of less than one astronomical unit, and a mass estimated to be $10^6$ $M_{\odot}$. Evidence for this includes observations of the orbits of stars in its vicinity.

Recall that a solution to the Einstein vacuum equations is a Lorentzian spacetime $(\Mcal, \met_{ab})$, satisfying $R_{ab} = 0$, where $R_{ab}$ is the Ricci tensor of $\met_{ab}$. The Einstein equation is the Euler-Lagrange equation of the diffeomorphism invariant Einstein-Hilbert action functional, given by the integral of the scalar curvature of $(\Mcal, \met_{ab})$, 
$$
\int_{\Mcal} R d\mu_\met.
$$
The diffeomorphism invariance, or general covariance, of the action has the consequence that Cauchy data for the Einstein equation must satisfy a set of constraint equations, and that the principal symbol of the Euler-Lagrange equation is degenerate\footnote{From a hyperbolic PDE perspective, the Einstein equations are both over- and under-determined. Contracting the Einstein equation against the normal to a smooth spacelike hypersurface gives elliptic equations that must be satisfied on the hypersurface; these are called the constraint equations. After introducing suitable gauge conditions, the combination of the gauge conditions and the remaining Einstein equations form a hyperbolic system of evolution equations. Furthermore, if the initial data satisfies the constraint equations, then the solution to this hyperbolic system, when restricted to any spacelike hypersurface, also satisfies the constraint equations. If the initial hypersurface is null, the situation becomes more complicated to summarise but simpler to treat in full detail.}. After introducing suitable gauge conditions, the Einstein equations can be reduced to a hyperbolic system of evolution equations. It is known that for any set of sufficiently regular Cauchy data satisfying the constraints, the Cauchy problem for the Einstein equation has a unique solution which is maximal among all regular, vacuum Cauchy developments. This general result, however, does not give any detailed information about the properties of the maximal development. 

There are two main conjectures about the maximal development. The strong cosmic censorship conjecture (SCC) states that a generic maximal development is inextendible, as a regular vacuum spacetime. There are examples where the maximal development is extendible, and has non-unique extensions, which furthermore may contain closed timelike curves. In these cases, predictability fails for the Einstein equations, but if SCC holds, they are non-generic. At present, this is only known to hold in the context of  families of spacetimes with symmetry restrictions, see \cite{lrr-2010-2,MR2098914} and references therein.
Further, some non-linear stability results without symmetry assumptions, including stability of Minkowski space and stability of quotients of the Milne model (also known as L\"obell spacetimes, see \cite{MR1729952,MR2863911} and references therein), can be viewed as giving support to SCC. 
The weak cosmic censorship conjecture states that for a generic isolated system (i.e. an asymptotically flat solution of the Einstein equations), any singularity is hidden from observers at infinity. In this case, the spacetime contains a black hole region, i.e. the complement of the part of the spacetime visible to observers at infinity. The black hole region is bounded by the event horizon, the boundary of the region of spacetime which can be seen by observers at future infinity. 
Both of these conjectures remain wide open, although there has been limited progress on some problems related to them. The weak cosmic censorship conjecture is most relevant for the purpose of these notes, see \cite{1997gr.qc....10068W}.

The Schwarzschild solution is static, spherically symmetric, asymptotically flat, and has a single free parameter $M$ which represents the mass of the black hole. By Birkhoff's theorem it is the unique solution of the vacuum Einstein equations with these properties.  
In 1963 
Roy Kerr \cite{kerr:1963PhRvL..11..237K} discovered a new,
explicit family of asymptotically flat solutions of the vacuum Einstein equations which are 
stationary, axisymmetric, and rotating. Shortly after this, a charged, rotating black hole solution to the Einstein-Maxwell equations, known as the Kerr-Newman solution, was found, cf. \cite{1965JMP.....6..918N,1965JMP.....6..915N}. Recall that a vector field $\GenVec^a$ is Killing if $\nabla_{(a} \GenVec_{b)} = 0$. A Kerr spacetime admits two Killing fields, the stationary Killing field $(\partial_t)^a$ which is timelike at infinity, and the axial Killing field $(\partial_\phi)^a$. 
The Kerr family of solutions is parametrized by the mass $M$, and the azimuthal angular momentum per unit mass $a$. In the limit $a=0$, the Kerr solution reduces to the spherically symmetric 
Schwarzschild solution.
 
If $|a| \leq M$, the Kerr spacetime contains a black hole, while if $|a| > M$, there is a ringlike singularity which is naked, in the sense that it fails to be hidden from observers at infinity. This situation would violate the weak cosmic censorship conjecture, and one therefore expects that an overextreme Kerr spacetime is unstable, and in particular, that it cannot arise through a dynamical process from regular Cauchy data. 

For a geodesic $\gamma^a(\lambda)$ with velocity $\dot \gamma^a = d\gamma^a/d\lambda$, in a stationary axisymmetric spacetime\footnote{We use signature $+---$, in particular timelike vectors have positive norm.}, there are three conserved quantities, the mass $\GeodesicMass^2 =  \dot \gamma^a \dot \gamma_b$, energy $\GeodesicEnergy = \dot \gamma^a (\partial_t)_a$, and angular momentum $\GeodesicLz = \dot \gamma^a (\partial_\phi)_a$. In a general axisymmetric spacetime, geodesic motion is chaotic. However, as was discovered by Brandon Carter in 1968, there is a fourth conserved quantity for geodesics in the Kerr spacetime, the Carter constant $\GeodesicKCarter$, see section \ref{sec:kerrspacetime} for details. 
By Liouville's theorem, this allows one to integrate the geodesic equations by quadratures, and thus geodesics in the Kerr spacetime do not exhibit a chaotic behavior. 

The Carter constant is a manifestation of the fact that the Kerr spacetime is algebraically special, of Petrov type $\{2,2\}$, also known as type D. In particular, there are two repeated principal null directions for the Weyl tensor. As shown by Walker and Penrose \cite{walker:penrose:1970CMaPh..18..265W} 
a vacuum spacetime of Petrov type $\{2,2\}$ admits an object satisfying a generalization of Killing's equation, namely a Killing spinor $\kappa_{AB}$,  satisfying $\nabla_{A'(A} \kappa_{BC)} = 0$. As shown in the just cited paper, this leads to the presence of four conserved quantities for null geodesics.

Assuming some technical conditions, any stationary asymptotically flat, stationary black hole spacetime is expected to belong to the Kerr family, a fact which is known to hold in the real-analytic case. Further, the Kerr black hole is expected to be stable in the sense that a small perturbation of the Kerr space time settles down asymptotically to a member of the Kerr family. 

There is much observational evidence pointing to the fact that black holes exist in large numbers in the universe, and that they play a role in many astrophysically significant processes. For example, most galaxies, including our own galaxy, are believed to contain a supermassive black hole at their center. Further, dynamical processes involving black holes, such as mergers, are expected to be important sources of gravitational wave radiation, which could be observed by existing and planned gravitational wave observatories\footnote{At the time of writing, the first such observation has just been announced \cite{PhysRevLett.116.061102}.}
Thus, black holes play a central role in astrophysics. 

Due to its conjectured uniqueness and stability properties, these black holes are expected to be modelled by the Kerr, or Kerr-Newman solutions. However, in order to establish the astrophysical relevance of the Kerr solution, it is vital to find rigorous proofs of both of these conjectures, which can be referred to as the black hole uniqueness and stability problems, respectively. A great deal of work has been devoted to these and related problems,  and although progress has been made, both remain open at present. For a solution of the stability problem, it is important to have an effective characterization of the Kerr spacetime. This is an important aspect of the uniqueness problem.

\subsection*{Overview} Section \ref{sec:background} introduces a range of background material on general relativity, including a discussion of the Cauchy problem for the Einstein equations. The discussion of black hole spacetimes is started in section \ref{sec:blackholes} with a detailed discussion of the global geometry of the extended Schwarzschild spacetime, followed by some background on marginally outer trapped surfaces and dynamical black holes. Section \ref{sec:prel} introduced some concepts from spin geometry and the related GHP formalism. The Petrov classification is introduced and some properties of algebraically special spacetimes are of its consequences are presented.   In section \ref{sec:kerrspacetime} the geometry of the Kerr black hole spacetimes is introduced. 

Section \ref{sec:monotonicity} contains a discussion of null geodesics in the Kerr spacetime. A construction of monotone quantities for null geodesics based on vector fields with coefficients depending on conserved quantities, is introduced. In section \ref{sec:symop}, symmetry operators for fields on the Kerr spacetime are discussed. Dispersive estimates for fields are the analog of monotone quantities for null geodesics, and in constructing these, symmetry operators play a role analogous to the conserved quantities for the case of geodesics.

\section{Background} \label{sec:background}
\subsection{Minkowski space} \label{sec:minkowski}
Minkowski space $\Mink$ is $\Re^4$ with metric which in a Cartesian coordinate system $(x^a) = (t, x^i)$ takes the form\footnote{Here and below we shall use line elements, eg. $d\tau_{\Mink}^2 = (g_{\Mink})_{ab} dx^a dx^b$ and metrics, eg. $(g_{\Mink})_{ab}$ interchangeably.} 
$$
d\tau_{\Mink}^2 =  dt^2 - (dx^1)^2- (dx^2)^2 - (dx^3)^2 . 
$$
Introducing the spherical coordinates $r,\theta, \phi$ we can write the metric in the form $-dt^2 + dr^2 + r^2 d\Omega_{S^2}^2$, where $d\Omega_{S^2}^2$ is the line element on the standard $S^2$, 
\begin{equation}\label{eq:gS2}
d\Omega_{S^2}^2 = (g_{S^2})_{ab} dx^a dx^b = d\theta^2 + \sin^2\theta d\phi^2 .
\end{equation} 
A tangent vector $\GenVec^a$ is timelike, null, or spacelike when  $g_{ab} \GenVec^a \GenVec^b > 0$, $=0$, or $< 0$, respectively.
\begin{wrapfigure}{r}{0.15\textwidth}
\begin{center}
\raisebox{-0.5\height}{\includegraphics{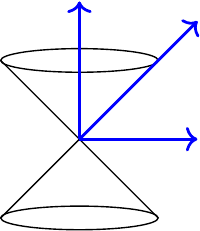}}
\end{center} 
\end{wrapfigure} 
Vectors with $g_{ab} \GenVec^a \GenVec^b \geq 0$ are called causal. 
Let $p,q \in \Mink$. We say that $p$ is in the causal (timelike) future of $q$ if $p-q$ is causal (timelike). The causal and timelike futures $J^+(p)$ and $I^+(p)$ of $p \in \Mink$ are the sets of points which are in the causal and timelike futures of $p$, respectively. The corresponding past notions are defined analogously.  

Let $u,v$ be given by 
$$
u = t-r, \quad v = t+r
$$
In terms of these coordinates the line element takes the form 
\begin{equation}\label{eq:uvMink}
d\tau_\Mink^2 =  du dv - r^2 d\Omega_{S^2}^2
\end{equation} 
We see that there are no terms $du^2$, $dv^2$, which corresponds to the fact that both $u,v$ are null coordinates. In particular, the vectors $(\partial_u)^a$, $(\partial_v)^a$ are null. A complex null tetrad is given by 
\begin{subequations}\label{eq:Minktetrad}
\begin{align}
\NPl^a ={}& \sqrt{2} (\partial_u)^a = \frac{1}{\sqrt{2}} \left (\partial_t)^a + (\partial_r)^a\right ), \\
\NPn^a ={}& \sqrt{2} (\partial_v)^a = \frac{1}{\sqrt{2}} \left ((\partial_t)^a - (\partial_r)^a\right ), \\
\NPm^a ={}& \frac{1}{\sqrt{2}r}\left ((\partial_\theta)^a + \frac{i}{\sin\theta} (\partial_\phi)^a\right )
\end{align}
\end{subequations}
normalized so that $\NPn^a \NPl_a = 1 = - \NPm^a \NPmbar_a$, with all other inner products of tetrad legs zero. Complex null tetrads with this normalization play a central role in the Newman-Penrose and Geroch-Held-Penrose formalisms, see section \ref{sec:prel}. In these notes we will use such tetrads unless otherwise stated. 

In terms of a null tetrad, we have  
\begin{equation}\label{eq:gNP}
g_{ab} = 2(\NPl_{(a} \NPn_{b)} - \NPm_{(a}\NPmbar_{b)}) .
\end{equation}
Introduce compactified null coordinates $\Ucal, \Vcal$, given by 
$$
\Ucal = \arctan u, \quad \Vcal = \arctan v .
$$
These take values in $\{(-\pi/2,\pi/2) \times (-\pi/2, \pi/2)\} \cap \{ \Vcal \geq \Ucal \}$, and we can thus present Minkowski space in a \emph{causal diagram}, see figure \ref{fig:Mink-causal}. 
\begin{figure}[!h]
\centering
\raisebox{-0.5\height}{\includegraphics{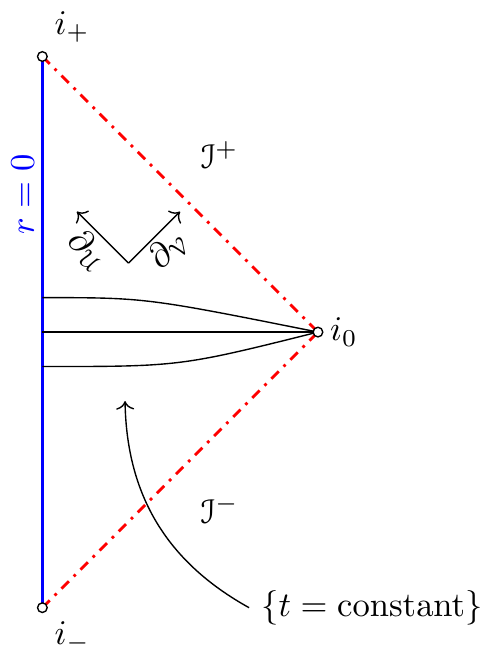}}
\caption{\ }
\label{fig:Mink-causal}
\end{figure} 
Here each point represents an $S^2$ and we have drawn null vectors at 45\degree angles. 
A compactification of Minkowski space is now given by adding the null boundaries\footnote{Here $\Scri$ is pronounced ``Scri'' for ``script I''.} $\Scri^\pm$, spatial infinity $i_0$ and timelike infinity $i^{\pm}$ as indicated in the figure. Explicitely, 
\begin{align*} 
\Scri^+ ={}&  \{ \Vcal = \pi/2 \} \\
\Scri^- ={}& \{\Ucal = - \pi/2 \} \\ 
i_0 ={}& \{\Vcal = \pi/2, \Ucal = - \pi/2 \} \\ 
i_{\pm} ={}& \{(\Vcal, \Ucal) = \pm (\pi/2, \pi/2) \}
\end{align*} 
In figure \ref{fig:Mink-causal}, we have also indicated schematically the $t$-level sets which approach spatial infinity $i_0$. 
Causal diagrams are a useful tool which, if applied with proper care, can be used to understand the structure of quite general spacetimes. Such diagrams are often referred to as Penrose, or Carter-Penrose diagrams. 

In particular, as can be seen from figure \ref{fig:Mink-causal}, we have that 
$\Mink = I^-(\Scri^+) \cap I^+(\Scri^-)$, i.e. any point in $\Mink$ is in the past of $\Scri^+$ and in the future of $\Scri^-$. This fact is related to the fact that $\Mink$ is \emph{asymptotically simple}, in the sense that it admits a conformal compactification with regular null boundary, and has the property that any inextendible null geodesic hits the null boundary. 
For massless fields on Minkowski space, this means that it makes sense to formulate a scattering map which takes data on $\Scri^-$ to data on $\Scri^+$, see \cite{MR0175590}.

Let 
\begin{equation}\label{eq:TcalRcal}
\Tcal = \Vcal + \Ucal, \quad \Rcal = \Vcal - \Ucal. 
\end{equation}
Then, 
with $\Phi^2 = 2 \cos \Ucal \cos\Vcal$, the conformally transformed metric $\tilde g_{ab} = \Phi^2 g_{ab}$ takes the form  
\begin{align*} 
\tilde g^\Mink_{ab} ={}& d\Tcal^2 - d\Rcal^2 - \sin^2 \Rcal d\Omega^2_{S^2} \\
={}& d\Tcal^2 - d\Omega^2_{S^3}
\end{align*} 
which we recognize as the metric on the cylinder $\Re \times S^3$.
This spacetime is known as the Einstein cylinder, and can be viewed as a static solution of the Einstein equations with dust matter and positive cosmological constant \cite{1917SPAW.......142E}.  
 \mymnote{add sketch of compactified Minkowski in Einstein cylinder} 

\subsection{Lorentzian geometry and causality} 
We now consider a smooth Lorentzian 4-manifold $(\Mcal, \met_{ab})$ with signature $+---$.  
Each tangent space in a 4-dimensional spacetime is isometric to Minkowski space $\Mink$, and we can carry intuitive notions of causality over from $\Mink$ to $\Mcal$. We say that a smooth curve $\gamma^a(\lambda)$ is causal if the velocity vector $\dot \gamma^a = d\gamma^a/d\lambda$ is causal. Two points in $\Mcal$ are causally related if they can be connected by a piecewise smooth causal curve. The concept of causal curves is most naturally defined for $C^0$ curves. A $C^0$ curve $\gamma^a$ is said to be causal if each pair of points on $\gamma^a$ are causally related. We may define timelike curve and timelike related points in the analogous manner. 

We now assume that $\Mcal$ is time oriented, i.e. that there is a globally defined time-like vector field on $\Mcal$. This allows us to distinguish between future and past directed causal curves, and to introduce a notion of the causal and timelike future of a spacetime point. The corresponding past notions are defined analogously. If $q$ is in the causal future of $p$, we write $p \preccurlyeq q$. This introduces a partial order on $\Mcal$. 
The causal future $J^+(p)$ of $p$ is defined as $J^+(p) = \{ q : p \preccurlyeq q \}$ while the timelike future $I^+(p)$ is defined in the analogous manner, with timelike replacing causal. 
A subset $\Sigma \subset \Mcal$ is achronal 
\begin{wrapfigure}[5]{l}{0.15\textwidth}
\centering
\vskip -.2in
\raisebox{-0.5\height}{\includegraphics{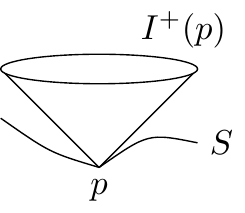}}
\end{wrapfigure} 
if there is no pair $p,q \in \Mcal$ such that $q \in I^+(p)$, i.e. $\Sigma$ does not intersect its timelike future or past. 
The domain of dependence $D(S)$ of $S \subset \Mcal$
\begin{wrapfigure}[4]{r}{0.25\textwidth}
\centering
\vskip -.3in
\raisebox{-0.5\height}{\includegraphics{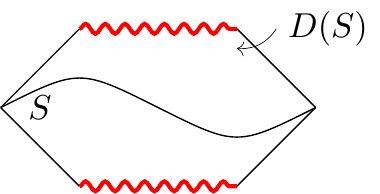}}
\end{wrapfigure}
is the set of points $p$ such that any inextendible causal curve starting at $p$ must intersect $S$.  

\begin{definition}
A spacetime $\Mcal$ is globally hyperbolic if there is a closed, achronal $\Sigma \subset \Mcal$ such that $\Mcal = D(\Sigma)$. In this case, $\Sigma$ is called a Cauchy surface. 
\end{definition} 

Due to results of Bernal and Sanchez \cite{2006LMaPh..77..183B}, global hyperbolicity is characterized by the existence of a smooth, Cauchy time function $\tau: \Mcal \to \Re$. A function $\tau$ on $\Mcal$ is a timefunction if $\nabla^a \tau$ is timelike everywhere, and it is Cauchy if the level sets $\Sigma_t = \tau^{-1}(t)$ are Cauchy surfaces. If $\tau$ is smooth, its levelsets are then smooth and spacelike. It follows that a globally hyperbolic spacetime $\Mcal$ is globally foliated by Cauchy surfaces, and in particular is diffeomorpic to a product $\Sigma \times \Re$. In the following, unless otherwise stated, we shall consider only globally hyperbolic spacetimes.

If a globally hyperbolic spacetime $\Mcal$ is a subset of a spacetime $\Mcal'$, then the boundary $\partial \Mcal$ in $\Mcal'$ is called the Cauchy horizon.  
\begin{example} \label{example:cone}
Let $O$ be the origin in Minkowski space, and let $\Mcal = I^+(O) = \{ t > r\}$ be its timelike future. Then $\Mcal$ is globally hyperbolic with Cauchy time function $\tau = \sqrt{t^2 - r^2}$. Further, $\Mcal$ is a subset of Minkowski space $\Mink$, which is a globally hyperbolic space with Cauchy time function $t$. Minkowski space is geodesically complete and hence inextendible. The boundary $\{t=r\}$ is the Cauchy horizon $\partial \Mcal$ of $\Mcal$. Past inextendible causal geodesics (i.e. past causal rays) in $\Mcal$ end on $\partial\Mcal$. In particular, $\Mcal$ is incomplete. However, $\Mcal$ is extendible, as a smooth flat spacetime, with many inequivalent extensions. 
\end{example} 
We remark that for a globally hyperbolic spacetime, which is extendible, the extension is in general non-unique. In the particular case considered in example \ref{example:cone}, $\Mink$ is an extension of $\Mcal$, which is also happens to be maximal and globally hyperbolic. In the vacuum case, there is a unique maximal globally hyperbolic extension, cf. section \ref{sec:cauchyproblem} below. However, a maximal extension is in general non-unique, and may fail to be globally hyperbolic. 

\subsection{Conventions and notation} 
We shall use mostly abstract indices, but will sometimes work with coordinate indices, and unless confusion arises will not be too specific about this. We raise and lower indices with $\met_{ab}$, for example $\xi^a = \met^{ab} \xi_b$, with $\met^{ab} \met_{bc} = \delta^a{}_c$, where $\delta^a{}_c$ is the Kronecker delta, i.e. the tensor with the property that $\delta^a{}_c \xi^c = \xi^a$ for any $\xi^a$. 

Let $\eps_{a\cdots d}$ be the Levi-Civita symbol, i.e. the skew symmetric expression which in any coordinate system has the property that $\eps_{1\cdots n} = 1$. The volume form of $\met_{ab}$ is $(\mu_\met)_{abcd} = \sqrt{|\met|} \eps_{abcd}$. 
Given $(\Mcal, \met_{ab})$ we have the canonically defined Levi-Civita covariant derivative $\nabla_a$. For a vector $\GenVec^a$, this is of the form 
$$
\nabla_a \GenVec^b = \partial_a \GenVec^b + \Gamma_{ac}^b \GenVec^c
$$
where $\Gamma_{ac}^b = \half \met^{bd} (\partial_a \met_{dc} + \partial_c \met_{db} - \partial_d g_{ac})$ is the Christoffel symbol. In order to fix the conventions used here, we recall that the Riemann curvature tensor is defined by 
$$
(\nabla_a \nabla_b - \nabla_b \nabla_a ) \xi_c = R_{abc}{}^d \xi_d
$$
The Riemann tensor $R_{abcd}$ is skew symmetric in the pairs of indices $ab, cd$, $R_{abcd} = R_{[ab]cd} = R_{ab[cd]}$, is pairwise symmetric $R_{abcd} = R_{cdab}$, and satisfies the first Bianchi identity $R_{[abc]d} = 0$. Here square brackets $[\cdots]$ denote antisymmetrization. We shall similarly use round brackets $(\cdots)$ to denote symmetrization.  
Further, we have $\nabla_{[a} R_{bc]de} = 0$, the second Bianchi identity. A contraction gives $\nabla^a R_{abcd} = 0$. 
The Ricci tensor is $R_{ab} = R^c{}_{acb}$ and the scalar curvature $R = R^a{}_a$. We further let $S_{ab} = R_{ab} - \frac{1}{4} R \met_{ab}$ denote the tracefree part of the Ricci tensor. The Riemann tensor can be decomposed as follows, 
\begin{align}
R_{abcd}={}&- \tfrac{1}{12} g_{ad} g_{bc} R
 + \tfrac{1}{12} g_{ac} g_{bd} R
 + \tfrac{1}{2} g_{bd} S_{ac}
 -  \tfrac{1}{2} g_{bc} S_{ad}
 -  \tfrac{1}{2} g_{ad} S_{bc}
 + \tfrac{1}{2} g_{ac} S_{bd}
 + C_{abcd}.
\end{align}
This defines the Weyl tensor $C_{abcd}$ which is a tensor with the symmetries of the Riemann tensor, and vanishing traces, $C^c{}_{acb} = 0$. Recall that $(\Mcal, \met_{ab})$ is locally conformally flat if and only if $C_{abcd} = 0$. It follows from the contracted second Bianchi identity that the Einstein tensor $G_{ab} = R_{ab} - \half R g_{ab}$ is conserved, $\nabla^a G_{ab} = 0$. 

\subsection{Einstein equation}
The Einstein equation in geometrized units with $G=c=1$, where $G, c$ denote Newtons constant and the speed of light, respectively, cf. \cite[Appendix F]{MR757180}, is the system 
\begin{equation} \label{eq:EFE} 
G_{ab} = 8\pi T_{ab}
\end{equation} 
This equation relates geometry, expressed in the Einstein tensor $G_{ab}$ on the left hand side, to matter, expressed via the energy momentum tensor $T_{ab}$ on the right hand side. For example, for a self-gravitating Maxwell field $F_{ab}$, $F_{ab} = F_{[ab]}$, we have 
$$
T_{ab} = \frac{1}{4\pi} (F_{ac}F_{bc} - \frac{1}{4} F_{cd} F^{cd} g_{ab} ).
$$
The source-free Maxwell field equations 
$$
\nabla^a F_{ab} = 0, \quad \nabla_{[a} F_{bc]} = 0
$$
imply that $T_{ab}$ is conserved, $\nabla^a T_{ab} = 0$. The contracted second Bianchi identity implies that $\nabla^a G_{ab} = 0$, and hence the conservation property of $T_{ab}$ is implied by the coupling of the Maxwell field to gravity.  These facts can be seen to follow from the variational formulation of Einstein gravity, given by the action 
$$
\Action = \int_{\Mcal} \frac{R}{16\pi} d\mu_\met - \int_{\Mcal} L_{\text{\rm matter}} d\mu_\met 
$$
where $L_{\text{\rm matter}}$ is the Lagrangian describing the matter content in the spacetime. In the case of Maxwell theory, this is given by 
$$
L_{\text{\rm Maxwell}} = \frac{1}{4\pi} F_{cd} F^{cd}
$$
Recall that in order to derive the Maxwell field equation, as an Euler-Lagrange equation, from this action, it is necessary to introduce a vector potential for $F_{ab}$, by setting $F_{ab} = 2 \nabla_{[a} A_{b]}$, and carrying out the variation with respect to $A_a$. It is a general fact that for generally covariant (i.e. diffeomorphism invariant) Lagrangian field theories which depend on the spacetime location only via the metric and its derivatives, the symmetric energy momentum tensor 
$$
T_{ab} = \frac{1}{\sqrt{g}} \frac{\partial L_{\text{\rm matter}}}{\partial g^{ab}} 
$$
is conserved when evaluated on solutions of the Euler-Lagrange equations.

As a further example of a matter field, we consider the scalar field, with action 
$$
L_{\text{\rm scalar}} = \half \nabla^c \psi \nabla_c \psi 
$$
where $\psi$ is a function on $\Mcal$. 
The corresponding energy-momentum tensor is  
$$
T_{ab} = \nabla_a \psi \nabla_b \psi - \half \nabla^c \psi \nabla_c \psi g_{ab}
$$
and the Euler-Lagrange equation is the free scalar wave equation 
\begin{equation} \label{eq:freewave} 
\nabla^a \nabla_a \psi = 0
\end{equation} 
As \eqref{eq:freewave} is another example of a field equation derived from a covariant action which depends on the spacetime location only via the metric $g_{ab}$ or its derivatives, the symmetric energy-momentum tensor is conserved for solutions of the field equation.

In both of the just mentioned cases, the energy momentum tensor satisfies the dominant energy condition, $T_{ab} \GenVec^a \zeta^b \geq 0$ for future directed causal vectors $\GenVec^a$, $\zeta^a$. This implies the null energy condition 
\begin{equation}\label{eq:NEC} 
R_{ab} \GenVec^a\GenVec^b \geq 0 \quad \text{ if $\GenVec_a\GenVec^a = 0$}. 
\end{equation}
These energy conditions hold for  most classical matter. 

There are many interesting matter systems which are worthy of consideration, such as fluids, elasticity, kinetic matter models including Vlasov, as well as fundamental fields such as Yang-Mills, to name just a few. We consider only spacetimes which satisfy the null energy condition, and for the most part we shall in these notes be concerned with the vacuum Einstein equations, 
\begin{equation} \label{eq:EVE} 
R_{ab} = 0
\end{equation} 

\subsection{The Cauchy problem}\label{sec:cauchyproblem}
Given a space-like hypersurface\footnote{If there is no room for confusion, we shall denote abstract indices for objects on $\Sigma$ by $a,b,c,\dots$.} $\Sigma$ in $\Mcal$ with timelike normal $T^a$, induced metric $h_{ab}$ and second fundamental form $k_{ab}$, defined by  $k_{ab}X^a Y^b = \nabla_a T_b X^a Y^b$ for $X^a, Y^b$ tangent to $\Sigma$, the Gauss, and Gauss-Codazzi equations imply the constraint equations 
\begin{subequations}\label{eq:constraints} 
\begin{align} 
R[h] + (k_{ab} h^{ab})^2 - k_{ab} k^{ab} =&{} 16\pi T_{ab} T^a T^b \\
\nabla[h]_a (k_{bc} h^{bc}) - \nabla[h]^b k_{ab} =&{} T_{ab} T^b
\end{align} 
\end{subequations}
A 3-manifold $\Sigma$ together with tensor fields $h_{ab}, k_{ab}$ on $\Sigma$ solving the constraint equations is called a Cauchy data set. 
The constraint equations for general relativity are analogues of the constraint equations in Maxwell and Yang-Mills theory, in that they lead to Hamiltonians which generate gauge transformations. 

Consider a 3+1 split of $\Mcal$, i.e. a 1-parameter family of Cauchy surfaces $\Sigma_t$, with a coordinate system $(x^a) = (t, x^i)$, and let 
$$
(\partial_t)^a  = N T^a + X^a
$$
be the split of $(\partial_t)^a$ into a normal and tangential piece. The fields $(N, X^a)$ are called lapse and shift. The definition of the second fundamental form implies the equation  
$$
\Lie_{\partial_t} h_{ab} = -2N k_{ab} + \Lie_X h_{ab}
$$
In the vacuum case, the Hamiltonian for gravity can be written in the form 
$$
\int N \mathcal H + X^a \mathcal J_a + \text{ boundary terms} 
$$
where $\mathcal H$ and $\mathcal J$ are the densitized left hand sides of \eqref{eq:constraints}. 
If we consider only compactly supported perturbations in deriving the Hamiltonian evolution equation, the boundary terms mentioned above can be ignored. However, for $(N, X^a)$ not tending to zero at infinity, and considering perturbations compatible with asymptotic flatness, the boundary term becomes significant, cf. section \ref{sec:asympflat}. 

The resulting Hamiltonian evolution equations, written in terms of $h_{ab}$ and its canonical conjugate $\pi^{ab} = \sqrt{h}(k^{ab} - (h^{cd} k_{cd} h^{ab}))$ are usually called the ADM evolution equations. 

Let $\Sigma \subset \Mcal$ be a Cauchy surface. Given functions $\phi_0, \phi_1$ on $\Sigma$ and $F$ on $\Mcal$, the Cauchy problem is the problem of finding solutions to the wave equation 
$$
\nabla^a \nabla_a \psi = F, \quad \psi \big{|}_{\Sigma} = \phi_0, \quad \Lie_{\partial_t} \psi \big{|}_{\Sigma} = \phi_1
$$
Assuming suitable regularity conditions, the solution is unique and stable with respect to initial data. This fact extends to a wide class of non-linear hyperbolic PDE's including quasi-linear wave equations, i.e. equations of the form 
$$
A^{ab}[\psi] \partial_a \partial_b \psi + B[\psi, \partial \psi] = 0
$$
with $A^{ab}$ a Lorentzian metric depending on the field $\psi$. 

Given a vacuum Cauchy data set, $(\Sigma, h_{ab}, k_{ab})$, a solution of the Cauchy problem for the Einstein vacuum equations is a spacetime metric $\met_{ab}$ with $R_{ab} = 0$, such that $(h_{ab}, k_{ab})$ coincides with the metric and second fundamental form induced on $\Sigma$ from $\met_{ab}$. Such a solution is called a vacuum extension of $(\Sigma, h_{ab}, k_{ab})$. 

Due to the fact that $R_{ab}$ is covariant, the symbol of $R_{ab}$ is degenerate. In order to get a well-posed Cauchy problem, it is necessary to either impose gauge conditions, or introduce new variables. A standard choice of gauge condition is the harmonic coordinate condition. Let $\hme_{ab}$ be a given metric on $\Mcal$. The identity map $\id: \Mcal \to \Mcal$ is harmonic if and only if the vector field
\newcommand{\hGamma}{\widehat{\Gamma}} 
\newcommand{\hmet}{\widehat{g}}
\newcommand{\hnabla}{\widehat{\nabla}}
$$
V^a = g^{bc} (\Gamma_{bc}^a - \hGamma_{bc}^a)
$$
vanishes. Here $\Gamma^a_{bc}$, $\hGamma^a_{bc}$ are the Christoffel symbols of the metrics $\met_{ab}, \hmet_{ab}$. Then $V^a$ is the tension field of the identity map $\id: (\Mcal, \met_{ab}) \to (\Mcal, \hmet_{ab})$. This is harmonic if and only if 
\begin{equation}\label{eq:Va=0}
V^a = 0 .
\end{equation}
Since harmonic maps with a Lorentzian domain are often called wave maps, the gauge condition \eqref{eq:Va=0} is sometimes called wave map gauge.  A particular case of this construction, which can be carried out if $\Mcal$ admits a global coordinate system $(x^a)$, is given by letting $\hmet_{ab}$ be the Minkowski metric defined with respect to $(x^a)$. Then $\hGamma_{bc}^a = 0$ and \eqref{eq:Va=0} is simply 
\begin{equation}\label{eq:wavecoord} 
\nabla^b \nabla_b x^a = 0 ,
\end{equation} 
which is usually called the wave coordinate gauge condition. 

Going back to the general case, let $\hnabla$ be the Levi-Civita covariant derivative defined with respect to $\hmet_{ab}$. We have the identity 
\begin{equation}\label{eq:Ric-reduct}
R_{ab} = - \half \frac{1}{\sqrt{g}}\hnabla_a \sqrt{g} \met^{ab} \hnabla_b \met_{ab} + S_{ab}[\met, \hnabla \met] + \nabla_{(a} V_{b)} 
\end{equation} 
where $S_{ab}$ is an expression which is quadratic in first derivatives $\hnabla_a \met_{cd}$. 
Setting $V^a = 0$ in \eqref{eq:Ric-reduct} yields $R^{\text{\rm harm}}_{ab}$, and  \eqref{eq:EVE} becomes a quasilinear wave equation 
\begin{equation}\label{eq:EVE-red}
R^{\text{\rm harm}}_{ab} = 0 .
\end{equation}
By standard results,  the equation \eqref{eq:EVE-red} has a locally well-posed Cauchy problem in Sobolev spaces $H^s$ for $s > 5/2$. Using more sophisticated techniques, well-posedness can shown to hold for any $s > 2$ \cite{MR2180400}. Recently a local existence has been proved under the assumption of curvature bounded in $L^2$ \cite{MR3402797}. Given a Cauchy data set $(\Sigma, h_{ab}, k_{ab})$, together with initial values for lapse and shift $N, X^a$ on $\Sigma$, it is possible to find $\Lie_t N, \Lie_t X^a$ on $\Sigma$ such that the $V^a$ are zero on $\Sigma$. A calculation now shows that due to the constraint equations, $\Lie_{\partial_t} V^a$ is zero on $\Sigma$. Given a solution to the reduced Einstein vacuum equation \eqref{eq:EVE-red}, one finds that $V^a$ solves a wave equation. This follows from $\nabla^a G_{ab} = 0$, due to the Bianchi identity. Hence, due to the fact that the Cauchy data for $V^a$ is trivial, it holds that $V^a = 0$ on the domain of the solution. Thus, in fact the solution to \eqref{eq:EVE-red} is a solution to the full vacuum Einstein equation \eqref{eq:EVE}. This proves local well-posedness for the Cauchy problem for the Einstein vacuum equation. This fact was first proved by Yvonne Choquet-Bruhat \cite{1952AcM....88..141F}, see \cite{2015CQGra..32l4003R} for background and history. 

Global uniqueness for the Einstein vacuum equtions was proved by Choquet-Bruhat and Geroch \cite{1969CMaPh..14..329C}. 
The proof relies on the local existence theorem sketched above, patching together local solutions. A partial order is defined on the collection of vacuum extensions, making use of the notion of common domain. The common domain $U$ of two extensions $\Mcal$, $\Mcal'$ is the maximal subset in $\Mcal$ which is isometric to a subset in $\Mcal'$. We can then define a partial order by saying that $\Mcal \leq \Mcal'$ if the maximal common domain is $\Mcal$. Given a partially ordered set, a maximal element exists by Zorn's lemma. This is proven to be unique by an application of the local well-posedness theorem for the Cauchy problem sketched above. For a contradiction, let $\Mcal$, $\Mcal'$ be two inequivalent extensions, and let $U$ be the maximal common domain. Due to the Haussdorff property of spacetimes, this leads to a contradiction. By finding a partial Cauchy surface which touches the boundary of $U$, see figure \ref{fig:extend}
and making use of local uniqueness, one finds a contradiction to the maximality of $U$. 
\begin{figure}[!h]
\begin{center}
\raisebox{-0.5\height}{\includegraphics{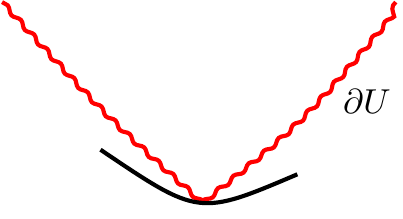}}
\caption{\ }
\label{fig:extend}

\end{center} 
\end{figure} 
It should be noted that here, uniqueness holds up to isometry, in keeping with the general covariance of the Einstein vacuum equations. These facts extend to the Einstein equations coupled to hyperbolic matter equations. See \cite{MR3447847} for a construction of the maximal globally hyperbolic extension which does not rely on Zorn's lemma, see also \cite{2013JMP....54k3511W}.
The global uniqueness result can be generalized to Einstein-matter systems, provided the matter field equation is hyperbolic and that its solutions do not break down. General results on this topic are lacking, see however \cite{2011arXiv1109.0644P} and references therein. The minimal regularity needed for global uniqueness is a subtle issue, which has not been fully addressed. In particular, results on local well-posedness are known, see eg. \cite{2012arXiv1204.1772K} and references therein, which require less regularity than the best results on global uniqueness.  

\subsection{Remarks} \label{sec:remarks} 
We shall now make several remarks relating to the above discussion. 

\subsubsection{Bianchi identities as a hyperbolic system} 
The vacuum Einstein equation $R_{ab} = 0$ implies that the Weyl tensor $C_{abcd}$ satisfies the Bianchi identity $\nabla^a C_{abcd} = 0$. This is the massless spin-2 equation. In particular, this is a first order hyperbolic system for the Weyl tensor. 

The spin-2 equation (i.e. the equation $\nabla^a W_{abcd}$ for a Weyl test field $W_{abcd}$ (i.e. a tensor field with the symmetries and trace properties of the Weyl tensor) implies algebraic conditions relating the field and the curvature. In particular, in a sufficentily general background a Weyl test field must be proportional to the Weyl tensor $C_{abcd}$ of the spacetime. This holds in particular for spacetimes of Petrov type D (cf. section \ref{sec:algspec} below for the definition of Petrov type), see \cite[\S 2.3]{2015arXiv150402069A} and references therein. 

One may view the Bianchi identity for the Weyl tensor as the main gravitational field equation, and the vacuum Einstein equation as type of ``constraint'' equation, which allows one to relate the Weyl tensor to the Riemann curvature of the spacetime. The first order system for the Weyl tensor can be extended to a first order system including the first and second Cartan structure equations. A hyperbolic system can be extracted by introducing suitable gauge conditions, see section \ref{sec:GSF}. 

\subsubsection{Null condition} 
Consider the Cauchy problem for the semilinear wave equation on Minkowski space, 
$$
\nabla^a \nabla_a \psi = Q^{ab} \nabla_a \psi \nabla_b \psi 
$$
with data $\psi\big{|}_{t=0} = \epsilon \psi_0$, $\partial_t \psi \big{|}_{t=0} = \epsilon \psi_1$, where $\epsilon > 0$ and $\psi_1, \psi_2$ are suitably regular functions. Solutions exist globally for small data (i.e. for sufficiently small $\epsilon > 0$) if and only if $Q^{ab}$ satisfies the null condition, $Q^{ab} \xi_a \xi_b = 0$ for any null vector $\xi^a$. 

An example due to Fritz John shows that the equation $\nabla^a \nabla_a \psi = |\partial_t \psi|^2$ for which the null condition fails, can have blowup for small data, cf. \cite{MR2455195}. 

Similar results hold also for quasilinear equations, in particular for quasilinear wave equations satisfying a suitable null condition, one has stability of the trivial solution. For the vacuum Einstein equation in harmonic coordinates, we have 
$$
R_{ab}^{\text{\rm harm}} = -\half g^{cd} \partial_c \partial_d g_{ab} + S_{ab}(g, \partial g)
$$
where the lower order term $S_{ab}$ contains terms of the form 
$\partial_a g_{cd} \partial_b g_{ef}g^{ce} g^{df}$, and hence the null condition fails to hold for the Einstein vacuum equation in harmonic coordinates. For this reason the problem of stability of Minkowski space in Einstein gravity is subtle. The stability of Minkowski space was first proved by Christodoulou and Klainerman \cite{MR1316662}. Later a proof using harmonic coordinates was given by Lindblad and Rodnianski \cite{2005CMaPh.256...43L}. This exploits the fact that the equation $R_{ab}^{\text{\rm harm}} = 0$ satisfies a weak form of the null condition. Consider the system 
\begin{subequations}\label{eq:weaknull} 
\begin{align} 
\nabla^a \nabla_a \psi =&{} |\partial_t \phi|^2 \\ 
\nabla^a \nabla_a \phi =&{} Q^{ab} \nabla_a \phi \nabla_b \phi
\end{align} 
\end{subequations} 
on Minkowski space, 
where $Q^{ab}$ has null structure. For this system, the 
null condition fails to hold. However, $\phi$ satisfies an equation with null structure and therefore has good dispersion. The equation for $\psi$ has a source defined in terms of $\phi$ but no bad self-interaction. One finds therefore that the solution to \eqref{eq:weaknull} exists globally for small data, but with slightly slower falloff than a solution of an equation satisfying the null condition. 

\subsubsection{Gauge source functions} \label{sec:GSF}
As has been pointed out by Helmut Friedrich, see \cite{1996CQGra..13.1451F} for discussion, one may introduce gauge source functions $V^a = F^a(x^b, \met^{cd})$ without affecting the reduction procedure. The gauge source functions can be designed to yield damping effects, or to control the evolution of the lapse and shift. This has frequently been used in numerical relativity. A related strategy is to add terms involving factors of the constraints $C^a$. Such terms vanish for a solution of the field equations, but may provide improved behavior for the reduced system. 

It is often convenient to introduce a suitably normalized tetrad $e_\ua{}^a$. Important examples are orthonormal tetrads, satisfying $e_\ua{}^a e_{\ub}{}^b \met_{ab} = \diag(+1,-1,-1,-1)$, and the null tetrads $(\NPl^a, \NPn^a, \NPm^a, \NPmbar^a)$ with $\NPl^a \NPn_a = 1$, $\NPm^a \NPmbar_a = -1$, all other inner products being zero. Such tetrads appear naturally when working with spinors, see section \ref{sec:prel}. 

The field equations can be written as a system of equations for tetrad components, connection coefficients and curvature. Introducing tetrad gauge source functions $V_{\ua \ub} = (\nabla^c \nabla^c e_\ua^a )e_{\ub}^b \met_{ab}$ it is possible to extract a first order symmetric hyperbolic system  with $V^a, V_{\ua\ub}$ taking values involving tetrad, connection coefficients and  curvature. This opens up a lot of interesting possibilities, but has not been widely used. The phantom gauge introduced by Chandrasekhar \cite[p. 240]{chandrasekhar:MR1210321} was shown in \cite{aksteiner:andersson:2011CQGra..28f5001A} to correspond to a tetrad gauge condition of the above type, and is therefore compatible with a well-posed Cauchy problem. 

Let $(\Mcal, \met_{ab})$ be a vacuum spacetime. Let $g(s)_{ab}$ be a one-parameter familiy of vacuum metrics and let 
$$
h_{ab} = \frac{d}{ds} \met(s)_{ab} \bigg{|}_{s=0} 
$$
Then $h_{ab}$ solves the linearized Einstein equation $DR_{ab}= 0$, where $DR_{ab}$ is the Frechet derivative of the Ricci tensor at $\met_{ab}$ in the direction $h_{ab}$. A calculation, cf.  \cite{aksteiner:andersson:2011CQGra..28f5001A}, shows that if we impose the linearized wave map gauge condition, then $h_{ab}$ satisfies the Lichnerowicz wave equation 
$$
\nabla^c \nabla_c h_{ab} + 2 R_{acbd} h^{cd} = 0
$$

\subsubsection{Asymptotically flat data} \label{sec:asympflat} 
The Kerr black hole represents an isolated system, and the appropriate data for the black hole stability problem should therefore be asymptotically flat. To make this precise we suppose there is a compact set $K$ in $\Mcal$ and a map $\Phi: \Mcal \setminus K \to \Re^3\setminus B(R,0)$, where $B(R,0)$ is a Euclidean ball. This defines a Cartesian coordinate system on the end $\Mcal \setminus K$ so that $h_{ab} - \delta_{ab}$ falls off to zero at infinity, at a suitable rate. Here $\delta_{ab}$ is the Euclidean metric in the Cartesian coordinate system constructed above. Similarly, we require that $k_{ab}$ falls off to zero. 

Let $x^a$ be the chosen Euclidean coordinate system and let $r$ be the Euclidean radius $r = (\delta_{ab} x^a x^b)^{1/2}$. Following Regge and Teitelboim \cite{1974AnPhy..88..286R}, see also \cite{beig:omurchadha:1987AnPhy.174..463B},
we assume that $\met_{ab} = \delta_{ab} + h_{ab}$ with 
\begin{align*} 
h_{ab} =&{} O(1/r), \quad \partial_a h_{bc} = O(1/r^2), \\
k_{ab} =&{} O(1/r^2) .
\end{align*}
Further, we impose the parity conditions 
\begin{equation}\label{eq:parity}
h_{ab}(x) = h_{ab}(-x), \quad  k_{ab} (x)= - k_{ab}(-x) .
\end{equation}
These falloff and parity conditions guarantee that the ADM 4-momentum and angular momentum are well defined. It was shown in \cite{2009CQGra..26a5012H} that data satisfying the parity condition conditions \eqref{eq:parity} are  dense among data which satisfy an asymptotic flatness condition in terms of weighted Sobolev spaces. 

Let $\xi^a$ be an element of the the Poincare Lie algebra and assume that $NT^a + X^a$ tends in a suitable sense to $\xi^a$ at infinity. 
Then the action for Einstein gravity can be written in the form 
$$
\int_\Mcal R d\mu_\met =  P_a \xi^a + \int \pi^{ij} \dot h_{ij}  - 
\int N \mathcal H + X^i \mathcal J_i
$$
Here we may view $P_a$ as a map to the dual of the Poincare Lie algebra, i.e. a momentum map. Evaluating $P_a \xi^a$ on a particular element of the Poincare Lie algebra gives the corresponding momentum. These can also be viewed as charges at infinity. 
We have 
\begin{subequations}\label{eq:ADMmom}
\begin{align} 
P^0  =&{} \frac{1}{16\pi} \lim_{r \to \infty} \int_{S_r}  (\partial_i g_{ji} - \partial_j g_{ii}) d\sigma^i  \\
P^i =&{}  \frac{1}{8\pi} \lim_{r\to \infty} \int_{S_r} \pi_{ij}  d\sigma^j
\end{align}
\end{subequations}  
where $d\sigma^i$ denotes the hypersurface area element of a family of spheres (which can be taken to be coordinate spheres) $S_r$ foliating a neighborhood of infinity.  See \cite{2011JMP....52e2504M} and references therein for a recent discussion of the conditions under which these expressions are well-defined. 
  
The energy and linear momentum $(P^0, P^i)$ provide the components of a 4-vector $P^a$, the ADM 4-momentum. 
Assuming the dominant energy condition, then under the above asymptotic conditions, $P^a$ is future causal, and timelike unless the maximal development $(\Mcal, \met_{ab})$ is isometric to Minkowski space.  
Further, $P^a$ transforms as a Minkowski 4-vector, and the ADM mass is given by $M = \sqrt{P^a P_a}$. The boost theorem \cite{1981CMaPh..80..271C} implies, given an asymptotically flat Cauchy data set, that one may find in a boosted slice $\Sigma'$ in its development such that the data is in the rest frame, i.e. $P^a  = M(\partial_t)^a$.

Since the constraint quantities $\mathcal H, \mathcal J_i$ vanish for solutions of the Einstein equations, the gravitational Hamiltonian takes the value $P_a \xi^a$, and hence the ADM mass and momenta defined by  \eqref{eq:ADMmom} are conserved for an evolution with lapse and shift $(N, X^i) \to (1,0)$ at infinity.  
If we consider the analog of the above definitions for a hyperboloidal slice which meets $\Scri$, then the ADM mass and momentum are replaced by the Bondi mass and momentum. An example of a hyperbolidal slice in Minkowski space is given by a level set of the time function $\Tcal$, cf. \eqref{eq:TcalRcal}, in the compactification of Minkowski space. For the Bondi 4-momentum, one has the important feature that gravitational energy is radiated through $\Scri$, which means that it is not conserved.  See \cite{MR1903925} and references therein for further details. 

\subsubsection{Killing initial data} \label{sec:KID}
A Killing initial data set, is a Cauchy data set $(\Sigma, h_{ab}, k_{ab})$ such that the development $(\Mcal, \met_{ab})$ is a spacetime with a Killing field $\GenVec^a$, i.e. 
$$
\Lie_\GenVec \met_{ab} = 2 \nabla_{(a} \GenVec_{b)} = 0
$$ 
Let now $\GenVec^a$ be a solution to the wave equation
$\nabla^a \nabla_a \GenVec_b = 0$, but not necessarily a Killing field. In a vacuum spacetime, we then have 
$$
\nabla^d \nabla_d ( \nabla_{(a} \GenVec_{b)} ) = 2 R^c{}_{(ab)}{}^d \nabla_{(c} \GenVec_{d)} .
$$
This implies that the tensor $\Lie_\GenVec \met_{ab}$ satisfies a wave equation, so if it has trivial Cauchy data on $\Sigma$, then $\GenVec^a$ is a Killing field in the domain of dependence of $\Sigma$. This allows us to  characterize Lie symmetries of a development $(\Mcal, \met_{ab})$ purely in terms of the Cauchy data. Another way to formulate this statement is that Lie symmetries propagate. This fact, which is closely related to the global uniqueness for the Cauchy problem, allows one to study symmetry restrictions of the Einstein equations. Much work has been done to study consistent subsystems of the Einstein equation, implied by imposing symmetries on the initial data. Examples include Bianchi, $T^2$, $U^1$. Note however, there are also the so-called surface symmetric spacetimes, which arise in a somewhat different manner. In addition, there are consistent subsystems which are not given by symmetry restrictions. Examples are the polarized Gowdy and half-polarized $T^2$. See \cite{MR2098914} and references therein for further details. 

The analog of the principle that symmetries propagate is also valid for spinors.  This leads to the notion of Killing spinor initial data, which is relevant for the problem of Kerr characterization, see \cite{backdahl:valiente-kroon:2010:MR2753388} for further details. 

\subsubsection{Komar integrals} \label{sec:Komar}
Assume that $\GenVec^a$ is a Killing vector field. Then we have $\nabla_a \GenVec_b = \nabla_{[a} \GenVec_{b]}$. A calculation shows 
$$
\nabla^a (\nabla_a \xi_b - \nabla_b \xi_a) = -2 R_{bc} \xi^c 
$$
Hence, in vacuum, 
$$
\int_Se_{abcd} \nabla^c \xi^d
$$ 
depends only on the homology class of the two-surface $S$. The analogous fact for the source free Maxwell equation, were we have $\nabla^a F_{ab} = 0$, $\nabla_{[a} F_{bc]} = 0$, is the conservation of the charge integrals 
$\int_S F_{ab}$, $\int_S \eps_{abcd} F^{cd}$, which again depend only on the homology class of $S$. These statements are immediate consequences of Stokes theorem. 

If we consider asymptotically flat spacetimes, we have in the stationary case, with $\xi^a = (\partial_t)^a$,   
$$
P^a \xi_a = - \frac{1}{8\pi} \int_{S} \eps_{abcd} \nabla^c \xi^d ,
$$ 
where on the left hand side we have the ADM 4-momentum evaluated at infinity. Similarly, in the axially symmetric case, with 
$\eta^a = (\partial_\phi)^a$, 
$$
J = - \frac{1}{16\pi} \int_{S} \eps_{abcd} \nabla^c \eta^d 
$$
These integrals again depend only on the homology class of $S$. 
See \cite[\S 6]{1994PhRvD..50..846I} for background to these facts. For a non-symmetric, but asymptotically flat spacetime, letting $S$ tend to infinity through a sequence of suitably round spheres   
yields the linkage integrals, which again reproduce the ADM momenta \cite{1965PhRvL..15..601W}.

\section{Black holes} \label{sec:blackholes}

\subsection{The Schwarzschild solution} \label{sec:schwarzschild}
Before introducing the Kerr solution, we will discuss the spherically symmetric, static Schwarzschild black hole spacetime. This exhibits some of the features of the Kerr solution and has the advantage that the algebraic form of the line element is much simpler. However, it must be noted that due to the fact that Schwarzschild is static, and spherically symmetric, the essential difficulties in analyzing field on the Kerr background stemming from the complicated trapping and superradiance are not seen in the Schwarzschild case. Therefore, one should be careful in generalizing notions from Schwarzschild to Kerr.  

In Schwarzschild coordinates $(t,r,\theta, \phi)$, the Schwarzschild metric takes the form 
\begin{equation}\label{eq:g_Schw}
g_{ab} dx^a dx^b =  f dt^2 - f^{-1} dr^2 - r^2 d\Omega^2_{S^2}
\end{equation}
with $f = 1-2M/r$. 
Here $d\Omega^2_{S^2} = d\theta^2 + \sin^2\theta d\phi^2$ is the line element on the unit 2-sphere. The coordinate $r$ is the area radius, defined by $4\pi r^2 = A(S(r,t))$, where $S(r,t)$ is the 2-sphere with constant $t,r$.  The line element given in equation \eqref{eq:g_Schw} is valid for $r > 0$, but has a coordinate singularity at $r=2M$, which is also the location of the event horizon. Historically, this fact caused some confusion, and was only fully cleared up in the 1950's due to the work of Kruskal and Szekeres, see eg. \cite[Chapter 31]{MTW} and references therein. The metric is in fact regular, and the line element given above is valid also for $0< r < 2M$. At $r=0$, there is a curvature singularity, where spacetime curvature diverges as $1/r^3$. The Schwarzschild metric is asymptotically flat and the parameter $M$ coincides with the ADM mass. 

We remark that by setting $f = 1- 2M/r + Q^2/r^2$, the Schwarzschild line element becomes that of Reissner-Nordstr\"om, a spherically symmetric solution to the Einstein-Maxwell equations, with field strength of the form $F^{tr} = Q/r^2$. Here $Q = \frac{1}{8\pi} \int_S F^{ab} d\sigma_{ab} $.

In order to get a better understanding of the Schwarzschild spacetime, it is instructive to consider its maximal extension. In order to do this, we first introduce the tortoise coordinate $\rs$, 
\begin{equation}\label{eq:rs(r)}
\rs = r+2M\log(\frac{r}{2M} - 1) .
\end{equation}
\begin{figure}[!h]
\raisebox{-0.5\height}{\includegraphics{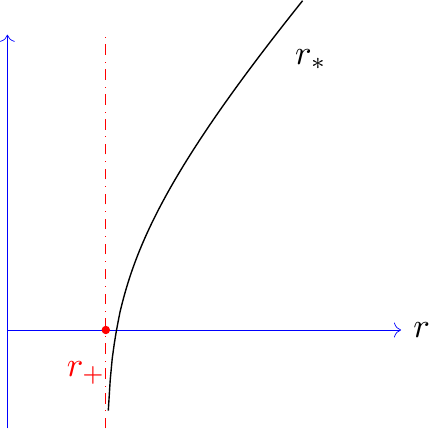}}
\end{figure} 

This solves 
$d\rs = f^{-1} dr$, $\rs(4M) = 4M$. As $r \searrow 2M$, $\rs$ diverges logarithmically to $-\infty$, and for large $r$, $\rs \sim r$. Inverting \eqref{eq:rs(r)} yields 
\newcommand{\LambertW}{\mathrm{W}}
\begin{equation}\label{eq:r(rs)}
r = 2M \LambertW \left (e^{\frac{\rs}{2M}-1}\right ) + 2M
\end{equation}
where $\LambertW$ is the principal branch of the Lambert W function\footnote{The Lambert W function, or product logarithm, is defined as the solution of $W(x)e^{W(x)} = x$ for $x > 0$. It satisfies $W'(x) = W(x)/((W(x)+1)x)$. The principal branch is analytic at $x=0$ and is real valued in the range $(-e^{-1},\infty)$ with values in $(-1,\infty)$. In particular, $W(0) = 0$. See \cite{LWref}.}. 
We can now introduce null coordinates 
$$
u = t - \rs, \quad v = t+\rs
$$
A null tetrad is given by 
\begin{align*} 
\NPl^a ={}& \sqrt{\frac{2}{f}} \partial_v^a , \\ 
\NPn^a ={}&  \sqrt{\frac{2}{f}}\partial_u^a , \\ 
\NPm^a ={}&  \frac{1}{\sqrt{2}r} (\partial_\theta^a + \frac{i}{\sin\theta} \partial_\phi^a)
\end{align*} 
On the exterior region in Schwarzschild, $(u,v)$ take values in the range $(-\infty, \infty)\times (-\infty,\infty)$. Let $\Ucal, \Vcal$ be a pair of coordinates taking values in $(-\pi/2, \pi/2)$, and related to $u,v$ by  
\begin{align*}  
u =&{} - 4M \log(-\tan \Ucal), \quad \Ucal \in (-\pi/2,0) \\ 
v =&{} 4M \log(\tan \Vcal), \quad \Vcal \in (0,\pi/2)
\end{align*} 

We have 
\begin{align*} 
t =&{} \half (v+u) \quad =  4M \log\left (-\tan \Vcal \tan \Ucal \right ) \\
\rs = &{} \half (v-u) \quad = 4M \log \left(- \frac{\tan \Vcal}{\tan\Ucal} \right ) 
\end{align*} 
In terms of $\Ucal, \Vcal$ we have 
\begin{equation}\label{eq:rUcalVcal}
r = 2M \LambertW(-e^{-1} \tan \Ucal \tan \Vcal) + 2M 
\end{equation} 
and $r > 0$ thus corresponds to $\tan\Ucal\tan \Vcal < 1$. 
The line element now takes the form 
\begin{equation}\label{eq:Krusk}
g_{ab} dx^a dx^b = \frac{d\Ucal d\Vcal}{\cos^2\Ucal \cos^2 \Vcal} \frac{32M^3}{r} e^{-\frac{r}{2M}} - r^2 d\Omega^2_{S^2}
\end{equation} 
The form \eqref{eq:Krusk} of the Schwarzschild line element is non-degenerate in the range 
\begin{equation} \label{eq:UVregion} 
(\Ucal, \Vcal) \in (-\pi/2,\pi/2) \times (-\pi/2, \pi/2) \cap \{-\pi/2 < \Ucal + \Vcal < \pi/2 \}.
\end{equation} 
In particular, the location $r=2M$ of the coordinate singularity in the line element \eqref{eq:g_Schw} corresponds to $\Ucal\Vcal = 0$. The line element \eqref{eq:Krusk} has a coordinate singularity, which is also a curvature singularity, at $r=0$ (corresponding to $\tan\Ucal \tan\Vcal = 1$),  and at $\Ucal = \pm \pi/2$, $\Vcal = \pm \pi/2$ (corresponding to $u,v$ taking unbounded values).  
\begin{figure}[!h]
\centering
\raisebox{-0.5\height}{\includegraphics{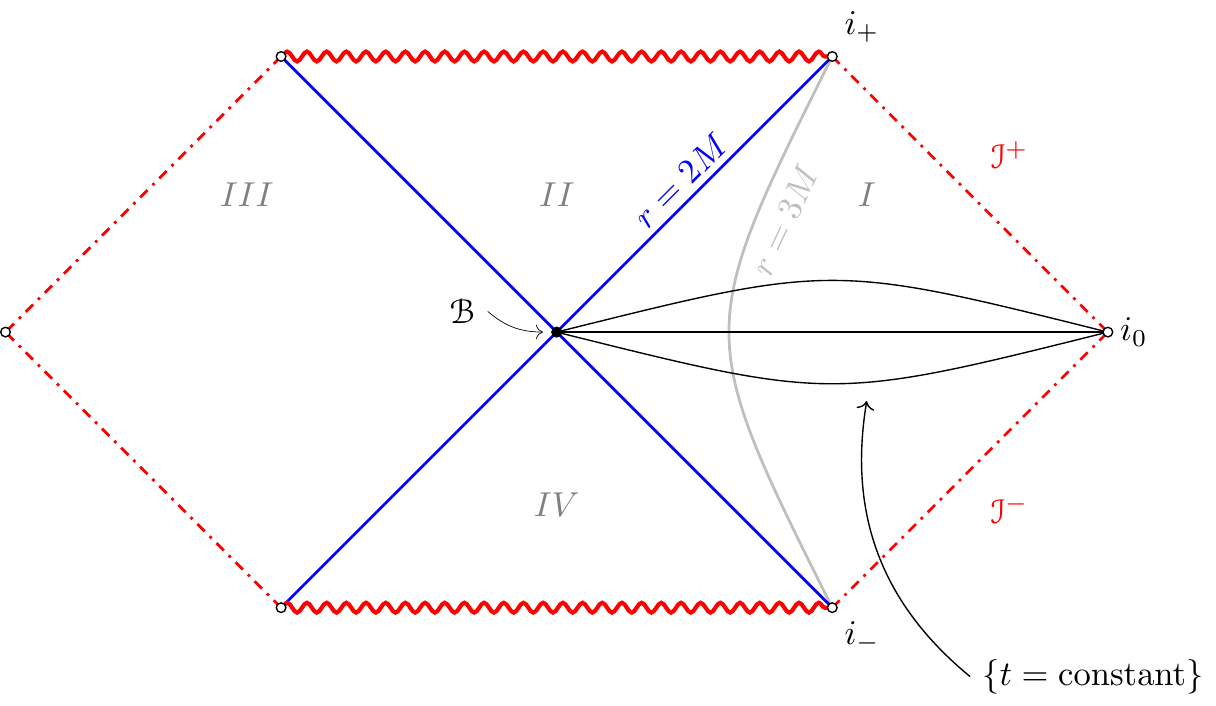}}
\caption{\ }
\label{fig:krusk}

\end{figure} 
Figure \ref{fig:krusk} shows the region given in \eqref{eq:UVregion}, with lines of constant  $t,r$ indicated. Using the causal diagram for the extended Schwarzschild solution, one can easily find the null infinities $\Scri^{\pm}$, spatial infinity $i_0$, timelike infinities $i_{\pm}$, the horizons $\Horizon^{\pm}$ at $r=2M$, which are indicated. Region $I$ is the domain of outer communication, i.e. $I^-(\Scri^+) \cap I^+(\Scri^-)$, while region $II$ is the future trapped (or black hole) region, $\Mcal^{\text{Schw}} \setminus I^-(\Scri^+)$. 

The level sets of $t$ hit the bifurcation sphere $\mathcal{B}$ located at $\Ucal = \Vcal = 0$, where $\partial_t = 0$. In particular, we see that the Schwarzschild coordinates are degenerate, since the level sets of $t$ do not foliate the extended Schwarzschild spacetime. On the other hand, a global Cauchy foliation of the maximally extended Schwarzschild spacetime is given by the level sets of the Kruskal time function $\Tcal = \half (\Vcal + \Ucal)$. 

\begin{wrapfigure}{l}{0.15\textwidth}
\begin{center}
\raisebox{-0.5\height}{\includegraphics{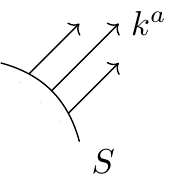}}
\end{center} 
\end{wrapfigure} 
Given a null vector $k^a$, perpendicular to a spacelike 2-surface $S$, 
we may define the null expansion with respect to $k^a$ by 
\begin{equation}\label{eq:nullexp-def}
\Theta_{k^a} = \half \delta_{k^a} \log(A(S))
\end{equation}
where $\delta_{k^a}$ denotes the variation in the direction $k^a$. 

Then $\Theta_{k^a}$ is the expansion of the area element of $S$, along the null geodesic with velocity $k^a$. If we let $k^a = (\partial_\Vcal)^a$, we have 
$$
\Theta_{k^a} \left\{ \begin{array}{ll} > 0 & \text{in region I}, \\
                                                 = 0  & \text{on $\Horizon_+$}, \\
                                                  < 0 & \text{in region II}
                                \end{array} \right.
$$
Thus, the area of a bundle of null rays in region I is expanding with respect to a future, outgoing null vector like $\partial_\Vcal$, while in region II,  they are contracting. Actually, in region II, we find that the expansion with respect to any future null vector is negative.

Although null vectors are conventionally drawn at 45\degree angles, due to the fact that each point in the causal diagram represents a sphere, this does not give a complete description. From the causal diagram it is clear that from each point in the DOC there are null curves which escape through $\Scri^{\pm}$ or fall in through the horizons $\Horizon^{\pm}$. By continuity, it is clear that there must be null curves which neither escape through $\Scri$ nor fall in through the horizons $\Horizon$. We refer to these as orbiting or \emph{trapped} null geodesics. In the Schwarzschild spacetime, the trapped null geodesics are located at $r=3M$, see figure \ref{fig:krusk}. The presence of trapped null geodesics is a robust feature of black hole spacetimes.   

Although the region covered by the null coordinates $\Ucal, \Vcal$ is compact, the line element \eqref{eq:Krusk} is of course isometric to the form given in \eqref{eq:g_Schw}. A conformal factor $\Phi=\cos\Ucal\cos\Vcal$  may now be introduced, which brings $\Scri^{\pm}$ to a finite distance. Letting $\tilde g_{ab} = \Phi^2 g_{ab}$, and adding these boundary pieces to $(\Mcal, \tilde g_{ab})$ provides a \emph{conformal compactification}\footnote{There are subtleties concerning the regularity of the conformal boundary of Schwarzschild, and the naive choice of conformal factor mentioned above does not lead to an analytic compactification. See \cite{2014CQGra..31a5007H} for recent developments. } of the maximally extended Schwarzschild spacetime.  

\subsubsection{Gravitational redshift} A robust fact about black hole spacetimes is that radiation emanating from near the event horizon is strongly red shifted before reaching infinity. In the limit as the source approaches the horizon, the redshift tends to infinity. 
Let $\dot \gamma^a$ be a null geodesic. The observed frequency of a plane fronted wave with wave plane perpendicular to $\dot \gamma^a$ is 
$$
\omega = \frac{\xi^a \dot \gamma_a}{(\xi^a \xi_a)^{1/2}}
$$
where $\xi^a \dot \gamma_a$ is conserved along the null geodesic. In Schwarzschild, $\xi^a \xi_a = f = (1-2M/r)$. If we let $\omega_1, \omega_2$ be the observed frequency at $r_1, r_2$, we find 
$$
\frac{\omega_2}{\omega_1} = \frac{1-2M/r_1}{1-2M/r_2} \quad \text{$\searrow 0$ as $r_1 \searrow 2M$}
$$
\subsubsection{Orbiting null geodesics}  
Consider a null geodesic $\gamma^a$ in the Schwarzschild spacetime. 
Due to the spherical symmetry of the Schwarzschild spacetime, we may assume without loss of generality that $\dot \theta = 0$ and set $\theta = \pi/2$, so that $\gamma^a$ moves in the equatorial plane. 
We have that  
the geodesic energy and azimuthal angular momentum  
$\GeodesicEnergy = -\xi^a \dot \gamma_a $ and $\GeodesicLz = \eta^a \dot \gamma_a$ are conserved. We have 
$$
\GeodesicLz = \eta^a \dot \gamma^b g_{ab} = r^2 \dot\phi
$$
In fact the same is true for the momenta corresponding to each of the three rotational Killing fields. Thus, we may consider the total squared angular momentum $\GeodesicLsquared$ given by 
\begin{equation}\label{eq:Lsquared}
\GeodesicLsquared = 2 r^2 \NPm_{(a} \NPmbar_{b)} \dot \gamma^a \dot \gamma^b = r^4 (g_{S^2})_{ab} \dot \gamma^a \dot \gamma^b. 
\end{equation} 
For geodesics moving in the equatorial plane, we have 
$\GeodesicLsquared = \GeodesicLz^2$. 
Rewriting $g_{ab} \dot \gamma^a \dot \gamma^b = 0$ using \eqref{eq:gNP} and these definitions gives  
\begin{equation}\label{eq:dotr}
\dot r^2 + V = \GeodesicEnergy^2
\end{equation} 
where 
$$
V = \frac{f}{r^2} \GeodesicLsquared .
$$ 
Equation \ref{eq:dotr} 
can be viewed as the equation for a particle moving in a potential $V$.  
\begin{figure}[!hb]
\raisebox{-0.5\height}{\includegraphics{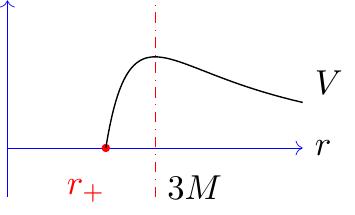}}
\end{figure} 

An analysis  
shows that $V$ has a unique critical point at $r=3M$, and hence a null geodesic with $\dot r = 0$ in the Schwarzschild spacetime must orbit at $r=3M$. We call such null geodesics trapped. The critical point $r=3M$ is a local maximum for $V$ and hence the orbiting null geodesics are unstable. The sphere $r=3M$ is called the \emph{photon sphere}. A similar analysis can be performed for massive particles orbiting the Schwarzschild black hole, see \cite[Chapter 6]{MR757180} for further details. 

The geometric optics correspondence between waves packets and null geodesics indicates that the phenomenon of trapped null geodesics is an obstacle to dispersion, i.e. the tendency for waves to leave every stationary region. For waves of finite energy, the fact that the trapped orbits are unstable can be used to show that such waves in fact disperse. This is a manifestation of the uncertainty principle. 

The close relation between the equation for radial motion of null geodesics and the wave equation $\nabla^a \nabla_a \psi = 0$ can be seen as follows. Equation 
\eqref{eq:dotr} can be written in the form  
\begin{align} 
r^4 \dot r^2 + \curlyR(r,\GeodesicEnergy, L) ={}& 0, \label{eq:Schw:nullgeodr} 
\end{align} 
where
\begin{equation} \label{eq:curlRSchw}
\curlyR = - r^4 \GeodesicEnergy^2 + r^2 f L 
\end{equation} 
On the other hand, the wave equation in the Schwarzschild exterior spacetime takes the form 
$$
r^2 \nabla^a \nabla_a \psi = \partial_r (r^2 f) \partial_r + \frac{\mathcal R}{r^2 f} 
$$
Here $\curlyR = \curlyR(r, \partial_t, \angDelta)$ where $\angDelta$ is the spherical Laplacian. This is the same expression as in the equation for the radial motion of null geodesics, but with $\GeodesicEnergy, L^2$ replaced by symmetry operators $\partial_t, \angDelta$, using the  correspondence $\GeodesicEnergy \leftrightarrow i\partial_t$, $L^2 \leftrightarrow -\angDelta$. If we perform separation of variables, the angular Laplacian $\angDelta$ is replaced by its eigenvalues $-\ell(\ell+1)$. 
This relation between the potential for radial motion of null geodesics and the term $\curlyR$ in the d'Alembertian is a curious and interesting fact, and importantly, this relation holds also in Kerr.

\subsection{Raychaudhouri equation and comparison theory} 
Assume that $k^a$ is a null vector field which generates affinely parametrized geodesics, $k^b \nabla_b k^a = 0$. Let 
\begin{equation}\label{eq:Thetadef}
\Theta = \half \nabla_a k^a
\end{equation} be the divergence, or null expansion\footnote{The definition of $\Theta$ in \eqref{eq:Thetadef} agrees with \eqref{eq:nullexp-def}, we have dropped the subindex on $\Theta$ to avoid clutter. The null expansion is often defined as $\nabla_a k^a$, however we shall here use the normalization as in \eqref{eq:Thetadef}.},
of the null congruence generated by $k^a$. For any $k^a$ as above, we have 
\begin{equation}\label{eq:nullRay}
k^a \nabla_a \Theta + \Theta^2 + \sigma\bar\sigma + \half R_{ab} k^a k^b = 0
\end{equation} 
where $\sigma\bar\sigma = \half(\nabla_{(a}k_{b)} \nabla^{(a} k^{b)} - \half (\nabla_a k^a)^2$ is the squared shear. 
Equation \eqref{eq:nullRay} describes the evolution of the null expansion along null geodesics $\gamma^a(\lambda)$ generated by $k^a$. 
Assuming the null energy condition \eqref{eq:NEC}, we have  
\begin{equation}\label{eq:Raych-ineq}
k^a \nabla_a \Theta + \Theta^2 \leq 0
\end{equation} 
Hence, if $\Theta\big{|}_S < c_0 < 0$, we find that $\Theta \searrow - \infty$ along $\gamma^a$ at some finite affine time $\lambda_0$. 

Recall that a geodesic in a Riemannian manifold ceases to be minimizing at  its first conjugate point. This can be shown by ``rounding off the corner'', which decreases length. In the Lorentzian case, ``rounding off the corner'', see fig \ref{fig:conjugate-rounding}, \emph{increases} Lorentzian length, and one finds that points along null geodesic $\gamma^a$ past $\gamma^a(\lambda_0)$ are timelike related to $\gamma^a (0)$. 
\begin{wrapfigure}{r}{0.15\textwidth}
\begin{center}
\raisebox{-0.5\height}{\includegraphics{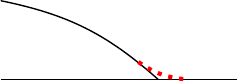}}
\end{center} 
\caption{\ }
\label{fig:conjugate-rounding}
\end{wrapfigure} 
This means that the geodesic in particular leaves the boundary of the causal future of $S$. It is known that any $p \in \partial J^+(S)$ is connected to $S$ by a null geodesic without conjugate points. Combining this argument with the inequality \eqref{eq:Raych-ineq} shows that if $\theta_{k^a} < c_0$ for some $c_0 < 0$, we find that the boundary of the causal future of $S$ can extend only for a finite affine parameter range. 

Now let $\Sigma$ be a spacelike Cauchy surface with future timelike normal $T^a$. 
\begin{wrapfigure}{r}{0.3\textwidth}
\begin{center}
\raisebox{-0.5\height}{\includegraphics{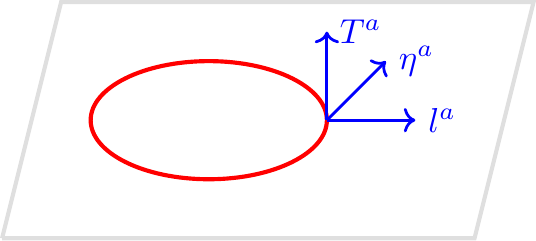}}
\end{center} 
\caption{\ }
\label{fig:notation}
\end{wrapfigure} 
For a 2-sided surface $S \subset \Sigma$, we say that a null normal $k^a$ to $S$ is outward pointing if the projection of $k^a$ to $\Sigma$ points into the exterior of $\Sigma$, i.e. the component of $\Sigma\setminus \Sigma$ connected to $E$. Let $\eta^a$ be the outward pointing normal to $S$ in $\Sigma$. Then $k^a = T^a + \eta^a$ is future directed and outward pointing. Let $H = \nabla_a \eta^a$ be the mean curvature of $S$ in $\Sigma$. Then $\Theta_{k^a} = \half(\tr_S k + H)$, where $\tr_S k = h^{ij} k_{ij} - k_{ij} \eta^i \eta^j$ is the trace of $k_{ij}$ restricted to $S$. See figure \ref{fig:notation}. 
If the outgoing null expansion $\Theta_{k^a}$ satisfies $\Theta_{k^a} = 0$ ($< 0$, $> 0$), we call $S$ is an  marginally outer trapped (trapped, untrapped) surface . 

Consider the Schwarzschild spacetime, see \ref{sec:schwarzschild}. If we designate the null vector $(\partial_{\Vcal})^a$ as outgoing, then the coordinate spheres $S_{t,r}$ are outer untrapped in regions $I, IV$, outer trapped in regions $II, III$, and marginally trapped on $\Horizon$

Due to their importance, we use the acronym MOTS for ``marginally outer trapped surface''. These are analogs of minimal surfaces in Riemannian geometry. In particular, a MOTS is critical with respect to variation of area along the outgoing null directions. For a stationary black hole spacetime, the event horizon is foliated by MOTS.

As an application of the above remarks, we have the following incompleteness result.
\begin{theorem}[\protect{\cite[\S 7]{2009CQGra..26h5018A}}] \label{thm:singular} 
Let $(\Mcal, g_{ab})$ be a globally hyperbolic spacetime satisfying the null energy conditon, and let $(\Sigma, h_{ij}, k_{ij})$ be a Cauchy surface in $(\Mcal, g_{ab})$ with non-compact exterior. Assume that $S$ is outer trapped in the sense that the outgoing null expansion $\theta$ of $S$ satisfies $\theta < c_0 < 0$ for some $c_0 < 0$. Then $(\Mcal, g_{ab})$ is causally geodesically incomplete. 
\end{theorem}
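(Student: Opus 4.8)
\emph{Proof proposal (plan).}
The plan is to run Penrose's incompleteness argument in the form suited to a one-sided (outer) trapped surface, deriving a contradiction from the assumption that $(\Mcal,g_{ab})$ is causally geodesically complete. Write $\Sigma=\Omega\cup E$, where $\Omega$ is the \emph{compact} interior region enclosed by $S$ (so $S=\partial\Omega$) and $E$ is the non-compact exterior, and let $k^a=T^a+\eta^a$ be the future-directed outgoing null normal of $S$ as in the discussion above. The key step is to show that, granting completeness, $\partial J^+(\Omega)$ is compact; this will then collide with the non-compactness of $E$.

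The focusing input is already essentially at hand. Parametrize the future outgoing null normal geodesics $\gamma$ of $S$ affinely so that $\dot\gamma^a=k^a$ along $S$; then the expansion $\Theta=\tfrac12\nabla_a\dot\gamma^a$ of this congruence agrees on $S$ with the outgoing null expansion $\theta$, so $\Theta|_S<c_0<0$. By the null energy condition \eqref{eq:NEC} and the Raychaudhuri inequality \eqref{eq:Raych-ineq} one has $\dot\Theta+\Theta^2\le 0$, hence $\Theta\searrow-\infty$ at some affine time $\le 1/|c_0|$; this is a focal point of $S$ along $\gamma$. By completeness each such $\gamma$ is defined for all affine parameters $\ge 0$, so the outgoing normal exponential map $F\colon S\times[0,1/|c_0|]\to\Mcal$ is everywhere defined and continuous, with compact image. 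Since global hyperbolicity makes $J^+(\Omega)$ closed, $\partial J^+(\Omega)=J^+(\Omega)\setminus I^+(\Omega)$ is closed, and achronality of the spacelike $\Sigma$ forces $\partial J^+(\Omega)\cap\Sigma=\Omega$, with $\Omega$ sitting in $\partial J^+(\Omega)$ as its ``bottom''. Every $p\in\partial J^+(\Omega)\setminus\Omega$ lies on a null generator of $\partial J^+(\Omega)$ with past endpoint on $S$ and no focal point to $S$ before $p$ (the structure of achronal boundaries recalled above); that generator must be one of the $\gamma$, and it has already left $\partial J^+(\Omega)$ once the affine parameter exceeds $1/|c_0|$. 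Hence $\partial J^+(\Omega)\subseteq\Omega\cup F\big(S\times[0,1/|c_0|]\big)$, a compact set, so the closed set $\partial J^+(\Omega)$ is compact. It is moreover an achronal, embedded $C^0$ hypersurface, hence a compact topological $3$-manifold without boundary.

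To conclude, fix a smooth timelike vector field on $\Mcal$; by global hyperbolicity each of its flow lines meets $\Sigma$ exactly once, giving a continuous map $\mu\colon\Mcal\to\Sigma$ that restricts to the identity on $\Sigma$. Achronality of $\partial J^+(\Omega)$ makes $\mu$ injective on it, and as domain and target are $3$-manifolds, invariance of domain makes $\mu|_{\partial J^+(\Omega)}$ an open map; thus $\mu(\partial J^+(\Omega))$ is open in $\Sigma$, and also compact, hence closed and nonempty, hence a union of connected components of $\Sigma$. It contains $\mu(\Omega)=\Omega$, so it contains the whole component of $\Sigma$ that contains $\Omega$; but that component also contains the non-compact $E$, since $S=\partial\Omega$. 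Therefore $E$ would be a closed subset of a compact set, a contradiction. Hence the completeness assumption fails; unwinding it, at least one future outgoing null normal geodesic of $S$ is incomplete, so $(\Mcal,g_{ab})$ is future null geodesically incomplete, a fortiori causally geodesically incomplete.

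The main obstacle is not the analysis --- the Raychaudhuri estimate is already in place --- but the point-set and differential-topological bookkeeping of the last two paragraphs: knowing that $\partial J^+(\Omega)$ is an embedded $C^0$ $3$-manifold (so that invariance of domain applies), that it is capped off once its generators focus, and that the decomposition $\Sigma=\Omega\cup E$ is used correctly, so that compactness of $\partial J^+(\Omega)$ genuinely contradicts non-compactness of $E$. One must also be careful that ``outer trapped'' controls exactly the generators of $\partial J^+(\Omega)$ emanating from $S$, namely the outgoing ones; this is the reason for working with $J^+(\Omega)$ rather than $J^+(S)$, whose ingoing normal generators are uncontrolled by the hypothesis.
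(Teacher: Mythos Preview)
Your proposal is correct and follows precisely the line the paper intends: the paper does not spell out a proof but presents the theorem ``as an application of the above remarks,'' namely the Raychaudhuri inequality \eqref{eq:Raych-ineq}, the focal-point/rounding-off discussion, and the fact that points of $\partial J^+$ are reached by null geodesics without focal points; you have supplied the standard Penrose topological closing argument (compactness of the achronal boundary, projection along a timelike flow to $\Sigma$, invariance of domain) that the paper leaves implicit. Your care in working with $J^+(\Omega)$ rather than $J^+(S)$, so that only the \emph{outgoing} null normals appear as generators, is exactly the refinement needed for the one-sided outer-trapped hypothesis and matches the treatment in the cited reference.
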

\begin{remark} Results similar to theorem  \ref{thm:singular} are usually referred to as ``singularity theorems'', but actually demonstrate that the spacetime $\Mcal$ has a nontrivial Cauchy horizon $\partial \Mcal$, without giving any information about its properties. Versions of such results were originally proved by Hawking and Penrose, see \cite{MR0424186}. Motivated by the strong cosmic censorship conjecture, one expects that for a generic spacetime, the spacetime metric becomes irregular  as one approaches $\partial \Mcal$, and hence that a regular extension beyond $\partial \Mcal$ is impossible. For example, in the Schwarzschild spacetime, curvature diverges as $1/r^3$ as one approaches the Cauchy horizon at $r=0$. This can be seen by looking at the invariantly defined Kretschmann scalar $R_{abcd} R^{abcd} = 48M/r^6$. 

The detailed behavior of the geometry at the Cauchy horizon in generic situations is subtle and far from understood, see however \cite{2013arXiv1311.4970L} and references therein for recent developments. For cosmological singularities, strong cosmic censorship including curvature blowup for generic data has been established in some symmetric situations, see \cite[\S 5.2]{2015CQGra..32l4003R} and references therein. 

By the weak cosmic censorship conjecture, one expects that in a generic asymptotically flat spacetime, $\partial \Mcal$ is hidden from observers at infinity, and hence that the domain of outer communication has a non-trivial boundary, the event horizon. This motivates the idea that MOTS may be viewed as representing the \emph{apparent horizon} of a black hole, see section \ref{sec:AH} below. Due to the fact that the MOTS can be understood in terms of Cauchy data, this point of view is important in considering dynamical black holes.

\end{remark}

\subsection{The apparent horizon} \label{sec:AH}
Consider the Vaidya line element, cf. \cite[\S 5.1.8]{poisson:toolkit}
\begin{equation}\label{eq:vaidya-metric} 
ds^2 = fdv^2- 2dv dr - r^2 d\Omega^s_{S^2}
\end{equation} 
with
$f = 1-2M(v)/r$, where the \emph{mass aspect function} $M(v)$ is an increasing function of the retarded time coordinate $v$.
The matter in the Vaidya spacetime is infalling null dust. Those regions where $dM/dv = 0$ are empty.  
We see that there is no $dr^2$ term in \eqref{eq:vaidya-metric}, so $r$ is a null coordinate. 
Setting $M(v) \equiv M$, gives the  
Schwarzschild line element in ingoing Eddington-Finkelstein coordinates. A calculation shows that there are MOTS located at $r=2M(v)$. Hence, if 
$M(v)$ varies from $M_1$ to $M_2$ in an interval $(v_1, v_2)$ 
\begin{figure}[!hb]

\raisebox{-0.5\height}{\includegraphics{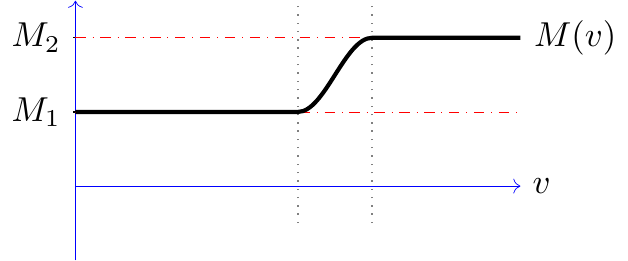}}

\caption{}
\label{fig:vadyam(v)}

\end{figure} 
and is constant elsewhere, we find that the  MOTS move outwards, to the event horizon, located at $r=2M_2$. 

In general, the spacetime tube swept out by the MOTS might, provided it exists, be termed a marginally outer trapped tube (MOTT). By known stability results for MOTS, this exists locally in generic situations, see \cite{2005PhRvL..95k1102A}, see also section \ref{sec:MOTSresults} below. Thus, heuristically the MOTS and MOTT represent the apparent horizon, 
and the fact that the apparent horizon moves outward corresponds to the growth of mass of the black hole due to the stress-energy or gravitational energy crossing the horizon, see figure \ref{fig:vaidya-apparenthorizon}.   
\begin{figure}[!hb]

\raisebox{-0.5\height}{\includegraphics{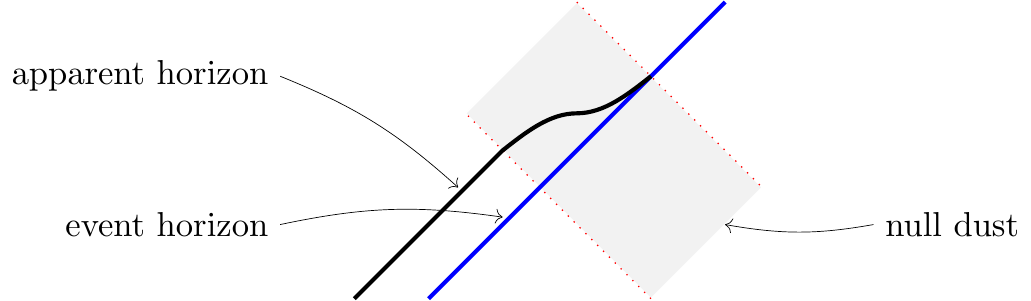}}

\caption{Event and apparent horizons in the Vaidya spacetime.}
\label{fig:vaidya-apparenthorizon}

\end{figure} 

\begin{remark}
\begin{enumerate}
\item 
The event horizon is teleological, in the sense that determining its location requires complete knowledge of spacetime. In particular, it is not possible to compute its location from Cauchy data without constructing the complete spacetime evolution. On the other hand, the notion of MOTS and apparent horizon are quasilocal notions, which can be determined directly from Cauchy data. 
\item The location of MOTS is not a spacetime concept but depends on the choice of Cauchy slicing. See \cite{2011PhRvD..83d4012B} for results on the region of spacetime containing trapped surfaces. It was shown by Wald and Iyer \cite{1991PhRvD..44.3719W} that there are Cauchy surfaces in the extended Schwarzschild spacetime which approach the singularity arbitrarily closly and such that the past of these Cauchy surfaces do not contain any outer trapped surfaces.  
\item The interior of the outermost MOTS is called the trapped region (a notion which depends on the Cauchy slicing). Based on the weak cosmic censorship conjecture, and the above remarks, one expects this to be in the black hole region, which is bounded by the event horizon. See \cite[Theorem 6.1]{2009AnHP...10..893C} for a result in this direction. 
\end{enumerate}
\end{remark}

\subsection{Results on MOTS and the trapped region} \label{sec:MOTSresults}
 Several theorems about MOTS have been proved in the last decade. In particular, if a Cauchy surface $\Sigma$ contains a MOTS, then there is an \emph{outermost} MOTS. 
\begin{figure}[!ht]

\raisebox{-0.5\height}{\includegraphics{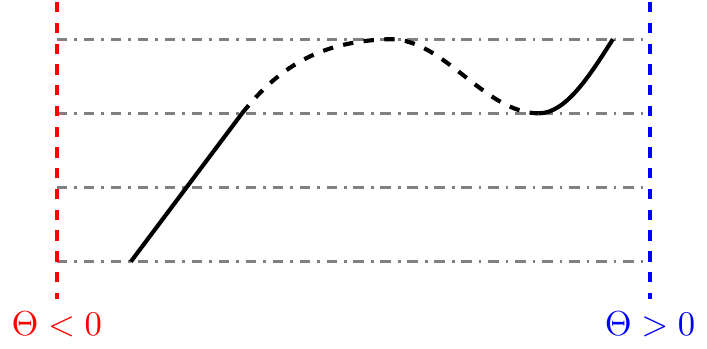}}

\caption{}

\label{fig:mott3}

\end{figure} 

If we conside a Cauchy slicing $(\Sigma_t)$, then if $\Sigma_{t_0}$ contains a MOTS, then for $t > t_0$, $\Sigma_t$ contains a MOTS. However, the location of the MOTS may jump, eg. due to the formation of a MOTS surrounding the previous one, see figure \ref{fig:mott3}. 
This phenomenon is seen in numerical simulations of colliding black holes, cf. \cite{2015CQGra..32w5003M}. There, examples with two merging black holes are considered. When the apparent horizons of the two black holes are sufficiently close together, a new apparent horizon surrounding both is formed, in accordance with the results in \cite{Andersson-Metzger:2009,2009CQGra..26h5018A}.

If the NEC holds, then in a generic situation the MOTT is spacelike \cite{2009CQGra..26h5018A}, and hence from the point of view of the exterior part of $\Mcal$ it represents an outflow boundary. This means that it is not necessary to impose any boundary condition on the MOTT in order to get a well-posed Cauchy problem. This leads to the \emph{exterior Cauchy problem}. As mentioned above, cf. figure \ref{fig:exterior2}, in strong field situations, it can happen that the MOTS jumps out. In this case, one must then restart solving exterior Cauchy problem at the jump time.  This corresponds closely to what one sees in a numerical evolution of strong field situations, eg. of merging black holes, when using horizon trackers to determine the location of MOTS. 
\begin{figure}[!ht]

\raisebox{-0.5\height}{\includegraphics{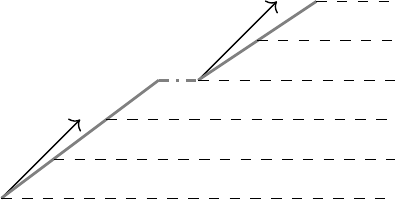}}

\caption{The exterior Cauchy problem}
\label{fig:exterior2}

\end{figure}

\subsection{Formation of black holes} The first example of a dynamically forming black hole through the collapse of a cloud of dust, was constructed by Oppenheimer and Snyder \cite{1939PhRv...56..455O} in 1939. Examples of the formation of a black hole by concentration of gravitational radiation was constructed by Christodoulou \cite{2009fbhg.book.....C}. There has been much recent work refining and extending result, see \cite{2013arXiv1302.5951K} and references therein. 

In order to understand the formation of black holes, it is important to have good conditions for the existence of marginally outer trapped surfaces in a given Cauchy surface. Such results have been proved by Schoen and Yau \cite{Schoen-Yau:1983:cond}, see also \cite{Clarke:1988}. The result in \cite{Schoen-Yau:1983:cond} makes use of Jang's equation to show that MOTS form if a sufficiently dense concentration of matter is present. A related result for the vacuum case is given in \cite{Eardley:1995}, see also 
\cite{Yau:2001}. 

\subsection{Black hole stability} \label{sec:BHstab}
Taking the trapped region as representing a dynamical black hole, the above discussion leads to a picture of the evolution dynamical black holes, as well as their formation. 
Based on these general considerations, we can now give a heuristic formulation of the black hole stability problem, and related conjectures. Recall that the Kerr black hole spacetime, which we shall study in detail below, is conjectured to be the unique rotating vacuum black hole spacetime, and further to be dynamically stable. 

The \emph{black hole stability conjecture} is that Cauchy data sufficiently close, in a suitable sense, to Kerr Cauchy data\footnote{See \cite{2015arXiv150402069A}, see also eg.  \cite{backdahl:valente-kroon:2011RSPSA.467.1701B, 2016arXiv160305839M} for discussions of the problem of characterizing Cauchy data as Kerr data.} have a maximal development which is future asymptotic to a Kerr spacetime, see figure \ref{fig:BHstabScritrap}. In approaching this problem, one may use the results on the evolution of MOTS mentioned above, cf. section 
\ref{sec:MOTSresults} to consider only the exterior Cauchy problem. 
\begin{figure}[!ht]
\raisebox{-0.5\height}{\includegraphics{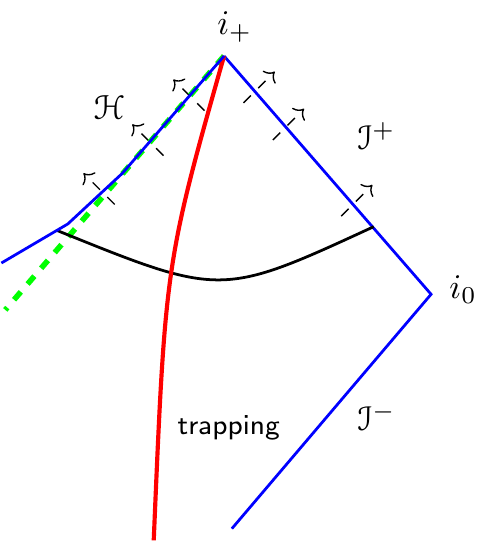}}
\caption{}
\label{fig:BHstabScritrap}
\end{figure} 
It is important to note that the parameters of the ``limiting'' Kerr spacetime cannot be determined in any effective manner from the initial data.  

As discussed above, cf. section \ref{sec:Komar}, if we restrict to axial symmetry, then angular momentum is quasi-locally conserved. This means that if we further restrict to zero angular momentum, the end state of the evolution must be a Schwarzschild black hole. 

Thus, the \emph{black hole stability conjecture for the axially symmetric case} is that the maximal development of sufficiently small (in a suitable sense), axially symmetric, deformations of Schwarzschild Cauchy data is asymptotic to the future to a Schwarzschild spacetime. In this case, due to the loss of energy through $\Scri^+$, the mass of the ``limiting'' Schwarzschild black hole cannot be determined directly from the Cauchy data. 

A conjecture related to the black hole stability conjecture, but which is even more far reaching may be termed  \emph{the end state conjecture}. Here the idea is that the maximal evolution of generic asymptotically flat vacuum initial data is asymptotic in a suitable sense, to a collection black holes moving apart, with the near region of each black hole approaching a Kerr geometry. No smallness condition is implied. 

The heuristic ideas relating to weak cosmic censorship and Kerr as the final state of the evolution of an isolated system, together with Hawking's area theorem was used by Penrose to motivate the Penrose inequality, 
$$
\sqrt{\frac{A_{\text{min}}}{16\pi}} \leq M_{\text{ADM}}
$$
were $A_{\text{min}}$ is the minimal area of any surface surrounding all past and future trapped regions in a given Cauchy surface, and $M_{\text{ADM}}$ is the ADM mass at infinity. The Riemannian version of the Penrose inequality has been proved by Bray \cite{Bray:2001}, and Huisken and Ilmanen \cite{Huisken-Ilmanen:2001}. The spacetime version of the Penrose inequality remains open. It should be stressed that the formulation of the inequality given above may have to be adjusted. Interesting possible approaches to the problem have been developed by Bray and Khuri, see \cite{2014arXiv1409.0067H} and references therein.

\subsection{The Kerr metric} \label{sec:kerrmetric}
In this section we shall discuss the Kerr metric, which is the main object of our considerations. Although many features of the geometry and analysis on black hole spacetimes are seen in the Schwarzschild case, there are many new and fundamental phenomena persent in the Kerr case. Among those are complicated trapping, i.e. the fact that trapped null geodesics fill up an open spacetime region, the fact that the Kerr metric admits only two Killing fields, but a hidden symmetry manifested in the Carter constant, and the fact that the stationary Killing vector field $\xi^a$ fails to be timelike in the whole domain of outer communications, which leads to a lack of a positive conserved energy for waves in the Kerr spacetimes. This fact is the origin of superradiance and the Penrose process. See \cite{2015CQGra..32l4006T} for a recent survey.

The Kerr metric describes a family of stationary, axisymmetric, asymptotically flat vacuum 
spacetimes, parametrized by ADM mass $M$ and angular momentum per unit mass $a$. The expressions for mass and angular momentum introduced in section \ref{sec:remarks} when applied in Kerr geometry yield $M$ and $J = aM$.  
In Boyer-Lindquist coordinates $(t,r,\theta,\phi)$, the Kerr metric takes the form
\begin{align}
\met_{ab}={}&\frac{(\Delta - a^2 \sin^2\theta) dt_{a} dt_{b}}{\Sigma}
 -  \frac{\Sigma dr_{a} dr_{b}}{\Delta}
 -  \Sigma d\theta_{a} d\theta_{b}
 - \frac{\sin^2\theta \bigl((a^2 + r^2)^2 - a^2 \sin^2\theta \Delta\bigr) d\phi_{a} d\phi_{b}}{\Sigma}\nonumber\\
& + \frac{2 a \sin^2\theta (a^2 + r^2 -  \Delta) dt_{(a}d\phi_{b)}}{\Sigma},\label{eq:met}
\end{align}
where
$\Delta = a^2 - 2 M r + r^2$ and $\Sigma = a^2 \cos^2\theta + r^2$.
The volume element is 
\begin{equation}\label{eq:Kerrvol}
\sqrt{|\det g_{ab}|} = \Sigma \sin\theta 
\end{equation} 
There is a ring-shaped singularity at $r=0$, $\theta = \pi/2$. 
For $|a|\leq M$, the Kerr spacetime contains a black hole, with event horizon at $r=r_+ \equiv M + \sqrt{M^2 - a^2}$, while for $|a|>M$, the singularity is naked in the sense that it is causally connected to observers at infinity. The area of the horizon is $A_
{\text{Hor}} = 4 \pi (r_+^2 + a^2)$. This achieves its maximum of $16\pi M^2$ when $a=0$, providing one of the ingredients in the heuristic argument for the Penrose inequality, see section \ref{sec:BHstab}. 
The case $|a| = M$ is called extreme. We shall here be interested only in the subextreme case, $|a| < M$, as this is the only case where we expect black hole stability to hold.

The Boyer-Lindquist coordinates are analogous to the Schwarzschild coordinates section \ref{sec:schwarzschild} and upon setting $a=0$,  \eqref{eq:met} reduces to \eqref{eq:g_Schw}. The line element takes a simple form in Boyer-Lindquist coordinates, but similarly for the Schwarzschild coordinates, the Boyer-Lindquist coordinates have the drawback that they are not regular at the horizon.   

The Kerr metric admits two Killing vector fields $\xi^a = (\partial_t)^a$ (stationary) and 
$
\eta^a = (\partial_\phi)^a$ (axial). Although the stationary Killing field $\xi^a$ is timelike near infinity, since $\met_{ab} \xi^a\xi^b \rightarrow 1$ as $r\rightarrow
\infty$, $\xi^a$ becomes spacelike for $r$ sufficiently small,
when $1-2M/\KSigma<0$. 
In the Schwarzschild case $a=0$, this occurs at
the event horizon $r=2M$. However, for a rotating Kerr black hole with
$0<|a|\leq M$, there is a region, called the ergoregion, outside the
event horizon where $\xi^a$ is spacelike. The ergoregion is bounded by the surface $M+\sqrt{M^2-a^2\cos^2\theta}$ which touches the horizon at the poles $\theta=0,\pi$, see figure \ref{fig:ergoregion}.
\begin{figure}[h!]%
\centering
\raisebox{-0.5\height}{\includegraphics{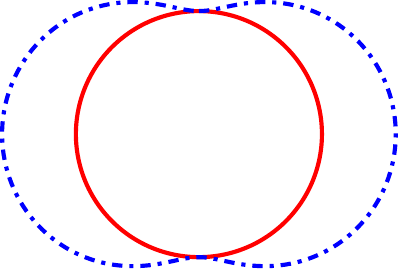}}
\centering
\caption{The ergoregion}\label{fig:ergoregion}
\end{figure} 
In the ergoregion, null and
timelike geodesics can have negative energy with respect to $\xi^a$. The fact that there is no globally timelike vectorfield in the Kerr exterior is the origin of superradiance, i.e. the fact that waves which scatter off the black hole can leave the ergoregion with larger energy (as measured by a stationary observer at infinity) than was sent in. This effect was originally found by an analysis based on separation of variables, but can be demonstrated rigorously, see \cite{2009CMaPh.287..829F}. However, it is a subtle effect and not easy to demonstrate numerically, see \cite{Laszlo:2012zz}. 

If we consider a dynamical spacetime containing a rotating black hole, then the presence of the ergoregion allows for the Penrose process, which extracts rotational energy from the black hole, see \cite{2014PhRvD..89b4041L}, see also \cite{2014PhRvD..89f1503E} for a numerical study of superradiance of graviational waves in a dynamical spacetime. 

Let $\omega_H = a/(r_+^2 + a^2)$ be the rotation speed of the black hole. The Killing field $\chi^a = \xi^a + \omega_H \eta^a$ is null on the event horizon in Kerr, which is therefore a Killing horizon. For $|a| < M$, there is a neighborhood of the horizon in the black hole exterior where $\chi^a$ is timelike. The surface gravity $\kappa$, defined by $\kappa^2 = -\half (\nabla^a \chi^b)(\nabla_a \chi_b)$ takes the value $\kappa = (r_+ - M)/(r_+^2 +a^2)$, and is in the subextreme case $|a| < M$ nonzero. By general results, a Killing horizon with non-vanishing surface gravity is bifurcate, i.e. there is a cross-section where the null generator vanishes. In the Schwarzschild case, this is the 2-sphere $\Ucal=\Vcal=0$. 
See \cite{ONeill,poisson:toolkit} for background on the geometry of the Kerr spacetime, see also \cite{FrolovNovikov}.  

\begin{TOOLONG} 
\red{

\begin{itemize} 

Penrose scenario: sketch expanding horizon: Hawking area theorem $A(S)$ increases to future 

Hawking area theorem proved by Chrusciel, Delay, Galloway, Howard

area increases along horizon, Bondi mass decreases along Scri, this leads to Penrose inequality 

replace $A_H$ by minimizing hull of any MOTS/MITS, proved in Riemannian case, open in general. Rigidity aspect in terms of Bondi mass important for BH stability -- understanding how spacetime approaches Kerr BH as you move to the future (take into account angular momentum)

\end{itemize} 

} 
\end{TOOLONG}

\section{Spin geometry} \label{sec:prel} 
The 2-spinor formalism, and the closely related GHP formalism, are important tools in Lorentzian geometry and the analysis of black hole spacetimes, and we will introduce them here. A detailed of this material is given by Penrose and Rindler \cite{Penrose:1986fk}.  Following the conventions there, we use the abstract index notation with lower case latin letters $a,b,c,\dots$ for tensor indices, and unprimed and primed upper-case latin letters $A,B,C, \dots, A', B', C', \dots$ for spinor indices. Tetrad and dyad indices are boldface latin letters following the same scheme,  $\ba,\bb,\bc,\dots, \bA,\bB,\bC,\dots,\bA',\bB',\bC',\dots$. For coordinate indices we use greek letters $\alpha, \beta, \gamma, \dots$.  

\subsection{Spinors on Minkowski space} 
Consider Minkowski space $\Mink$, i.e. $\Re^4$ with coordinates $(x^\alpha) = (t, x, y, z)$ and metric 
$$
\met_{\alpha\beta} dx^\alpha dx^\beta = dt^2 - dx^2 - dy^2 - dz^2.
$$
Define a complex null tetrad (i.e. frame) $(\met_{\ba}{}^a)_{\ba = 0,\cdots, 3} = (\NPl^a, \NPn^a, \NPm^a, \NPmbar^a) $, as in \eqref{eq:Minktetrad} above, 
normalized so that $\NPl^a \NPn_a = 1$, $\NPm^a \NPmbar_a = -1$, 
so that 
\begin{equation}\label{eq:metNP} 
\met_{ab} = 2 (\NPl_{(a} \NPn_{b)} - \NPm_{(a} \NPmbar_{b)}). 
\end{equation} 
Similarly, let $\eps_\bA{}^A$ be a dyad (i.e. frame) in $\Co^2$, with dual frame $\eps_A{}^\bA$.
The complex conjugates will be denoted $\bar\eps_{\bA'}{}^{A'}, \bar\eps_{A'}{}^{\bA'}$ and again form a basis in another 2-dimensional complex space denoted $\bar \Co^2$, and its dual.  We can identify  the space of complex $2\times2$ matrices with $\Co^2 \otimes \bar \Co^2$. By construction, the tensor products $\eps_\bA{}^A \bar\eps_{\bA'}{}^{A'}$ and $\eps_A{}^\bA \bar\eps_{A'}{}^{\bA'}$forms a basis in $\Co^2 \otimes \bar \Co^2$ and its dual.  

Now, with $x^\ba = x^a \met_a{}^\ba $,  writing 
\begin{equation}\label{eq:IvW} 
x^\ba \met_\ba{}^{\bA \bA'} \equiv \begin{pmatrix} x^0 & x^2 \\ x^3 & x^1 \end{pmatrix} 
\end{equation} 
defines the soldering forms, also known as Infeld-van der Waerden symbols $\met_a{}^{AA'}$, (and analogously $\met_{AA'}{}^a$). 
By a slight abuse of notation we may write $x^{AA'} = x^a$ instead of $x^{\bA\bA'} = x^\ba \met_\ba{}^{\bA\bA'}$ or, dropping reference to the tetrad, $x^{AA'} = x^a \met_a{}^{AA'}$. 
In particular, we have that $x^a \in \Mink$ corresponds to a $2\times 2$ complex Hermitian 
matrix $x^{\bA\bA'} \in \Co^2 \otimes \bar \Co^2$.
Taking the complex conjugate of both sides of \eqref{eq:IvW} gives 
$$
\bar x^a = \bar x^{A'A} = (x^{AA'})^* .
$$
where $*$ denotes Hermitian conjugation. This extends to a correspondence $\Co^4 \leftrightarrow \Co^2 \otimes \bar \Co^2$ with complex conjugation corresponding to Hermitian conjugation. 

Note that 
\begin{equation}\label{eq:detmet}
\det(x^{\bA\bA'}) = x^0 x^1 - x^2 x^3 = x^a x_a /2.
\end{equation} 

We see from the above that the group 
$$
\SL(2,\Co) = \Bigl\{ A = \begin{pmatrix} a & b \\ c & d \end{pmatrix}, \quad a,b,c,d \in \Co, \quad ad-bc = 1 \Bigr\}
$$
acts on $X \in \Co^2 \otimes \bar \Co^2$ by 
$$
X \mapsto A X A^* .
$$
In view of \eqref{eq:detmet} this exhibits $\SL(2,\Co)$ as a double cover of the identity component of the Lorentz group $\SO_0(1,3)$, the group of linear isometries of $\Mink$. 
In particular, $\SL(2,\Co)$ is the spin group of $\Mink$. The canonical action
$$
(A, v) \in \SL(2,\Co) \times \Co^2 \mapsto A v \in \Co^2
$$
of $\SL(2,\Co)$ on $\Co^2$ is the spinor representation. Elements of $\Co^2$ are called (Weyl) spinors. The conjugate representation given by 
$$
(A, v) \in \SL(2,\Co) \times \Co^2 \mapsto \bar A v \in \Co^2
$$
is denoted $\bar\Co^2$.

Spinors\footnote{It is conventional to refer to spin-tensors eg. of the form $x^{AA'}$ or $\psi_{ABA'}$ simply as spinors.} 
of the form $x^{AA'} = \alpha^A \beta^{A'}$ correspond to matrices of rank one, and hence to complex null vectors. Denoting $o^A = \eps_{\mathbf 0}{}^A, \iota^A = \eps_{\mathbf 1}{}^A$, we have from the above that 
\begin{equation}\label{eq:tetrad-dyad} 
\NPl^a = o^A o^{A'}, \quad \NPn^a = \iota^A \iota^{A'}, \quad \NPm^a = o^A \iota^{A'}, \quad \NPmbar^a = \iota^A o^{A'}
\end{equation} 
This gives a correspondence between a null frame in $\Mink$ and a dyad in $\Co^2$. 

The action of $\SL(2,\Co)$ on $\Co^2$ leaves invariant a complex area element, a skew-symmetric bispinor. A unique such spinor $\eps_{AB}$ is determined by the normalization 
$$
\met_{ab} = \eps_{AB} \bar\eps_{A'B'}.
$$
The inverse $\eps^{AB}$ of $\eps_{AB}$ is defined by $\eps_{AB}\eps^{CB} = \delta_A{}^C$, $\eps^{AB} \eps_{AC} = \delta_C{}^B$.
As with $\met_{ab}$ and its inverse $\met^{ab}$, the spin-metric $\eps_{AB}$ and its inverse $\eps^{AB}$ 
is used to lower and raise spinor indices, 
$$
\lambda_{B} = \lambda^{A}\eps_{AB} , \quad \lambda^{A} = \eps^{AB} \lambda_{B}.
$$
We have
$$
\eps_{AB} = o_A \iota_B - \iota_A o_B.
$$
In particular,
\begin{equation} \label{eq:dyad-normalization} 
o_A \iota^A = 1.
\end{equation}

An element $\phi_{A\cdots D A' \cdots D'}$ of $\bigotimes^k \Co^2 \bigotimes^l \bar \Co^2$ is called a spinor of valence $(k,l)$. The space of totally symmetric\footnote{The ordering between primed and unprimed indices is irrelevant.} spinors $\phi_{A \cdots D A' \cdots D'} = \phi_{(A \cdots D) (A' \cdots D')}$ is denoted $\SymSpin_{k,l}$. The spaces $\SymSpin_{k,l}$ for $k,l$ non-negative integers yield all irreducible representations of $\SL(2,\Co)$. In fact, one can decompose any spinor into ``irreducible pieces'', i.e. as a linear combination of totally symmetric spinors in $\SymSpin_{k,l}$ with factors of $\eps_{AB}$. The above mentioned correspondence between vectors and spinors extends to tensors of any type, and hence the just mentioned decomposition of spinors into irreducible pieces carries over to tensors as well. 
Examples are given by $\mathcal F_{ab} = \phi_{AB} \eps_{A'B'}$, a complex anti-self-dual 2-form, and ${}^-C_{abcd} = \Psi_{ABCD} \eps_{A'B'} \eps_{C'D'}$, a complex anti-self-dual tensor with the symmetries of the Weyl tensor. Here, $ \phi_{AB}$ and $\Psi_{ABCD}$ are symmetric.

\subsection{Spinors on spacetime} 
Let now $(\Mcal, \met_{ab})$ be a Lorentzian 3+1 dimensional spin manifold with metric of signature $+---$. The spacetimes we are interested in here are spin, in particular any orientable, globally hyperbolic 3+1 dimensional spacetime is spin, cf. \cite[page 346]{Ger70spinstructII}.
If $\Mcal$ is spin, then the orthonormal frame bundle $\SO(\Mcal)$ admits a lift to $\Spin(\Mcal)$, a principal $\SL(2,\Co)$-bundle. The associated bundle construction now gives vector bundles over $\Mcal$ corresponding to the  representations of $\SL(2,\Co)$, in particular we have bundles of valence $(k,l)$ spinors with sections $\phi_{A\cdots D A' \cdots D'}$.
The Levi-Civita connection lifts to act on sections of the spinor bundles, 
\begin{equation}\label{eq:nablavarphi}
\nabla_{AA'} : \varphi_{B \cdots D B' \cdots D' } \to \nabla_{AA'} \varphi_{B \cdots D B' \cdots D'} 
\end{equation} 
where we have used the tensor-spinor correspondence to replace the index $a$ by $AA'$. We shall denote the totally symmetric spinor bundles by $\SymSpin_{k,l}$ and their spaces of sections by $\SymSpinSec_{k,l}$. 

The above mentioned correspondence between spinors and tensors, and the decomposition into irreducible pieces, can be applied to the Riemann curvature tensor. In this case, the irreducible pieces correspond to the scalar curvature, traceless Ricci tensor, and the Weyl tensor, denoted by  
$R$, $S_{ab}$, and $C_{abcd}$, respectively. The Riemann tensor then takes the form  
\begin{align}
R_{abcd}={}&- \tfrac{1}{12} g_{ad} g_{bc} R
 + \tfrac{1}{12} g_{ac} g_{bd} R
 + \tfrac{1}{2} g_{bd} S_{ac}
 -  \tfrac{1}{2} g_{bc} S_{ad}
 -  \tfrac{1}{2} g_{ad} S_{bc}
 + \tfrac{1}{2} g_{ac} S_{bd}
 + C_{abcd}.
\end{align}
The spinor equivalents of these tensors are
\begin{subequations}
\begin{align}
C_{abcd}={}&\Psi_{ABCD} \bar\epsilon_{A'B'} \bar\epsilon_{C'D'}+\bar\Psi_{A'B'C'D'} \epsilon_{AB} \epsilon_{CD},\\
S_{ab} ={}& -2 \Phi_{ABA'B'},\\
R={}&24 \Lambda.
\end{align}
\end{subequations}

\subsection{Fundamental operators} 
Projecting \eqref{eq:nablavarphi} on its irreducible pieces gives the following four \emph{fundamental operators}, introduced in \cite{ABB:symop:2014CQGra..31m5015A}. 
\begin{definition}
The differential operators
$$
\sDiv_{k,l}:\mathcal{S}_{k,l}\rightarrow \mathcal{S}_{k-1,l-1}, \quad 
\sCurl_{k,l}:\mathcal{S}_{k,l}\rightarrow \mathcal{S}_{k+1,l-1}, \quad 
\sCurlDagger_{k,l}:\mathcal{S}_{k,l}\rightarrow \mathcal{S}_{k-1,l+1}, \quad 
\sTwist_{k,l}:\mathcal{S}_{k,l}\rightarrow \mathcal{S}_{k+1,l+1}
$$
are defined as
\begin{subequations}
\begin{align}
(\sDiv_{k,l}\varphi)_{A_1\dots A_{k-1}}{}^{A_1'\dots A_{l-1}'}\equiv{}&
\nabla^{BB'}\varphi_{A_1\dots A_{k-1}B}{}^{A_1'\dots A_{l-1}'}{}_{B'},\\
(\sCurl_{k,l}\varphi)_{A_1\dots A_{k+1}}{}^{A_1'\dots A_{l-1}'}\equiv{}&
\nabla_{(A_1}{}^{B'}\varphi_{A_2\dots A_{k+1})}{}^{A_1'\dots A_{l-1}'}{}_{B'},\\
(\sCurlDagger_{k,l}\varphi)_{A_1\dots A_{k-1}}{}^{A_1'\dots A_{l+1}'}\equiv{}&
\nabla^{B(A_1'}\varphi_{A_1\dots A_{k-1}B}{}^{A_2'\dots A_{l+1}')},\\
(\sTwist_{k,l}\varphi)_{A_1\dots A_{k+1}}{}^{A_1'\dots A_{l+1}'}\equiv{}&
\nabla_{(A_1}{}^{(A_1'}\varphi_{A_2\dots A_{k+1})}{}^{A_2'\dots A_{l+1}')}.
\end{align}
\end{subequations}
The operators are called respectively the divergence, curl, curl-dagger, and twistor operators. 
\end{definition}
With respect to complex conjugation, the operators $\sDiv, \sTwist$ satisfy $\overline{\sDiv_{k,l}} = \sDiv_{l,k}$, $\overline{\sTwist_{k,l}} = \sTwist_{l,k}$, while $\overline{\sCurl_{k,l}} = \sCurlDagger_{l,k}$, $\overline{\sCurlDagger_{k,l}} = \sCurl_{l,k}$. 

Denoting the adjoint of an operator by $\mathcal A$ with respect to the bilinear pairing 
$$
(\phi_{A_1 \cdots A_k A'_1 \cdots A'_l}, \psi_{A_1 \cdots A_k A'_1 \cdots A'_l})=\int \phi_{A_1 \cdots A_k A'_1 \cdots A'_l} \psi^{A_1 \cdots A_k A'_1 \cdots A'_l}d\mu 
$$
by $\mathcal A^\dagger$, 
and the adjoint with respect to the sesquilinear pairing  
$$
\langle \phi_{A_1 \cdots A_k A'_1 \cdots A'_l}, \psi_{A_1 \cdots A_l A'_1 \cdots A'_k}\rangle =\int \phi_{A_1 \cdots A_k A'_1 \cdots A'_l} \bar\psi^{A_1 \cdots A_k A'_1 \cdots A'_l}d\mu 
$$
by $\mathcal A^\star$ , 
we have 
\begin{align*}
(\sDiv_{k,l})^\dagger &= - \sTwist_{k-1,l-1}, & (\sTwist_{k,l})^\dagger &= - \sDiv_{k+1,l+1},&
(\sCurl_{k,l})^\dagger &= \sCurlDagger_{k+1,l-1}, & (\sCurlDagger_{k,l})^\dagger &= \sCurl_{k-1,l+1},
\end{align*}
and
\begin{align*}
(\sDiv_{k,l})^\star &= - \sTwist_{l-1,k-1} , & (\sTwist_{k,l})^\star &= - \sDiv_{l+1,k+1},&
(\sCurl_{k,l})^\star &= \sCurl_{l-1,k+1}, & (\sCurlDagger_{k,l})^\star &= \sCurlDagger_{l+1,k-1}.
\end{align*}

As we will see in section~\ref{sec:masslessspins}, the kernels of $\sCurlDagger_{2s,0}$ and $\sCurl_{0,2s}$ are the massless spin-s fields. The kernels of $\sTwist_{k,l}$, are the valence $(k,l)$ Killing spinors, which we will discuss further in section~\ref{sec:KillingSpinors} and section~\ref{sec:KillSpinSpacetime}.
A complete set of commutator properties of these operators can be found in \cite{ABB:symop:2014CQGra..31m5015A}.

\subsection{Massless spin-$s$ fields} \label{sec:masslessspins}
For $s \in \half \NatNum$,  $\varphi_{A\cdots D} \in \ker  \sCurlDagger_{2s,0}$ is a totally symmetric spinor $\varphi_{A\cdots D} = \varphi_{(A \cdots D)}$ of valence $(2s,0)$ which solves the massless spin-s equation 
$$
(\sCurlDagger_{2s,0} \varphi)_{A\cdots BD'} = 0.
$$
For $s=1/2$, this is the Dirac-Weyl equation $\nabla_{A'}{}^A \varphi_A = 0$, for $s=1$, we have the left and right Maxwell equation $\nabla_{A'}{}^B \phi_{AB} = 0$ and $\nabla_A{}^{B'} \varphi_{A'B'} = 0$, i.e. $(\sCurlDagger_{2,0} \phi)_{AA'} =0$, $(\sCurl_{0,2} \varphi)_{AA'} = 0$. 

An important example is the Coulomb Maxwell field on Kerr,   
\begin{equation}\label{eq:coulomb} 
\phi_{AB} = - \frac{2}{(r-ia\cos\theta)^2} o_{(A} \iota_{B)}
\end{equation} 
This is a non-trivial sourceless solution of the Maxwell equation on the Kerr background.
We note that the scalars components, see section \ref{sec:GHP} below, of the Coulomb field $\phi_1 = (r - i a \cos\theta)^{-2}$ while $\phi_0 = \phi_2 = 0$. 

For $s > 1$, the existence of a non-trivial solution to the spin-s equation implies  curvature conditions, a fact known as the Buchdahl constraint \cite{Buchdahl58},
\begin{equation}
0=\Psi_{(A}{}^{DEF}\phi_{B \dots C)DEF}.
\end{equation}
This is easily obtained by commuting the operators in
\begin{equation}
0 = (\sDiv_{2s-1,1} \sCurlDagger_{2s,0} \phi)_{A \dots C}. 
\end{equation}
For the case $s=2$, the equation 
$\nabla_{A'}{}^D \Psi_{ABCD} = 0$
is the Bianchi equation, which holds for the Weyl spinor in any vacuum spacetime. Due to the Buchdahl constraint, it holds that in any sufficiently general spacetime, a solution of the spin-2 equation is proportional to the Weyl spinor of the spacetime.  

\subsection{Killing spinors}\label{sec:KillingSpinors} 
Spinors $\varkappa_{A_1\cdots A_k}{}^{A_1' \cdots A_l'} \in \SymSpinSec_{k,l}$ satisfying 
$$
(\sTwist_{k,l} \varkappa)_{A_1 \cdots A_{k+1}}{}^{A_1' \cdots A_{l+1}'} = 0,
$$
are called Killing spinors of valence $(k,l)$. 
We denote the space of Killing spinors of valence $(k,l)$ by $\KillSpin_{k,l}$. 
The Killing spinor equation is an over-determined system. 
The space of Killing spinors is a finite dimensional space, and the existence of Killing spinors imposes strong restrictions on $\Mcal$, see section \ref{sec:KillSpinSpacetime} below. Killing spinors $\GenVec_{AA'} \in \KillSpin_{1,1}$ are simply conformal Killing vector fields, satisfying 
$\nabla_{(a} \GenVec_{b)} - \half \nabla^c \GenVec_c g_{ab}$. A Killing spinor $\kappa_{AB} \in \KillSpin_{2,0}$ corresponds to a complex anti-selfdual conformal Killing-Yano 2-form 
$
\mathcal{Y}_{ABA'B'} = \kappa_{AB} \eps_{A'B'} 
$
satisfying the equation 
\begin{equation}\label{eq:CKY}
\nabla_{(a} \mathcal{Y}_{b)c} -2 \zeta_c g_{ab} + \zeta_{(a} g_{b)c} = 0 ,
\end{equation} 
where in the 4-dimensional case, $\zeta_a = \tfrac{1}{3} \nabla_b \mathcal{Y}^b{}_a$. 

In the mathematics literature, Killing spinors of valence $(1,0)$ are known as twistor spinors. The terms conformal Killing-Yano form or twistor form is used also for the real 2-forms corresponding to Killing spinors of valence $(2,0)$, as well as for forms of higher degree and in higher dimension, in the kernel of an analogous Stein-Weiss operator.
Further, we mention that  Killing spinors $L_{ABA'B'} \in \KillSpin_{2,2}$ are traceless symmetric conformal Killing tensors $L_{ab}$, satisfying the equation 
\begin{equation}\label{eq:confkill}
\nabla_{(a}L_{bc)} -  \tfrac{1}{3} g_{(ab}\nabla^{d}L_{c)d}  = 0 .
\end{equation}
In particular, any tensor of the form $\zeta g_{ab}$ for some scalar field $\zeta$ is a conformal Killing tensor. If $\gamma^a$ is a null geodesic and $L_{ab}$ is a conformal Killing tensor, then $L_{ab} \dot \gamma^a \dot \gamma^b$ is conserved along $\gamma^a$. 
For any $\kappa_{AB} \in \KillSpin_{2,0}$ we have that 
$L_{ABA'B'} = \kappa_{AB} \bar \kappa_{A'B'} \in \KillSpin_{2,2}$.  
See section \ref{sec:KillSpinSpacetime} below for further details.

\subsection{Algebraically special spacetimes} \label{sec:algspec}
Let $\varphi_{A\cdots D} \in \SymSpinSec_{k,0}$. A spinor $\alpha_A$ is a \emph{principal spinor} of $\varphi_{A\cdots D}$ if  
$$
\varphi_{A \cdots D} \alpha^A \cdots \alpha^D= 0.
$$
An application of the fundamental theorem of algebra shows that any $\varphi_{A \cdots D} \in \SymSpinSec_{k,0}$ has exactly $k$ principal spinors $\alpha_A, \dots, \delta_A$, and hence is of the form 
$$
\varphi_{A \cdots D} = \alpha_{(A} \cdots \delta_{D)}.
$$
If $\varphi_{A \cdots D} \in \SymSpinSec_{k,0}$ has $n$ distinct principal spinors $\alpha^{(i)}_A$, repeated $m_i$ times, then $\varphi_{A \cdots D} $ is said to have algebraic type $\{m_1, \dots, m_n\}$. 
Applying this to the Weyl tensor leads to the Petrov classification, see table 1. We have the following list of algebraic, or Petrov, types\footnote{The Petrov classification is exclusive, so a spacetime belongs at each point to exactly one Petrov class.}. 

\begin{table}[!h]
\centering
\begin{minipage}{0.5\textwidth}
\centering
\vskip .1in
\begin{tabular}{l|l|l} 
I & $\{1,1,1,1\}$ & $\Psi_{ABCD} = \alpha_{(A} \beta_B \gamma_C \delta_{D)}$ \\ 
II & $\{2,1,1\}$ & $\Psi_{ABCD} = \alpha_{(A} \alpha_B \gamma_C \delta_{D)}$ \\ 
D & $\{2,2\}$ & $\Psi_{ABCD} = \alpha_{(A} \alpha_B \beta_C \beta_{D)}$ \\ 
III & $\{3,1\}$ & $\Psi_{ABCD} = \alpha_{(A} \alpha_B \alpha_C \beta_{D)}$ \\ 
N & $\{4\}$ & $\Psi_{ABCD} = \alpha_A \alpha_B \alpha_C \alpha_D$ \\ 
O & $\{ - \}$ & $\Psi_{ABCD} = 0$
\end{tabular} 
\bigskip
\caption{The Petrov classification}
\end{minipage}%
\begin{minipage}{0.5\textwidth}
\centering
\vskip -.1in

\centering

\raisebox{-0.5\height}{\includegraphics{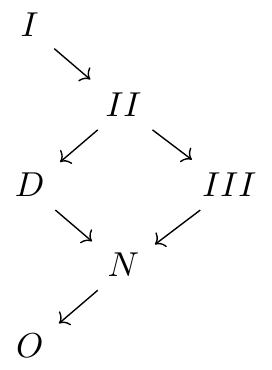}}

\end{minipage}
\end{table} 

A principal spinor $o_A$ determines a principal null direction $l_a = o_A \bar o_{A'}$. 
The Goldberg-Sachs theorem states that in a vacuum spacetime, the congruence generated by a null field $l^a$ is geodetic and shear free\footnote{If $l^a$ is geodetic and shear then the spin coefficients $\sigma,\kappa$, cf. \eqref{eq:spincoeff-def} below, satisfy $\sigma=\kappa=0$.} if and only if $l_a$ is a repeated principal null direction of the Weyl tensor $C_{abcd}$ (or equivalently $o_A$ is a repeated principal spinor of the Weyl spinor $\Psi_{ABCD}$). 

\subsubsection{Petrov type $\PetrovD$} The Kerr metric is of Petrov type D, and many of its important properties follows from this fact. 
The vacuum type $\PetrovD$ spacetimes have been classified by Kinnersley \cite{kinnersley:1969JMP....10.1195K},  see also Edgar et al \cite{edgar:etal:2009CQGra..26j5022E}. The family of Petrov type D spacetimes includes the Kerr-NUT family and the boost-rotation symmetric C-metrics. The only Petrov type D vacuum spacetime which is asymptotically flat and has positive mass is the Kerr metric, see theorem \ref{thm:kerrchar} below. 

A Petrov type $\PetrovD$ spacetime has two repeated principal spinors $o_A, \iota_A$, and correspondingly there are two repeated principal null directions $\NPl^a, \NPn^a$, for the Weyl tensor. We can without loss of generality assume that $\NPl^a \NPn_a = 1$, and define a null tetrad by adding  complex null vectors $\NPm^a, \NPmbar^a$ normalized such that $\NPm^a \NPmbar_a = -1$.
By the Goldberg-Sachs theorem both $\NPl^a, \NPn^a$ are geodetic and shear free, and only one of the 5 independent complex Weyl scalars is non-zero, namely 
\begin{align}
\Psi_2={}&- l^{a} m^{b} \bar{m}^{d} n^{c} C_{abcd}
\end{align}
In this case, the Weyl spinor takes the form  
$$
\Psi_{ABCD} = \frac{1}{6} \Psi_2 o_{(A} o_B \iota_C \iota_{D)}.
$$
See \eqref{eq:Psi2-Kerr} below for the explicit form of $\Psi_2$ in the Kerr spacetime. 

The following result is a consequence of the Bianchi identity. 
\begin{theorem}[\cite{walker:penrose:1970CMaPh..18..265W}]
Assume $(\Mcal, g_{ab})$ is a vacuum spacetime of Petrov type D. Then $(\Mcal, g_{ab})$ admits a one-dimensional space of Killing spinors $\kappa_{AB}$ of the form 
\begin{equation}\label{eq:kappa11} 
\kappa_{AB} = -2 \kappa_1 o_{(A} \iota_{B)}
\end{equation}
where $o_A, \iota_A$ are the principal spinors of $\Psi_{ABCD}$ and $\kappa_1 \propto \Psi_2^{-1/3}$. 
\end{theorem}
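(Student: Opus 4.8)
The plan is to work throughout in the GHP formalism with the dyad $(o_A,\iota_A)$ aligned to the two repeated principal spinors of the Weyl spinor, reducing the Killing spinor equation $(\sTwist_{2,0}\kappa)_{ABCA'}=0$ first to an algebraic statement about the principal spinors of $\kappa_{AB}$ and then to a small overdetermined first-order system for a single scalar, which can be solved explicitly in terms of $\Psi_2$. For the setup: since $(\Mcal,\met_{ab})$ is of Petrov type $\PetrovD$, at each point the Weyl spinor has exactly two distinct principal spinors, each of multiplicity two; normalize a dyad $(o_A,\iota_A)$ with $o_A\iota^A=1$ aligned with them, so that $\Psi_{ABCD}$ is a nonvanishing multiple of $o_{(A}o_B\iota_C\iota_{D)}$ as displayed above, with coefficient proportional to $\Psi_2$. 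By the Goldberg--Sachs theorem the congruences of $\NPl^a=o^A\bar o^{A'}$ and $\NPn^a=\iota^A\bar\iota^{A'}$ are geodetic and shear-free, so the GHP spin coefficients obey $\kappa=\sigma=0$ and $\kappa'=\sigma'=0$. Writing the vacuum Bianchi identity $\nabla^{DD'}\Psi_{ABCD}=0$ in this dyad, and using $\Psi_0=\Psi_1=\Psi_3=\Psi_4=0$, leaves precisely the four scalar relations
\begin{align}
\tho\Psi_2=3\rho\,\Psi_2,\quad\edt\Psi_2=3\tau\,\Psi_2,\quad\tho'\Psi_2=3\rho'\,\Psi_2,\quad\edt'\Psi_2=3\tau'\,\Psi_2.\label{eq:BianchiTypeD}
\end{align}

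The first step is to pin down the algebraic form of $\kappa_{AB}$. Write a general symmetric $\kappa_{AB}\in\SymSpinSec_{2,0}$ as $\kappa_{AB}=\alpha_{(A}\beta_{B)}$ in terms of its principal spinors. Since $\alpha^A\alpha^B\kappa_{AB}=0$ identically, contracting the Killing spinor equation with $\alpha^A\alpha^B\alpha^C$ and then differentiating this identity along $\alpha$ shows that, provided $\alpha$ and $\beta$ are not proportional, $\alpha^A\alpha^C\nabla_{CC'}\alpha_A=0$; that is, $\alpha^A\bar\alpha^{A'}$ generates a geodetic, shear-free null congruence, and likewise for $\beta$. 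By the converse direction of the Goldberg--Sachs theorem, in a vacuum spacetime such a congruence is a repeated principal null direction of the Weyl tensor. Hence $[\alpha]$ and $[\beta]$ must coincide with the two repeated principal spinors of $\Psi_{ABCD}$, and the degenerate possibility $[\alpha]=[\beta]$ is excluded because a coincident pair of principal spinors of a valence $(2,0)$ Killing spinor would force a principal spinor of the Weyl spinor of multiplicity at least three, impossible in type $\PetrovD$. Therefore $\kappa_{AB}=-2\kappa_1\,o_{(A}\iota_{B)}$ for some scalar field $\kappa_1$.

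The second step is to determine $\kappa_1$. Substituting $\kappa_{AB}=-2\kappa_1\,o_{(A}\iota_{B)}$ into $(\sTwist_{2,0}\kappa)_{ABCA'}=0$ and expanding into GHP scalar components, the vanishing of $\kappa,\sigma,\kappa',\sigma'$ collapses the system to
\begin{align}
\tho\kappa_1=-\rho\,\kappa_1,\quad\edt\kappa_1=-\tau\,\kappa_1,\quad\tho'\kappa_1=-\rho'\,\kappa_1,\quad\edt'\kappa_1=-\tau'\,\kappa_1,\label{eq:kappaOne}
\end{align}
with the remaining components becoming identities by virtue of the type $\PetrovD$ structure and the GHP commutators. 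Comparing \eqref{eq:kappaOne} with \eqref{eq:BianchiTypeD} shows that $\kappa_1=\Psi_2^{-1/3}$ (a fixed local branch of the cube root, which is well defined since $\Psi_2\neq0$) solves \eqref{eq:kappaOne}, so $\kappa_{AB}=-2\Psi_2^{-1/3}o_{(A}\iota_{B)}$ is a nontrivial Killing spinor; this gives existence. Conversely, if $\tilde\kappa_1$ is any other solution of \eqref{eq:kappaOne} then $\tilde\kappa_1\Psi_2^{1/3}$ has all four of its GHP directional derivatives zero, hence is locally constant, so the solution space of \eqref{eq:kappaOne} is one-dimensional; together with Step 1 this yields $\dim\KillSpin_{2,0}=1$ and $\kappa_1\propto\Psi_2^{-1/3}$. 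The one genuinely computational point, and the main obstacle, is this GHP expansion of the twistor operator: checking that exactly the four conditions $\kappa=\sigma=\kappa'=\sigma'=0$ reduce its eight components to \eqref{eq:kappaOne} without residual obstruction is the Bianchi-identity bookkeeping that underlies the theorem; the Goldberg--Sachs input in Step 1 is classical, and everything past the reduction is the short algebraic comparison with \eqref{eq:BianchiTypeD}.
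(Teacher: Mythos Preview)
Your argument is correct and is precisely the standard route to this result. The paper itself does not give a proof; it states only that the theorem ``is a consequence of the Bianchi identity'' and cites Walker--Penrose, so your Step~2 comparison of the reduced system \eqref{eq:kappaOne} for $\kappa_1$ against the type~$\PetrovD$ vacuum Bianchi relations $\tho\Psi_2=3\rho\Psi_2$, etc., is exactly the mechanism the paper has in mind.

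One remark on Step~1: your route via the principal spinors of $\kappa_{AB}$ and the converse Goldberg--Sachs theorem is valid, but slightly more elaborate than necessary. The more direct alternative is simply to expand $(\sTwist_{2,0}\kappa)_{ABCA'}=0$ in the principal dyad for a general $\kappa_{AB}$ with components $\kappa_0,\kappa_1,\kappa_2$; using $\kappa=\sigma=\kappa'=\sigma'=0$ one finds immediately that the extreme-weight components of the system force $\tho\kappa_0=\edt\kappa_0=0$ together with $\rho\kappa_0=\tau\kappa_0=0$ (and primed analogues for $\kappa_2$), so $\kappa_0=\kappa_2=0$ unless $\rho=\tau=0$, which cannot hold throughout a type~$\PetrovD$ region by \eqref{eq:BianchiTypeD} since $\Psi_2\neq0$. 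This bypasses Goldberg--Sachs entirely and also handles the degenerate case $[\alpha]=[\beta]$ in one stroke. Your exclusion of that case via the algebraic integrability condition $\Psi_{(ABC}{}^{F}\kappa_{D)F}=0$ is correct (a repeated principal spinor of $\kappa_{AB}$ would indeed force a triple principal spinor of $\Psi_{ABCD}$), but it is worth knowing that the direct GHP expansion makes Step~1 and Step~2 into a single computation.
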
 
\begin{remark}
Since the Petrov classes are exclusive, we have that $\Psi_2 \ne 0$ for a Petrov type D space. 
\end{remark} 

\subsection{Spacetimes admitting a Killing spinor} \label{sec:KillSpinSpacetime} 
Differentiating the Killing spinor equation $(\sTwist_{k,l} \phi)_{A \cdots D A' \cdots D'} = 0$,  and commuting derivatives yields
an algebraic relation between the curvature, Killing spinor, and their covariant derivatives which restrict the curvature spinor, see \cite[\S 2.3]{ABB:symop:2014CQGra..31m5015A}, see also \cite[\S 3.2]{2015arXiv150402069A}. 
In particular, for a Killing spinor $\kappa_{A\cdots D}$ of valence $(k,0)$, $k \geq 1$, the condition 
\begin{equation}\label{eq:killspin-cond-general}
\Psi_{(ABC}{}^F \kappa_{D\cdots E ) F} = 0
\end{equation}
must hold, which restricts the algebraic type of  the Weyl spinor. 
For a valence $(2,0)$ Killing spinor $\kappa_{AB}$, the condition takes the form
\begin{align} 
\label{eq:curv-restrict-valence-2}
\Psi_{(ABC}{}^E \kappa_{D)E} ={}& 0
\end{align}
	It follows from \eqref{eq:curv-restrict-valence-2} that a spacetime admitting a valence $(2,0)$ Killing spinor is of type $\PetrovD, \PetrovN$, or $\PetrovO$. The space of Killing spinors of valence $(2,0)$ on Minkowski space (or any space of Petrov type $\PetrovO$) has complex dimension 10. The explicit form  in Cartesian coordinates $x^{AA'}$ is
$$
\kappa^{AB} = U^{AB} + 2x^{A'(A} V^{B)}{}_{A'} + x^{AA'} x^{BB'} W_{A'B'},
$$
where $U^{AB}, V^B{}_{A'}, W^{A'B'}$ are arbitrary constant symmetric spinors, see\cite[Eq. (4.5)]{aksteiner:thesis}. One of these corresponds to the spinor in \eqref{eq:kappa11}, in spheroidal coordinates it takes the form given in \eqref{eq:kappaKerrNewman} below. 

A further application of the commutation properties of the fundamental operators yields that the 1-form  
\begin{equation}\label{eq:xikappadef}
\xi_{AA'} = (\sCurlDagger_{2,0} \kappa)_{AA'},
\end{equation} 
is a Killing field, $\nabla_{(a} \xi_{b)} = 0$, provided $\Mcal$ is vacuum. 
Clearly the real and imaginary parts of $\xi_a$ are also Killing fields. 
If $\xi_a$ is proportional to a real Killing field\footnote{We say that such spacetimes are of the generalized Kerr-NUT class, see \cite{backdahl:valiente-kroon:2010PhRvL.104w1102B} and references therein.}, we can without loss of generality assume that $\xi_a$ is real.
In this case, the 2-form  
\begin{equation}\label{eq:Yab-form}
Y_{ab} = \tfrac{3}{2}  i (\kappa_{AB}\bar\epsilon_{A'B'}  -  \bar{\kappa}_{A'B'}\epsilon_{AB})
\end{equation} 
is a Killing-Yano tensor, $\nabla_{(a} Y_{b)c} = 0$, and the symmetric 2-tensor 
\begin{equation}\label{eq:K=YY} 
K_{ab} =  Y_a{}^c Y_{cb}
\end{equation}  
is a Killing tensor, 
\begin{equation}\label{eq:Ktenseq}
\nabla_{(a} K_{bc)} = 0. 
\end{equation}
Further, in this case,
\begin{equation}\label{eq:zeta=xiK}
\zeta_a = \xi^b K_{ab} 
\end{equation}
is a Killing field, see \cite{1973CMaPh..33..129H,1977CMaPh..56..277C}. 
Recall that the quantity $L_{ab} \dot \gamma^a \dot \gamma^b$ is conserved along null geodesics if $L_{ab}$ is a conformal Killing tensor. For Killing tensors, this fact extends to all geodesics, so that if $K_{ab}$ is a Killing tensor, then $K_{ab} \dot \gamma^a \dot \gamma^b$ is conserved along a geodesic $\gamma^a$. See \cite{2015arXiv150402069A} for further details and references.

\subsection{GHP formalism} \label{sec:GHP}
Taking the point of view that the null tetrad components of tensors are sections of complex line bundles with action of the non-vanishing complex scalars corresponding to the rescalings of the tetrad, respecting the normalization, leads to the GHP formalism \cite{GHP}. 

Given  
a null tetrad $\NPl^a, \NPn^a, \NPm^a, \NPmbar^a$ we have a spin dyad $o_A, \iota_A$ as discussed above. For a spinor $\varphi_{A\cdots D} \in \SymSpinSec_{k,0}$, it is convenient to introduce the Newman-Penrose  scalars 
\begin{equation}\label{eq:varphi_i-def}
\varphi_i = \varphi_{A_1 \cdots A_i A_{i+1} \cdots A_k} \iota^{A_1} \cdots \iota^{A_i} o^{A_{i+1}} \cdots o^{A_k}.
\end{equation} 
In particular, $\Psi_{ABCD}$ corresponds to the five complex Weyl scalars 
$\Psi_i, i = 0, \dots 4$.  The definition $\varphi_i$ extends in a natural way to the scalar components of spinors of valence $(k,l)$.

The normalization \eqref{eq:dyad-normalization} is left invariant under rescalings $o_A \to \lambda o_A$, $\iota_A \to \lambda^{-1} \iota_A$ where $\lambda$ is a non-vanishing complex scalar field on $\Mcal$. Under such rescalings, the scalars defined by projecting on the dyad, such as $\varphi_i$ given by \eqref{eq:varphi_i-def} transform as sections of complex line bundles. A scalar $\varphi$ is said to have type $\{p,q\}$ if $\varphi \to \lambda^p \bar\lambda^q \varphi$ under such a rescaling. Such fields are called properly weighted. The lift of the Levi-Civita connection $\nabla_{AA'}$ to these bundles gives a covariant derivative denoted $\Theta_a$. Projecting on the null tetrad $\NPl^a, \NPn^a, \NPm^a, \NPmbar^a$ gives the GHP operators 
$$
\tho = \NPl^a \Theta_a, \quad \tho' = \NPn^a\Theta_a, \quad \edt = \NPm^a \Theta_a , \quad 
\edt' = \NPmbar^a \Theta_a.
$$
The GHP operators are properly weighted, in the sense that they take properly weighted fields to properly weighted fields, for example if $\varphi$ has type $\{p,q\}$, then $\tho \varphi$ has type $\{p+1, q+1\}$. This can be seen from the fact that $\NPl^a = o^A \bar{o}^{A'}$ has type $\{1,1\}$. 
There are 12 connection coefficients in a null frame, up to complex conjugation. Of these, 8 are properly weighted, the GHP spin coefficients. The other connection coefficients enter in the connection 1-form for the connection $\Theta_a$. 

The following formal operations take weighted quantities to weighted
quantities,
\begin{equation}\label{eq:ghpsym}
\begin{aligned}
^-(\text{bar})&: \; \NPl^a \to \NPl^a, \;   \NPn^a \to \NPn^a, \;  \NPm^a
\to \NPmbar^a,	\; \NPmbar^a \to \NPm^a, &\{p,q\}\to\{q,p\} ,\\
'(\text{prime})&: \; \NPl^a \to \NPn^a, \;   \NPn^a \to \NPl^a, \;
\NPm^a \to \NPmbar^a,  \; \NPmbar^a \to \NPm^a, &\{p,q\}\to\{-p,-q\} ,\\
^*(\text{star})&: \; \NPl^a \to \NPm^a , \;  \NPn^a \to -\NPmbar^a, \;
\NPm^a \to -\NPl^a, \; \NPmbar^a \to \NPn^a, &\{p,q\}\to\{p,-q\} .
\end{aligned}
\end{equation}
The properly weighted spin coefficients can be represented as 
\begin{align}
\kappa &=  \NPm^b \NPl^a \nabla_a \NPl_b , \quad
\sigma =  \NPm^b \NPm^a \nabla_a \NPl_b, \quad
\rho   =  \NPm^b \NPmbar^a \nabla_a \NPl_b , \quad
\tau   = \NPm^b \NPn^a \nabla_a \NPl_b ,
\label{eq:spincoeff-def}
\end{align}
together with their primes $\kappa', \sigma', \rho', \tau'$.

A systematic application of the above formalism allows one to write the tetrad projection of the geometric field equations in a compact form. For example, the Maxwell equation corresponds to the four scalar equations given by 
\begin{align}
(\tho -2\rho)\phi_1 -({\edt}'-\tau')\phi_0 = -\kappa \phi_2 , \label{eq:ghpmaxwell}
\end{align}
with its primed and starred versions.

Working in a spacetime of Petrov type $\PetrovD$ gives drastic simplifications, in view of the fact that 
choosing the null tedrad so that $\NPl^a$, $\NPn^a$ are aligned with principal null directions of the Weyl tensor (or equivalently choosing the spin dyad so that $o_A, \iota_A$ are principal spinors of the Weyl spinor), as has already been mentioned, the Weyl scalars are zero with the exception of $\Psi_2$, and the only non-zero spin coefficients are $\rho, \tau$ and their primed versions.

\section{The Kerr spacetime} \label{sec:kerrspacetime} 
Taking into account the background material given in section \ref{sec:prel}, we can now state some further properties of the Kerr spacetime. 
As mentioned above, the Kerr metric is algebraically special, of Petrov type $\PetrovD$. 
An explicit principal null tetrad $(\NPl^a, \NPn^a, \NPm^a, \NPmbar^a)$ is given by the Carter tetrad \cite{znajek:1977MNRAS.179..457Z}  
\begin{subequations}\label{eq:KerrTetrad}
\begin{align}
\NPl^{a}={}&\frac{a (\partial_\phi)^{a}}{\sqrt{2} \KDelta^{1/2} \KSigma^{1/2}}
 + \frac{(a^2 + r^2)(\partial_t)^{a} }{\sqrt{2} \KDelta^{1/2} \KSigma^{1/2}}
 + \frac{\KDelta^{1/2}(\partial_r)^{a} }{\sqrt{2} \KSigma^{1/2}},\\
\NPn^{a}={}&\frac{a (\partial_\phi)^{a}}{\sqrt{2} \KDelta^{1/2} \KSigma^{1/2}}
 + \frac{(a^2 + r^2)(\partial_t)^{a} }{\sqrt{2} \KDelta^{1/2} \KSigma^{1/2}}
 -  \frac{\KDelta^{1/2}(\partial_r)^{a} }{\sqrt{2} \KSigma^{1/2}},\\
\NPm^{a}={}&\frac{(\partial_\theta)^{a}}{\sqrt{2} \KSigma^{1/2}}
 + \frac{i \csc\theta(\partial_\phi)^{a} }{\sqrt{2} \KSigma^{1/2}}
 + \frac{i a \sin\theta (\partial_t)^{a} }{\sqrt{2} \KSigma^{1/2}}.
\end{align}
\end{subequations}
In view of the normalization of the tetrad, the metric takes the form $\met_{ab} = 2 (\NPl_{(a} \NPn_{b)} - \NPm_{(a} \NPmbar_{b)})$. We remark that the choice of $\NPl^a$, $\NPn^a$ to be aligned with the principal null directions of the Weyl tensor, together with the normalization of the tetrad fixes the tetrad up to rescalings. 

We have 
\begin{align}
\Psi_2={}& - \frac{M}{(r -i a \cos\theta)^3}. \label{eq:Psi2-Kerr} \\ 
\kappa_{AB} ={}& \tfrac{2}{3} (r - i a \cos\theta) o_{(A}\iota_{B)},\label{eq:kappaKerrNewman} 
\end{align}
With $\kappa_{AB}$ as in \eqref{eq:kappaKerrNewman}, equation \eqref{eq:xikappadef} yields
\begin{align}\label{eq:xiKerr}
\xi^a={}&(\partial_t)^a,
\end{align}
and from \eqref{eq:Yab-form} we get 
\begin{align}
Y_{ab} ={}&  a \cos\theta \NPl_{[a} \NPn_{b]} -i r \NPm_{[a}  \NPmbar_{b]}  
\end{align}
With the normalizations above, the Killing tensor \eqref{eq:K=YY} takes the form 
\begin{align}
K_{ab} ={}&  \tfrac{1}{4} ( 2\KSigma l_{(a} n_{b)} - r^2 g_{ab} )    \label{eq:KDefinition}
\end{align}
and \eqref{eq:zeta=xiK} gives 
\begin{align}
\zeta^a={}&a^2 (\partial_t)^a
 + a (\partial_\phi)^a  . \label{eq:zetaKerr}
\end{align}
Recall that for a geodesic $\gamma$, the quantity $\GeodesicKCarter = 4 \TensorKCarter_{ab} \dot \gamma^a \dot \gamma^b$, known as Carter's constant, is conserved. Explicitely, 
\begin{align}
\GeodesicKCarter ={}& \dot \gamma_\theta^2 + a^2 \sin^2\theta \GeodesicEnergy^2 + 2a \GeodesicEnergy \GeodesicLz + a^2 \cos^2\theta \GeodesicMass^2 
\end{align}
where $\dot \gamma_\theta = \dot \gamma^a (\partial_\theta)_a$. For $a \ne 0$, the tensor $K_{ab}$ cannot be expressed as a tensor product of Killing fields \cite{walker:penrose:1970CMaPh..18..265W}, and similarly Carter's constant $\GeodesicKCarter$ cannot be expressed in terms of the constants of motion associated to Killing fields.  
In this sense $\TensorKCarter_{ab}$ and $\GeodesicKCarter$ manifest a \emph{hidden symmetry} of the Kerr spacetime. As we shall see in section \ref{sec:symop}, these structures are also related to symmetry operators and separability properties, as well as conservation laws, for field equations on Kerr, and more generally in spacetimes admitting Killing spinors satisfying certain auxiliary conditions.

\newcommand{\geodesic}{\gamma}
\newcommand{\geodesictangent}{\dot\gamma}
\newcommand{\CurlyR}{\mathcal{R}}
\newcommand{\geodq}{\mathbf{q}}

\newcommand{\Sphere}{\mathbb{S}}

\subsection{Characterizations of Kerr} \label{sec:characterization} 

Consider a vacuum Cauchy data set $(\SpaceSlice, \Spaceh_{ij}, \SpaceK_{ij})$. We say that $(\SpaceSlice, \Spaceh_{ij}, \SpaceK_{ij})$ is asymptotically flat if $\SpaceSlice$ has an end $\Re^3 \setminus B(0, R)$ with a coordinate system $(x^i)$ such that 
\begin{equation}\label{eq:asymptflat} 
\Spaceh_{ij} = \delta_{ij} + O_\infty(r^{\alpha}) , \quad \SpaceK_{ij} = O_\infty(r^{\alpha -1})
\end{equation} 
for some $\alpha < - 1/2$.  The Cauchy data set $(\SpaceSlice, \Spaceh_{ij}, \SpaceK_{ij})$ is asymptotically Schwarzschildean if 
\begin{subequations}\label{eq:boosteddecay}
\begin{eqnarray}
&& h_{ij} = -\left(1+\frac{2A}{r}\right)\delta_{ij} - \frac{\alpha}{r}\left( \frac{2x_ix_j}{r^2}-\delta_{ij}\right)+o_\infty(r^{-3/2}), \label{BoostedDecay1} \\
&& k_{ij} = \frac{\beta}{r^2}\left( \frac{2x_ix_j}{r^2}-\delta_{ij} \right) + o_\infty(r^{-5/2}),  \label{BoostedDecay2}
\end{eqnarray}
\end{subequations}
where $A$ is a constant, and $\alpha,\beta$ are functions on $S^2$, see \cite[\S 6.5]{backdahl:valiente-kroon:2010:MR2753388} for details. 
Here, the symbols 
$\ordo_\infty(r^{\alpha})$ are defined in terms of weighted Sobolev spaces, see \cite[\S 6.2]{backdahl:valiente-kroon:2010:MR2753388} for details.

If $(\Mcal, \met_{ab})$ is vacuum and contains a Cauchy surface $(\SpaceSlice, \Spaceh_{ij}, \SpaceK_{ij})$ satisfying \eqref{eq:asymptflat} or \eqref{eq:boosteddecay}, then $(\Mcal, \met_{ab})$ is asymptotically flat, respectively asymptotically Schwarzschildean, at spatial infinity.
In this case there is a spacetime coordinate system $(x^\alpha)$ such that $\met_{\alpha\beta}$ is asymptotic to the Minkowski line element with asymptotic conditions compatible with \eqref{eq:boosteddecay}. %
For such spacetimes, the ADM 4-momentum $P^\mu$ is well defined. The positive mass theorem states that $P^\mu$ is future directed causal $P^\mu P_\mu \geq 0$ (where the contraction is in the asymptotic Minkowski line element), $P^0 \geq 0$, and gives conditions under which $P^\mu$ is strictly timelike. This holds in particular if $\SpaceSlice$ contains an apparent horizon. 

Mars \cite{mars:2000CQGra..17.3353M}  
has given a characterization of the Kerr spacetime as an asymptotically flat vacuum spacetime with a Killing field $\xi^a$ asymptotic to a time translation, positive mass, and an additional condition on the Killing form $F_{AB} = (\sCurl_{1,1} \xi)_{AB}$, 
$$
\Psi_{ABCD} F^{CD} \propto F_{AB}
$$
A characterization in terms of algebraic invariants of the Weyl tensor has been given by Ferrando and Saez \cite{ferrando:saez:2009CQGra..26g5013F}. 
The just mentioned characterizations are in terms of spacetime quantities. As mentioned in section \ref{sec:KID} Killing spinor initial data propagates,  which
can be used to formulate a characterization of Kerr in terms of Cauchy data, see \cite{backdahl:valiente-kroon:2010PhRvL.104w1102B, backdahl:valiente-kroon:2010:MR2753388, backdahl:valente-kroon:2011RSPSA.467.1701B, backdahl:valiente-kroon:2012JMP....53d2503B}.

We here give a characterization in terms spacetimes admitting a Killing spinor of valence $(2,0)$. 
\begin{theorem} \label{thm:kerrchar} 
Assume that $(\Mcal, \met_{ab})$ is vacuum, asymptotically Schwarzschildean at spacelike infinity, and contains a Cauchy slice bounded by an apparent horizon. Assume further $(\Mcal, \met_{ab})$ admits a non-vanishing Killing spinor $\kappa_{AB}$ of valence $(2,0)$. Then $(\Mcal, \met_{ab})$ is locally isometric to the Kerr spacetime. 
\end{theorem}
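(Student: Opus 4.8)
The plan is to reduce the statement to Mars' characterisation of Kerr \cite{mars:2000CQGra..17.3353M}: I will use the Killing spinor $\kappa_{AB}$ to manufacture a Killing field $\xi^a$ whose Killing form automatically satisfies Mars' algebraic condition, and use the apparent horizon together with the positive mass theorem to pin down the asymptotic normalisation; this follows the strategy of \cite{backdahl:valiente-kroon:2010:MR2753388}. First I would fix the Petrov type. Since $\kappa_{AB} \in \KillSpin_{2,0}$ is non-vanishing, the curvature restriction \eqref{eq:curv-restrict-valence-2}, $\Psi_{(ABC}{}^E \kappa_{D)E} = 0$, forces $(\Mcal, \met_{ab})$ to be of Petrov type $\PetrovD$, $\PetrovN$ or $\PetrovO$ at every point. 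Because the Cauchy slice is bounded by an apparent horizon, the positive mass theorem (see section \ref{sec:characterization}) guarantees that the ADM four-momentum $P^\mu$ is strictly timelike, so the mass $M = \sqrt{P^\mu P_\mu}$ is positive and $(\Mcal, \met_{ab})$ is not flat; this rules out type $\PetrovO$. The asymptotically Schwarzschildean fall-off \eqref{eq:boosteddecay} with $A \neq 0$ then forces the Coulombic Weyl scalar $\Psi_2$ to be non-zero near spatial infinity (the curvature is asymptotic to that of Schwarzschild of mass $M$), which is incompatible with type $\PetrovN$, where $\Psi_2 \equiv 0$. Hence $(\Mcal, \met_{ab})$ is everywhere of Petrov type $\PetrovD$.

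Next I would exploit the rigidity of type $\PetrovD$ vacuum geometry. By the Walker--Penrose theorem \cite{walker:penrose:1970CMaPh..18..265W}, the space of valence $(2,0)$ Killing spinors is one-dimensional and, up to a constant, $\kappa_{AB} = -2\kappa_1 o_{(A}\iota_{B)}$ with $o_A,\iota_A$ the repeated principal spinors and $\kappa_1 \propto \Psi_2^{-1/3}$, cf.~\eqref{eq:kappa11}. I would then form the Killing field $\xi_{AA'} = (\sCurlDagger_{2,0}\kappa)_{AA'}$ of \eqref{eq:xikappadef}; a GHP computation using the type $\PetrovD$ Bianchi identities shows $\xi^a$ is non-vanishing (indeed $\xi^a = (\partial_t)^a$ in Kerr, cf.~\eqref{eq:xiKerr}), or alternatively $\xi^a=0$ would make $\kappa_{AB}$ a massless spin-$1$ field as well as a Killing spinor, which in type $\PetrovD$ forces flatness. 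By the operator commutator identities \cite{ABB:symop:2014CQGra..31m5015A} the Killing form $F_{AB} = (\sCurl_{1,1}\xi)_{AB}$ is built algebraically from $\kappa_{AB}$ and the curvature, hence aligned with the repeated principal spinors, $F_{AB} \propto o_{(A}\iota_{B)}$; since $\Psi_{ABCD} = \tfrac{1}{6}\Psi_2\, o_{(A}o_B\iota_C\iota_{D)}$ we likewise get $\Psi_{ABCD}F^{CD} \propto o_{(A}\iota_{B)} \propto F_{AB}$, which is precisely the algebraic condition on the Killing form in Mars' theorem.

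Finally I would match the asymptotics. Using \eqref{eq:boosteddecay} and expanding the Killing spinor equation $(\sTwist_{2,0}\kappa)=0$ at spatial infinity --- equivalently, using that Killing spinor initial data propagate (section \ref{sec:KID}) and analysing the induced data, following \cite{backdahl:valiente-kroon:2010:MR2753388} --- one finds that $\kappa_{AB}$ has the same leading behaviour as the Kerr Killing spinor \eqref{eq:kappaKerrNewman}, so that $\xi^a$ is asymptotic to a constant Killing vector. Its imaginary part is then a Killing field decaying at spatial infinity, hence vanishes, so $\xi^a$ may be taken real; since $P^\mu$ is timelike one may moreover pass to a rest slice in which $\xi^a$ is asymptotic to the time translation, and asymptotic flatness at spacelike infinity excludes any NUT-type parameter (Taub--NUT geometries are not asymptotically flat there). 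At this point all hypotheses of Mars' characterisation \cite{mars:2000CQGra..17.3353M} hold --- $(\Mcal, \met_{ab})$ is vacuum and asymptotically flat, it carries a Killing field $\xi^a$ asymptotic to a time translation with positive mass, and the Killing form obeys $\Psi_{ABCD}F^{CD} \propto F_{AB}$ --- so $(\Mcal, \met_{ab})$ is locally isometric to Kerr.

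The hard part will be this last step: extracting from the weighted asymptotic conditions \eqref{eq:boosteddecay} alone the precise leading asymptotics of $\kappa_{AB}$, and hence of $\xi^a$, and proving rigorously that $\xi^a$ is, up to a real constant, asymptotic to the stationary Killing field with vanishing NUT charge. This needs the elliptic estimates and the detailed asymptotic expansion of the Killing-spinor candidate developed in \cite{backdahl:valiente-kroon:2010:MR2753388}; by comparison, the Petrov-type reduction and the verification of Mars' algebraic condition are routine.
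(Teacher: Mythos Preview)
Your overall strategy---construct the Killing field $\xi^a$ from $\kappa_{AB}$, verify the hypotheses of a known Kerr characterisation, and invoke it---matches the paper's, and the paper likewise ends by quoting \cite[Theorem B.3]{backdahl:valente-kroon:2011RSPSA.467.1701B}. The differences are in how much you try to do by hand before invoking the black box, and there is one genuine gap and one missing ingredient.

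The gap is your claim that $(\Mcal,\met_{ab})$ is \emph{everywhere} of Petrov type $\PetrovD$. Your argument only shows this near spatial infinity: positive mass rules out that the spacetime is \emph{globally} type $\PetrovO$ (flat), and $\Psi_2\neq 0$ near infinity rules out type $\PetrovN$ \emph{there}; neither excludes pointwise degeneration to $\PetrovN$ or $\PetrovO$ in the interior. You then use the global type $\PetrovD$ structure to write $F_{AB}\propto o_{(A}\iota_{B)}$ and verify Mars' condition, so as written the argument is incomplete. (One can in fact derive $\Psi_{ABCD}F^{CD}\propto F_{AB}$ directly from the Killing spinor identities and the integrability condition \eqref{eq:curv-restrict-valence-2} without fixing the Petrov type pointwise, so the gap is repairable; but that is a different argument from the one you gave.) The paper sidesteps this entirely: it only uses Petrov type $\PetrovD$ \emph{near spatial infinity} to read off the asymptotic form $\kappa_1\propto\Psi_2^{-1/3}=O(r)$ of the Killing spinor, and never claims more.

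The missing ingredient concerns the asymptotics of $\xi^a$. You assert that one may ``pass to a rest slice in which $\xi^a$ is asymptotic to the time translation'', apparently conflating the limit $A^\mu$ of $\xi^a$ with the ADM momentum $P^\mu$. These are a priori unrelated vectors; the step that links them is the Beig--Chru\'sciel rigidity result \cite[\S III]{beig:chrusciel:1996JMP....37.1939B}: if $P^\mu P_\mu>0$, any Killing field asymptotic to a translation is asymptotic to a multiple of $P^\mu$. The paper invokes this explicitly (together with \cite[\S 4]{aksteiner:andersson:2013CQGra..30o5016A} to ensure $A^\mu\neq 0$), and it also handles the reality of $\xi^a$: applying Beig--Chru\'sciel to the real and imaginary parts separately shows both are proportional to $P^\mu$, so $\xi^a$ is a complex multiple of a real Killing field. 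Your reality argument (``the imaginary part decays, hence vanishes'') presupposes exactly what Beig--Chru\'sciel provides.

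Since you already concede that the asymptotic analysis requires the machinery of \cite{backdahl:valiente-kroon:2010:MR2753388, backdahl:valente-kroon:2011RSPSA.467.1701B}, the paper's route is more economical: rather than separately verifying Mars' algebraic condition, it goes straight to \cite[Theorem B.3]{backdahl:valente-kroon:2011RSPSA.467.1701B}, whose hypotheses are precisely ``asymptotically Schwarzschildean, non-vanishing $\kappa_{AB}\in\KillSpin_{2,0}$, and the associated $\xi^a$ asymptotic to a time translation''. Establishing the last of these via Beig--Chru\'sciel is the whole content of the proof.
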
 
\begin{proof} 
Let $P^\mu$ be the ADM 4-momentum vector for $\Mcal$. 
By the positive mass theorem, $P^\mu P_\mu \geq 0$. In the case where $\Mcal$ contains a Cauchy surface bounded by an apparent horizon, then $P^\mu P_\mu > 0$ by  
\cite[Remark 11.5]{chrusciel:bartnik:2003math......7278C}\footnote{Section 11 appears only in the ArXiv version of \cite{chrusciel:bartnik:2003math......7278C}.}. 

Recall that a spacetime with a Killing spinor of valence $(2,0)$ is of Petrov type $\PetrovD, \PetrovN$, or $\PetrovO$. 
From asymptotic flatness and the positive mass theorem, we have $C_{abcd} C^{abcd} = O(1/r^6)$, and hence there is a neighbourhood of spatial infinity where $\Mcal$ is Petrov type $\PetrovD$. It follows that near spatial infinity,  
$\kappa_{AB} = -2 \kappa_1 o_{(A} \iota_{B)}$, with $\kappa_1 \propto \Psi_2^{-1/3} = O(r)$. 
It follows from our asymptotic conditions that the Killing field 
$\xi_{AA'} = (\sCurlDagger_{2,0}\kappa)_{AB}$ is $O(1)$ and hence asymptotic to a translation, $\xi^\mu \to A^\mu$ as $r \to \infty$, for some constant vector $A^\mu$. It follows from the discussion in \cite[\S 4]{aksteiner:andersson:2013CQGra..30o5016A} that $A^\mu$ is non-vanishing. 
Now, by \cite[\S III]{beig:chrusciel:1996JMP....37.1939B}, it follows that in the case $P^\mu P_\mu > 0$, then $A^\mu$ is proportional to $P^\mu$, see also \cite{beig:omurchadha:1987AnPhy.174..463B}.
We are now in the situation considered in the work by 
B\"ackdahl and Valiente-Kroon, see \cite[Theorem B.3]{backdahl:valente-kroon:2011RSPSA.467.1701B}, and hence we can conclude that $(\Mcal, \met_{ab})$ is locally isometric to the Kerr spacetime. 
\end{proof}  

\begin{remark}
\begin{enumerate} 
\item 
This result can be turned into a characterization in terms of Cauchy data along the lines in \cite{backdahl:valiente-kroon:2010:MR2753388}.
\item Theorem \ref{thm:kerrchar} can be viewed as a variation on the Kerr characterization given in 
\cite[Theorem B.3]{backdahl:valente-kroon:2011RSPSA.467.1701B}. In the version given here, the asymptotic conditions on the Killing spinor have been removed.  
\end{enumerate} 
\end{remark}

\section{Monotonicity and dispersion} \label{sec:monotonicity}
The dispersive properties of fields, i.e. the tendency of the energy density contained within any stationary region to decrease asymptotically to the future is a crucial property for solutions of field equations on spacetimes, and any proof of stability must exploit this phenomenon. In view of the geometric optics approximation, the dispersive property of fields can be seen in an analogous dispersive property of null geodesics, i.e. the fact that null geodesics in the Kerr spacetime which do not orbit the black hole at a fixed radius must leave any stationary region in at least one of the past or future directions. In section \ref{sec:geodesics} we give an explanation for this fact using tools which can readily be adapted to the case of field equations, while in section \ref{sec:fields} we outline sketch how these ideas apply to fields.

We begin by a discussion of conservation laws. For a null geodesic $\Geodesic^a$, we define the energy associated with  a vector field $\vecX$ and evaluated on a Cauchy hypersurface $\HypersurfaceGeneral$ to be
\begin{align*}
\GenEnergyGeodesic{\vecX}[\Geodesic](\HypersurfaceGeneral)
&= \gMetric_{ab}\vecX^a \dot\Geodesic^b |_{\HypersurfaceGeneral}. 
\end{align*}
Since $\dot\Geodesic^b \nabla_b \dot\Geodesic^a=0$ for a geodesic, integrating the derivative of the energy gives  
\begin{align}
\GenEnergyGeodesic{\vecX}[\Geodesic](\HypersurfaceGeneral_2)-\GenEnergyGeodesic{\vecX}[\Geodesic](\HypersurfaceGeneral_1)
=& \int_{\lambda_1}^{\lambda_2} (\dot\Geodesic_a \dot\Geodesic_b) \nabla^{(a}\vecX^{b)} \di\lambda ,
\label{eq:deformForGeodesics}
\end{align}
where $\lambda_i$ is the unique value of $\lambda$ such that $\Geodesic(\lambda)$ is the intersection of $\Geodesic$ with $\HypersurfaceGeneral_i$. 
Formula \eqref{eq:deformForGeodesics} is particularly easy to work with, if one recalls that 
\begin{align*}
\nabla^{(a}\vecX^{b)}&= -\frac12 \Lie_{\vecX}\gMetric^{ab} .
\end{align*}
The tensor $\nabla^{(a}\vecX^{b)}$ is commonly called the ``deformation tensor''. In the following, unless there is room for confusion, we will drop reference to $\Geodesic$ and $\HypersurfaceGeneral$ in referring to $\GenEnergyGeodesic{\vecX}$.

Conserved quantities play a crucial role in understanding the
behaviour of geodesics as well as fields. By \eqref{eq:deformForGeodesics}, the energy $\GenEnergyGeodesic{\vecX}$ is conserved if $\vecX^a$ is a Killing field. In the Kerr spacetime we have the Killing fields $\xi^a = (\partial_t)^a$, $\eta^a = (\partial_\phi)^a$ with the corresponding conserved quantities energy $\GeodesicEnergy=(\partial_t)^a \geodesictangent_a$ and azimuthal angular momentum  
$\GeodesicLz =(\partial_\phi)^a \geodesictangent_a$. In addition, the  squared particle mass $\GeodesicMass=\met_{ab}\geodesictangent^a\geodesictangent^b$, and the Carter constant 
$\GeodesicKCarter=\TensorKCarter_{ab} \geodesictangent^a\geodesictangent^b$ are conserved along any geodesic $\geodesic^a$ in the Kerr spacetime. The presence of the extra conserved
quantity allows one to integrate
the equations of geodesic motion\footnote{In general, the geodesic equation in a 4-dimensional stationary and axi-symmetric spacetime cannot be integrated, and the dynamics of particles may in fact be chaotic, see \cite{2008PhRvD..77b4035G, 2010PhRvD..81l4005L} and references therein. Note however that the geodesic equations are not \emph{separable} in the Boyer-Lindquist coordinates. On the other hand, the Darboux coordinates have this property, cf. \cite{frolov2011introduction}.}.

For a covariant field equation derived from an action principle which depends on the background geometry only via the metric and its derivatives, 
the symmetric stress-energy tensor $T_{ab}$ is conserved. As an example, we consider the wave equation 
\begin{equation}\label{eq:wave}
\nabla^a \nabla_a \psi = 0
\end{equation} 
which has stress-energy tensor 
\begin{equation}\label{eq:Tab-wave}
T_{ab} = \nabla_{(a} \psi \nabla_{b)} \bar \psi - \half \nabla^c \psi \nabla_c \bar \psi g_{ab}
\end{equation} 
Let $\psi$ be a solution to \eqref{eq:wave}. Then $T_{ab}$ is conserved, $\nabla^a T_{ab} = 0$. For a vector field $\vecX^a$ we have that $\nabla^a (T_{ab} \vecX^b)$ is given in terms of the deformation tensor,
$$
\nabla^a (T_{ab} \vecX^b) = T_{ab} \nabla^{(a} \vecX^{b)}
$$
Let $(J_{\vecX})_a = T_{ab} \vecX^b$ be the current corresponding to $\vecX^a$. 
By the above, we have conserved currents $J_\xi$ and $J_\eta$ corresponding to the Killing fields $\xi^a$, $\eta^a$. 

An application of Gauss' law gives the analog of \eqref{eq:deformForGeodesics}, 
$$
\int_{\Sigma_2} (J_{\vecX})_a d\sigma^a - \int_{\Sigma_1} (J_{\vecX})_a d\sigma^a = \int_{\Omega} T_{ab} \nabla^{(a} \vecX^{b)}
$$
where $\Omega$ is a spacetime region bounded by $\Sigma_1$, $\Sigma_2$.

\subsection{Monotonicity for null geodesics} \label{sec:geodesics} 
We shall consider only null geodesics, i.e.  $\GeodesicMass=0$. In this case we have 
\begin{align} 
\GeodesicKCarter ={}& \TensorKCarter_{ab} \dot \gamma^a \dot \gamma^b \nonumber \\
={}& 2 \Sigma \NPl_{(a} \NPn_{b)} \dot \gamma^a \dot \gamma^b  \nonumber \\ 
={}& 2 \Sigma \NPm_{(a} \NPmbar_{b)} \dot \gamma^a \dot \gamma^b \label{eq:mmbar} 
\end{align} 
We note that the tensors $2\Sigma \NPl_{(a} \NPn_{b)}$ and $2\Sigma \NPm_{(a} \NPmbar_{b)}$ are conformal Killing tensors, see section \ref{sec:KillingSpinors}. From \eqref{eq:mmbar} it is clear that $\GeodesicKCarter$ is non-negative. 
A calculation using \eqref{eq:KerrTetrad} gives 
\begin{align*}
2\Sigma \NPl^{(a} \NPn^{b)} \partial_a \partial_b ={}& \frac{1}{\Delta} [ (r^2 + a^2) \partial_t + a \partial_\phi]^2 - \Delta \partial_r^2 \\ 
2 \Sigma \NPm^{(a} \NPmbar^{b)} \partial_a \partial_b ={}& 
\partial_\theta^2 + \frac{1}{\sin^2\theta} \partial_\phi^2 + a^2 \sin^2 \theta \partial_t^2 + 2 a \partial_t \partial_\phi
\end{align*}  
Let $Z = (r^2 + a^2) \GeodesicEnergy + a \GeodesicLz$. Recall that $\dot r = \dot \gamma^r = g^{rr} \dot \gamma_r$ where $g^{rr} = - \KDelta/\KSigma$. Now we can write $0 = g_{ab} \dot \gamma^a \dot\gamma^b$ in the form 
\begin{align} 
\Sigma^2 \dot r^2 + \CurlyR(r;\GeodesicEnergy,\GeodesicLz,\GeodesicKCarter)={}& 0 \label{eq:Kerrnullgeod} \\
\intertext{where} 
\CurlyR =- Z^2 + \Delta \GeodesicKCarter \label{eq:CurlyR=Z2+DelK}
\end{align}
Equation \eqref{eq:Kerrnullgeod} is the exact analog of \eqref{eq:Schw:nullgeodr} for the Schwarzschild case. It is clear from \eqref{eq:mmbar} that for null geodesics, $\GeodesicKCarter$ corresponds, in the Schwarzschild case with $a=0$, to $\GeodesicLsquared$, the squared total angular momentum. It is possible to derive equations similar to \eqref{eq:Kerrnullgeod} for the other coordinates $t,\theta,\phi$, which allows the solution of the geodesic equations by quadratures, see eg. \cite{Teo} for details.

Equation \eqref{eq:Kerrnullgeod} allows one to make a qualitative analysis of the motion of null geodesics in the Kerr spacetime. In particular, we find that the location of orbiting null geodesics is determined by $\CurlyR=0$, $\partial_r \CurlyR = 0$.  
Due to the form of $\CurlyR$, the location of orbiting null geodesics depends only on the ratios $\GeodesicKCarter/\GeodesicLz^2, \GeodesicEnergy/\GeodesicLz$. One finds that orbiting null geodesics exist for a range of radii 
$r_1 \leq r \leq r_2$, with $r_+ < r_1 < 3M < r_2$.  Here $r_1, r_2$ depend on $a,M$ and as  $|a| \nearrow M$, $r_1 \searrow r_+$, and $r_2 \nearrow 4M$.
The orbits at $r_1, r_2$ are restricted to the equatorial plane, those at $r_1$ are  corotating, while those at $r_2$ are counterrotating. 
For $r_1 < r < r_2$, the range of $\theta$ depends on $r$. 
There is $r_3 = r_3(a,M)$, $r_1 < r_3 < r_2$ such that the orbits at $r_3$ reach the poles, i.e. $\theta = 0$, $\theta =\pi$, see figure \ref{fig:photonsphere}. For such geodesics, it holds that $\GeodesicLz = 0$.

\begin{figure}
    \centering
    \begin{subfigure}[b]{0.4\textwidth}
         \includegraphics[width=.7\textwidth]{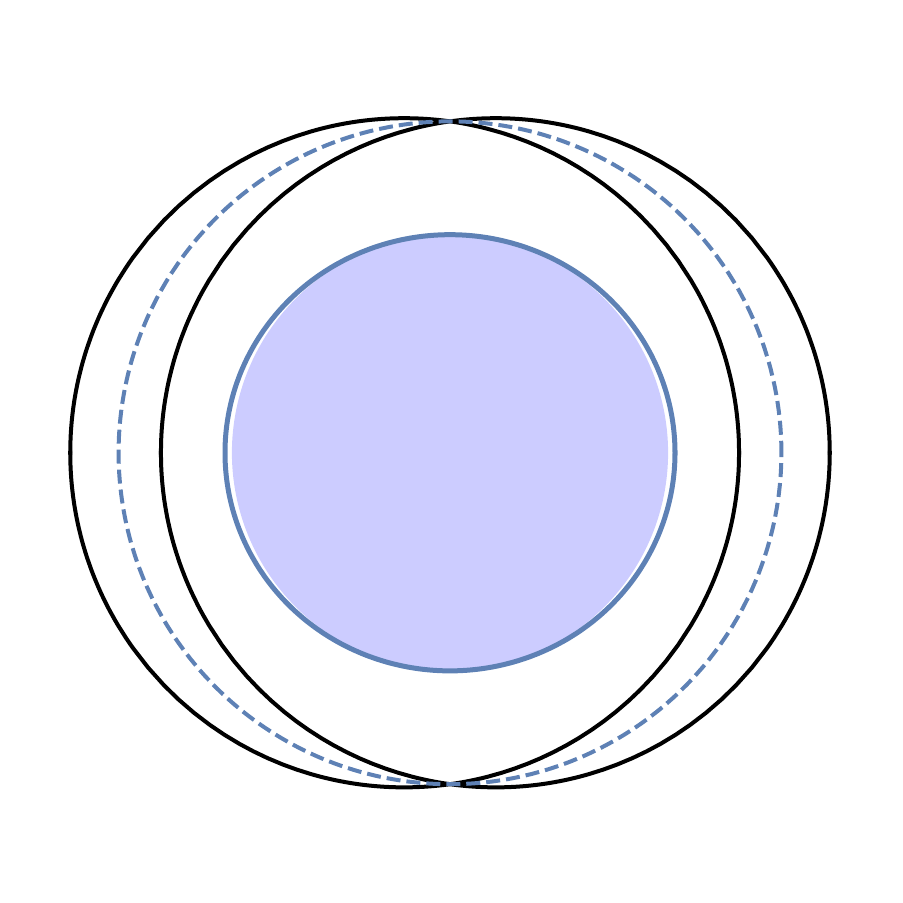}
\caption{}
                \label{fig:Kerr-trap-low}
    \end{subfigure}
    ~ %
 \begin{subfigure}[b]{0.4\textwidth}
         \includegraphics[width=.7\textwidth]{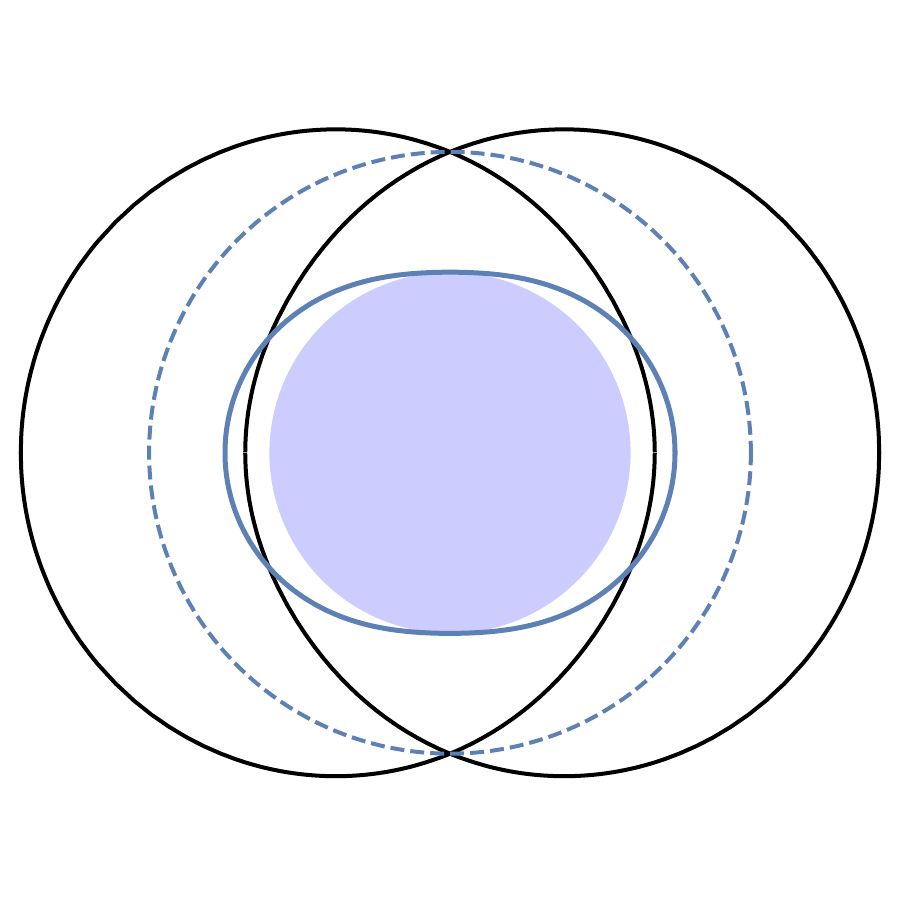}
\caption{}
 \label{fig:Kerr-trap-high}
    \end{subfigure}   
\caption{The Kerr photon region. In subfigure (\subref{fig:Kerr-trap-low}), $|a| \ll M$ and the ergoregion, see section \ref{sec:kerrmetric}, is well separated from the photon region (bordered in black). The radius $r_3$ where geodesics reach the poles is indicated by a grey, dashed line. In subfigure (\subref{fig:Kerr-trap-high}), $|a|$ is close to $M$ and the ergoregion overlaps the photon region.}
 \label{fig:photonsphere}
\end{figure}

\begin{figure}
    \centering
    \begin{subfigure}[b]{0.4\textwidth}
         \includegraphics[width=.7\textwidth]{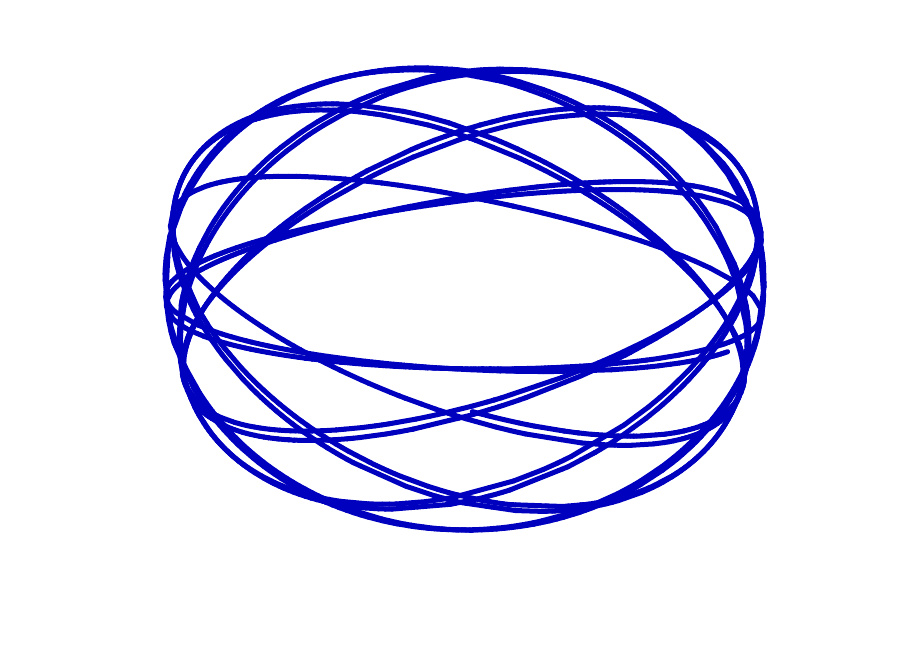}
\caption{}
                \label{fig:Kerrgeod1}
    \end{subfigure}
    ~ %
 \begin{subfigure}[b]{0.33\textwidth}
         \includegraphics[width=.7\textwidth]{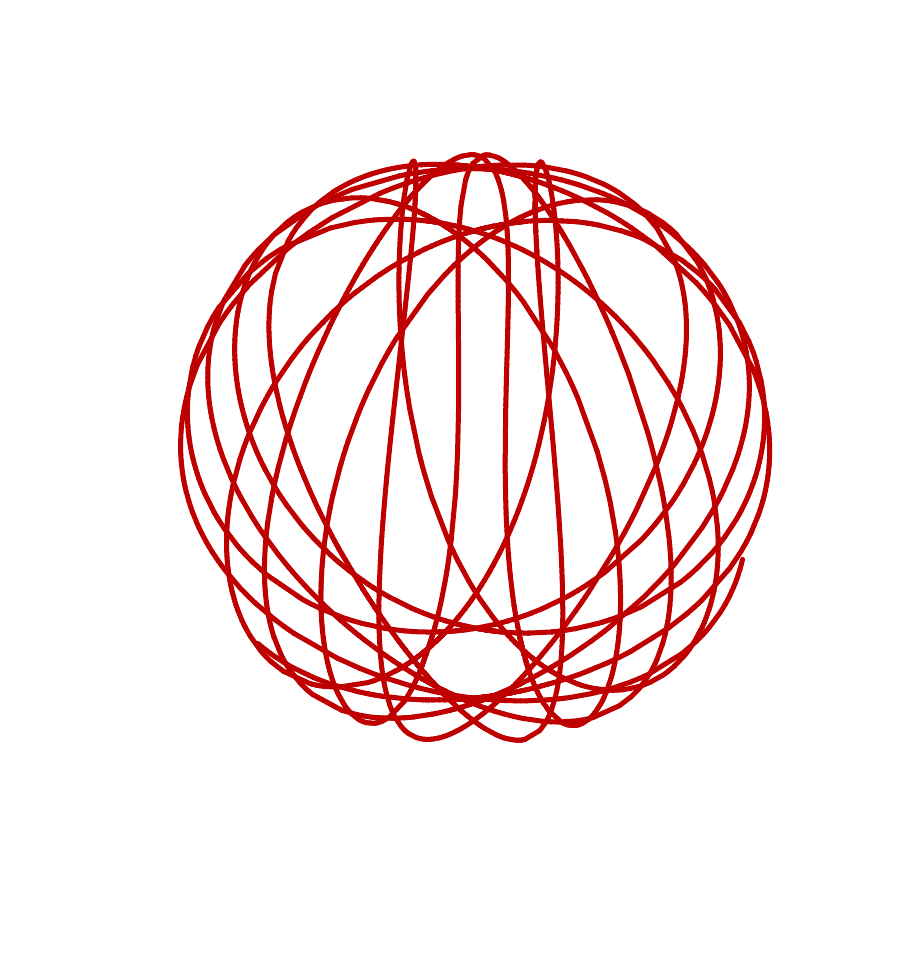}
\caption{}
 \label{fig:Kerrgeod2}
    \end{subfigure}   
\caption{Examples of orbiting null geodesics in Kerr with $a=M/2$. In subfigure (\subref{fig:Kerrgeod1}), the $\GeodesicKCarter/\GeodesicLz^2$ is small, while in subfigure (\subref{fig:Kerrgeod2}), this constant is larger.}
 \label{fig:Kerrgeodesics}
\end{figure}

For the following discussion, it is convenient to
introduce 
\begin{align*}
\GeodesicQ
={}& \GeodesicKCarter -2a\GeodesicEnergy\GeodesicLz-\GeodesicLz^2 
= \TensorQ^{ab} \dot \Geodesic_a \dot \Geodesic_b , 
\end{align*} 
where 
\begin{align} 
\TensorQ^{ab} ={}& (\partial_\theta)^a(\partial_\theta)^b+\frac{\cos^2\theta}{\sin^2\theta}(\partial_\phi)^a(\partial_\phi)^b+a^2\sin^2\theta(\partial_t)^a(\partial_t)^b . 
\label{eq:TensorQdef}
\end{align}
By construction, $\GeodesicQ$ is a sum
of conserved quantities, and is therefore conserved. Further, it is non-negative, since it is a sum of
non-negative terms. In the following we use $(\GeodesicEnergy, \GeodesicLz, \GeodesicQ)$ as parameters for null geodesics. Since we are considering only null geodesics, there is no loss of generality compared to using $(\GeodesicEnergy, \GeodesicLz, \GeodesicKCarter)$ as parameters. 

For a null geodesic with given parameters $(\GeodesicEnergy,\GeodesicLz,\GeodesicQ)$, a simple turning point analysis shows that there is a number $\rorbit \in (\rp, \infty)$ so that the quantity $(r-\rorbit)\dot\Geodesic^r$ increases overall. This quantity corresponds to the energy $\GenEnergyGeodesic{\vecMGeodesic}$ for the vector field $\vecMGeodesic=-(r-\rorbit)\dr$.   
Following this idea, we may now look for a function $\fnMrGeodesic$ which will play the role of $-(r - \rorbit)$, so that for $\vecMGeodesic=\fnMrGeodesic\dr$, the energy $\GenEnergyGeodesic{\vecMGeodesic}$ is non-decreasing for all $\lambda$ and not merely non-decreasing overall. For $a\not=0$, both $\rorbit$ and $\fnMrGeodesic$ will necessarily depend on both the Kerr parameters $(M,a)$ and the constants of motion $(\GeodesicEnergy,\GeodesicLz,\GeodesicQ)$; the function $\fnMrGeodesic$ will also depend on $r$, but no other variables. 

We define $\vecMGeodesic^a=\fnMrGeodesic (\dr)^a$ with 
$$
\fnMrGeodesic = \fnMrGeodesic(r;M,a,\GeodesicEnergy,\GeodesicLz,\GeodesicQ)
$$
It is important to note that this is a map from the tangent bundle to the tangent bundle, and hence $\vecMGeodesic^a = \fnMrGeodesic (\partial_r)^a$ cannot be viewed as a standard vector field, which is a map from the manifold to the tangent bundle. 

To derive a monotonicity formula, we wish to choose $\fnMrGeodesic$ so that $\GenEnergyGeodesic{\vecMGeodesic}$ has a non-negative derivative. We define the covariant derivative of $\vecMGeodesic$ by holding the values of $(\GeodesicEnergy,\GeodesicLz,\GeodesicQ)$ fixed and computing the covariant derivative as if $\vecMGeodesic$ were a regular vector field. Similarly, we define $\Lie_{\vecMGeodesic}\gMetric^{ab}$ by fixing the values of the constants of geodesic motion. Since the constants of motion have zero derivative along null geodesics, equation \eqref{eq:deformForGeodesics} remains valid. 

Recall that null geodesics are conformally invariant up to reparameterization. Hence, it is sufficient to work with the conformally rescaled metric $\KSigma\gMetric^{ab}$. Furthermore, since $\gamma$ is a null geodesic, for any function $\fnMpGeodesic$, we may subtract $\fnMpGeodesic\KSigma\gMetric^{ab}\dot\gamma_a\dot\gamma_a$ wherever it is convenient. Thus, the change in $\GenEnergyGeodesic{\vecMGeodesic}$ is given as the integral of 
\begin{align*}
\KSigma\dot\gamma_a\dot\gamma_b\nabla^{(a}\vecMGeodesic^{b)}
&=\left(-\frac12\Lie_{\vecMGeodesic}(\KSigma\gMetric^{ab}) - \fnMpGeodesic\KSigma\gMetric^{ab}\right)\dot\gamma_a\dot\gamma_b
\end{align*}

The Kerr metric can be written as 
\begin{align}
\KSigma\gMetric^{ab}&= - \KDelta(\dr)^a(\dr)^b - \frac{1}{\KDelta}\CurlyR^{ab} , \label{eq:Kerrg-CurlyR}
\end{align}
where the tensorial form of $\CurlyR^{ab}$ can be read off from the earlier definitions. 
We now calculate $-\Lie_{\vecMGeodesic}\gMetric^{ab}\dot\gamma_a\dot\gamma_b$ using \eqref{eq:Kerrg-CurlyR}. Ignoring distracting factors of $\KSigma$, $\KDelta$, the most important terms are 
\begin{align*}
-2(\dr\fnMrGeodesic)\dot\gamma_r\dot\gamma_r +\fnMrGeodesic(\dr\CurlyR^{ab})\dot\gamma_a\dot\gamma_b =-2(\dr\fnMrGeodesic)\dot\gamma_r\dot\gamma_r +\fnMrGeodesic(\dr\CurlyR).
\end{align*} 
The second term in this sum will be non-negative if $\fnMrGeodesic=\dr\CurlyR(r;M,a;\GeodesicEnergy,\GeodesicLz,\GeodesicQ)$. Recall that the vanishing of $\dr\CurlyR(r;M,a;\GeodesicEnergy,\GeodesicLz,\GeodesicQ)$ is one of the two conditions for orbiting null geodesics. With this choice of $\fnMrGeodesic$, the instability of the null geodesic orbits ensures that, for these null geodesics, the coefficient in the first term, $-2(\dr\fnMrGeodesic)$, will be positive. These observations motivate the form of $\fnMrGeodesic$ which yields non-negativity for all null geodesics. 

It remains to make explicit choices of $\fnMrGeodesic$ and $\fnMpGeodesic$. Once these choices are made, the necessary calculations are straight-forward but rather lengthy.  Let $\fnMna$ and $\fnMnb$ be smooth functions of $r$ and the Kerr parameters $(M,a)$. Let $\DiffCurlyRTilde$ denote $\dr(\frac{\fnMna}{\KDelta}\CurlyR(r;M,a;\GeodesicEnergy,\GeodesicLz,\GeodesicQ))$ and choose $\fnMrGeodesic=\fnMna\fnMnb\DiffCurlyRTilde$ and $\fnMpGeodesic=(1/2)(\dr\fnMna)\fnMnb\DiffCurlyRTilde$. In terms of these functions,
\begin{align}
\KSigma\dot\gamma_a\dot\gamma_b \nabla^{(a}\vecMGeodesic^{b)}
&=\half \fnMnb (\DiffCurlyRTilde)^2 -\fnMna^{1/2}\KDelta^{3/2} \left(\dr\left(\fnMnb\frac{\fnMna^{1/2}}{\KDelta^{1/2}}\DiffCurlyRTilde\right)\right) \dot\gamma_r^2. 
\label{eq:GeodesicBulk}
\end{align}
If $\fnMna$ and $\fnMnb$ are chosen to be positive, then the first term on the right hand side of \eqref{eq:GeodesicBulk} which contains a square $(\DiffCurlyRTilde)^2$ is non-negative. If we now take $\fnMna=\fnMca=\KDelta(r^2+a^2)^{-2}$ and 
$\fnMnb=\fnMcb=(r^2+a^2)^4/(3r^2-a^2)$, then\footnote{Equation \eqref{eq:GeodesicsDDiffCurlyRTTilde}corrects a misprint in \cite[Eq. (1.15b)]{AnderssonBlue:KerrWave}. }

\begin{align}
- \dr\left(\fnMnb\frac{\fnMna^{1/2}}{\KDelta^{1/2}}\DiffCurlyRTilde\right)
&=2\frac{3r^4+a^4}{(3r^2-a^2)^2}\GeodesicLz^2 + 
  2\frac{3r^4-6a^2r^2-a^4}{(3r^2-a^2)^2}\GeodesicQ .\label{eq:GeodesicsDDiffCurlyRTTilde}
\end{align}
The coefficient of $\GeodesicQ$ is positive for $r > r_+$  when $|a|< 3^{1/4}2^{-1/2} M \cong 0.93 M$. Since  
$\GeodesicQ$ is non-negative, the right-hand side of \eqref{eq:GeodesicsDDiffCurlyRTTilde} is non-negative, and hence also the right-hand side of equation \eqref{eq:GeodesicBulk} is non-negative, for this range of $a$. 
Since equation \eqref{eq:GeodesicBulk} gives the rate of change, the energy $\GenEnergyGeodesic{\vecMGeodesic}$ is monotone. 

These calculations reveal useful information about the geodesic motion. The positivity of the term on the right-hand side of \eqref{eq:GeodesicsDDiffCurlyRTTilde} shows that $\DiffCurlyRTilde$ can have at most one root, which must be simple. In turn, this shows that $\CurlyR$ can have at most two roots. For orbiting null geodesics $\CurlyR$ must have a double root, which must coincide with the root of $\DiffCurlyRTilde$. It is convenient to think of the corresponding value of $r$ as being $\rorbit$.

The first term in \eqref{eq:GeodesicBulk} vanishes at the root of $\DiffCurlyRTilde$, as it must so that $\GenEnergyGeodesic{\vecMGeodesic}$ can be constantly zero on the orbiting null geodesics. When $a=0$, the quantity $\DiffCurlyRTilde$ reduces to $-2(r-3M)r^{-4}(\GeodesicLz^2+\GeodesicQ)$, so that the orbits occur at $r=3M$. The continuity in $a$ of $\DiffCurlyRTilde$ guarantees that its root converges to $3M$ as $a\rightarrow 0$ for fixed $(\GeodesicEnergy,\GeodesicLz,\GeodesicQ)$. 

From the geometrics optics approximation, it is natural to imagine that the monotone quantity constructed in this section for null geodesics might imply the existence of monotone quantities for fields, which would imply some form of dispersion. 
For the wave equation, this is true. In fact, the above discussion, when carried over to the case of the wave equation, closely parallels the proof of the Morawetz estimate for the wave equation given in \cite{AnderssonBlue:KerrWave}, see section \ref{sec:fields} below. The quantity 
$(\dot\Geodesic_\alpha\dot\Geodesic_\beta)(\nabla^{(\alpha}\vecX^{\beta)}) $ corresponds to the Morawetz density, i.e. the divergence of the momentum corresponding to the Morawetz vector field. The role of the conserved quantities $(\GeodesicEnergy,\GeodesicLz,\GeodesicQ)$ for geodesics is played, in the case of fields, by the energy fluxes defined via second order symmetry operators corresponding to these conserved quantities. 
The fact that the quantity $\CurlyR$ vanishes quadratically on the trapped orbits is reflected in the Morawetz estimate for fields, by a quadratic degeneracy of the Morawetz density at the trapped orbits.  

\subsection{Dispersive estimates for fields} \label{sec:fields}

As discussed in section \ref{sec:geodesics}, one may construct a suitable function of the conserved quantities for null geodesics in the Kerr spacetime which is monotone along the geodesic flow. This function may be viewed as arising from a generalized vector field on phase space. The monotonicity property implies, as discussed there, that non-trapped null geodesics disperse, in the sense that they leave any stationary region in the Kerr space time.  As mentioned in section \ref{sec:geodesics}, in view of the geometric optics approximation for the wave equation, such a monotonicity property for null geodesics reflects the tendency for waves in the Kerr spacetime to disperse. 

At the level of the wave equation, the analogue of the just mentioned monotonicity estimate is called the Morawetz estimate. For the wave equation $\nabla^a \nabla_a \psi = 0$, a Morawetz estimate provides a current $J_a$ defined in terms of $\psi$ and some of its derivatives, with the property that $\nabla^a J_a$ has suitable positivity properties, and that the flux of $J_a$ can be controlled by a suitable energy defined in terms of the field. 

Let $\psi$ be a solution of the wave equation $\nabla^a \nabla_a \psi = 0$. 
Define the current $J_a$ by 
$$
J_a = T_{ab} \vecMprimary^b +  \half \scalMprimary(\bar \psi \nabla_a \psi  + \psi \nabla_a \bar\psi)  - \half (\nabla_a \scalMprimary ) \psi \bar \psi.
$$
where $T_{ab}$ is the stress-energy tensor given by \eqref{eq:Tab-wave}. 
We have 
\begin{equation}\label{eq:JJJJ}
\nabla^a J_a = T_{ab} \nabla^{(a} \vecMprimary^{b)} + \scalMprimary \nabla^c \psi \nabla_c \bar \psi - \half (\nabla^c \nabla_c \scalMprimary) \psi \bar \psi .
\end{equation}
We now specialize to Minkowski space, with the line element 
$g_{ab}dx^a dx^b = dt^2 - dr^2 - d\theta^2 - r^2 \sin^2\theta d\phi^2$. 
Let 
$$
E(\tau) = \int_{\{ t= \tau\}} T_{tt} d^3 x
$$
be the energy of the field at time $\tau$, where $T_{tt}$ is the energy density. The energy is conserved, so that $E(t)$ is independent of $t$. 

Setting $\vecMprimary^a = r(\partial_r)^a$, we have 
\begin{equation} \label{eq:deform} 
\nabla^{(a} \vecMprimary^{b)} = g^{ab} - (\partial_t)^a (\partial_t)^b . 
\end{equation} 
With $q = 1$,  we get
$$
\nabla^a J_a = - T_{tt} .
$$
With the above choices, the bulk term $\nabla^a  J_a$ has a sign. This method can be used to prove dispersion for solutions of the wave equation. In particular, by introducing suitable cutoffs, one finds that for any $R_0 > 0$, there is a constant $C$, so that  
\begin{align}
\int_{t_0}^{t_1} \int_{|r| \leq R_0}  T_{tt} d^3 x dt 
\leq C(E(t_0) +E(t_1)) 
\leq 2C E(t_0) , \label{eq:basicmor} 
\end{align}
see \cite{MR0234136}. 
The local energy, $\int_{|r| \leq R_0} T_{tt} d^3 x$, is a function of
time. By \eqref{eq:basicmor} it is integrable in $t$, and hence it must decay to zero
as $t\rightarrow\infty$, at least sequentially. This shows that the field disperses. 
Estimates of this type are called
Morawetz or integrated local energy decay estimates. 

For a solution $\phi_{AB}$ of the Maxwell equation $(\sCurlDagger_{2,0} \phi)_{AA'} = 0$, the stress-energy tensor $T_{ab}$ given by 
$$
T_{ab} = \phi_{AB} \bar \phi_{A'B'}
$$
is conserved, $\nabla^a T_{ab} = 0$. Further, $T_{ab}$ has trace zero, 
 with $T^a{}_a = 0$. 
 
Restricting to Minkowski space and setting $J_a = T_{ab} \vecMprimary^b$, with $\vecMprimary^a = r(\partial_r)^a$ we have
$$
\nabla^a J_a = - T_{tt}
$$
which again gives local energy decay for the Maxwell field on Minkowski space. 

For the wave equation on Schwarzschild we can choose  
\begin{subequations}
\begin{align}
\vecMprimary^{a}={}&\frac{(r - 3 M) (r - 2 M)}{3 r^2} (\partial_r)^{a} ,\\
\scalMprimary={}&\frac{6 M^2 - 7 M r + 2 r^2}{6 r^3}.
\end{align}
\end{subequations}
This gives

\begin{subequations}
\begin{align}\label{eq:Afirst}
-\nabla^{(a}\vecMprimary^{b)}={}&- \frac{M g^{ab} (r - 3 M)}{3 r^3}
 +  \frac{M (r - 2 M)^2 (\partial_r)^{a} (\partial_r)^{b}}{r^4}\nonumber\\
& +  \frac{(r - 3 M)^2 ((\partial_\theta)^{a} (\partial_\theta)^{b} + \csc^2\theta (\partial_\phi)^{a} (\partial_\phi)^{b})}{3 r^5},\\
- \nabla_{a}J^{a}={}&\frac{M |\partial_r \psi|^2 (r - 2 M)^2}{r^4}
 + \frac{\bigl(|\partial_\theta \psi|^2 + |\partial_\phi\psi|^2 \csc^2\theta\bigr) (r - 3 M)^2}{3 r^5}\nonumber\\
& + \frac{M |\psi|^2 (54 M^2 - 46 M r + 9 r^2)}{6 r^6}. \label{eq:divJ}
\end{align}
\end{subequations} 
Here, $\vecMprimary^a$ was chosen so that the last two terms \eqref{eq:Afirst} have good signs. The form of $\scalMprimary$ given here was chosen to eliminate the $|\partial_t \psi|^2$ term in \eqref{eq:divJ}. 
The first terms in \eqref{eq:divJ} are clearly non-negative, while the last is of lower-order and can be estimated using a Hardy estimate \cite{AnderssonBlue:KerrWave}.
The effect of trapping in Schwarzschild at $r=3M$ is manifested in the fact that the angular derivative term 
vanishes at $r=3M$. 

In the case of the wave equation on Kerr, the above argument using a classical vector field cannot work due to the complicated structure of the trapping. However, making use of higher-order currents constructed using second order symmetry operators for the wave equation, and a generalized Morawetz vector field analogous to the vector field $\vecMprimary^a$ as discussed in section \ref{sec:geodesics}. This approach has been carried out in detail in \cite{AnderssonBlue:KerrWave}. 

If we apply the same idea for the Maxwell field on Schwarzschild, there is no reason to expect that local energy decay should hold, in view of the fact that the Coulomb solution is a time-independent solution of the Maxwell equation which does not disperse. In fact, with 
\begin{align}
\vecMprimary^{a}={}& \fnMrMorawetz(r) \Bigl(1
 -  \frac{2 M}{r} \Bigr) (\partial_r)^{a},\\
\intertext{we have}
- T_{ab} \nabla^{(a} \vecMprimary^{b)} ={}& - \phi^{AB} \bar{\phi}^{A'B'} (\sTwist_{1,1} \vecMprimary)_{ABA'B'}\\
={}&\bigl(|\phi_{0}|^2 + |\phi_{2}|^2\bigr) \frac{(r - 2 M)}{2 r} \fnMrMorawetz'(r) \nonumber\\
& 
- \frac{|\phi_{1}|^2 \bigl(r (r-2M) \fnMrMorawetz'(r) -  2 \fnMrMorawetz(r) (r - 3 M) \bigr)}{r^2}.
\label{eq:ssss} 
\end{align}
If $\fnMrMorawetz'$ is chosen to be positive, then the coefficient of the extreme components in \eqref{eq:ssss} is positive. However, at $r=3M$, the coefficient of the middle component is necessarily of the opposite sign. It is possible to show that no choice of $\fnMrMorawetz$ will give positive coefficients for all components in \eqref{eq:ssss}. 

The dominant energy condition, that  $T_{ab} V^a W^b \geq 0$ for all causal vectors  $V^a, W^a$
is a common and important condition on stress energy tensors. 
In Riemannian geometry, a natural condition on a
symmetric $2$-tensor $T_{ab}$ would be non-negativity, i.e. the condition that
for all $X^a$, one has $T_{ab}X^a X^b\geq
0$. 

However, in order to prove dispersive estimates for null geodesics and the wave equation,
the dominant energy condition on its own is not sufficient and non-negativity cannot be expected for stress energy tensors. Instead, a useful condition to consider is non-negativity modulo trace
terms, i.e. the condition that for every $X^a$ there is a $q$ such that
$T_{ab} X^a X^b +q T^a{}_a \geq 0$. For null
geodesics and the wave equation, the tensors $\dot\gamma_a\dot\gamma_b$
and $\nabla_a u\nabla_b u=T_{ab} +T^\gamma{}_\gamma
g_{ab}$ are both non-negative, so
$\dot\gamma_a\dot\gamma_b$ and $T_{ab}$ are
non-negative modulo trace terms. 

From equation \eqref{eq:Afirst}, we see
that $-\nabla^{(a}A^{b)}$ is of the form $f_1 g^{ab}
+f_2\partial_r^a\partial_r^b
+f_3\partial_\theta^a\partial_\theta^b+f_4\partial_\phi^a\partial_\phi^b$ where $f_2$, $f_3$ and $f_4$ are
non-negative functions. That is $-\nabla^{(a}A^{b)}$ is a sum
of a multiple of the metric plus a sum of terms of the form of a
non-negative coefficient times a vector tensored with itself. Thus,
from the non-negativity modulo trace terms, for null geodesics and the
wave equation respectively, there are functions $q$ such that
$\dot\gamma_a\dot\gamma_b \nabla^a
A^b=\dot\gamma_a\dot\gamma_b \nabla^a A^b +q
g^{ab}\dot\gamma_a\dot\gamma_b \leq 0$ and
$T_{ab}\nabla^a A^b +qT^a{}_a\leq 0$. For
null geodesics, since
$g^{ab}\dot\gamma_a\dot\gamma_b=0$, the $q$ term can
be ignored. For the wave equation, one can use the terms involving $\scalMprimary$
in equations \eqref{eq:JJJJ}, to cancel the
$T^a{}_a$
term in $\nabla^a J_a$. For the wave equation, this gives
non-negativity for the first-order terms in $-\nabla^a J_a$,
and one can then hope to use a Hardy estimate to control the zeroth
order terms.

If we now consider the Maxwell equation, we have the fact that the Maxwell stress energy tensor is traceless, $T^a{}_a = 0$ and does not satisfy
the non-negativity condition. Therefore it also does not satisfy the condition of non-negativity modulo trace. This appears to be the
fundamental underlying obstruction to proving a Morawetz estimate
using $T_{ab}$. This can be seen as a manifestation of the fact that the Coulomb solution does not disperse. 
\begin{wrapfigure}{r}{0.2\textwidth}
\raisebox{-0.5\height}{\includegraphics{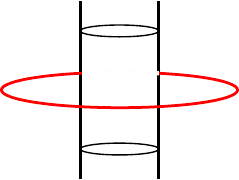}}
\end{wrapfigure}
In fact, it is immediately clear that the Maxwell stress energy cannot be used directly to prove dispersive estimates since it does not vanish for the Coulomb field \eqref{eq:coulomb} on the Kerr spacetime. 
We remark that the existence of the Coulomb solution on the Kerr spacetime is a consequence of the facts exterior of the black hole contains non-trivial 2-spheres, and the existence of two conserved charge integrals $\int_S F_{ab} d\sigma^{ab}$, $\int_S (*F)_{ab} d\sigma^{ab}$. Hence this is valid also for dynamical black hole spacetimes.

\section{Symmetry operators}  \label{sec:symop} 
A symmetry operator for a field equation is an operator which takes solutions to solutions.  
In order to analyze higher spin fields on the Kerr spacetime, it is important to gain an understanding of the symmetry operators for this case.  
In the paper \cite{ABB:symop:2014CQGra..31m5015A} we have given a complete characterization of those spacetimes admitting symmetry  operators of second order for the field equations of spins $0, 1/2, 1$, i.e. the conformal wave equation, the Dirac-Weyl equation and the Maxwell equation, respectively, and given the general form of the symmetry operators, up to equivalence. In order to simplify the presentation here, we shall discuss only the spin-$1$ case, and restrict to spacetimes admitting a valence $(2,0)$ Killing spinor $\kappa_{AB}$. We first give some background on the wave equation.

\subsection{Symmetry operators for the Kerr wave equation} \label{sec:kerrwave}
As shown by Carter \cite{carter:1977PhRvD..16.3395C}, if $K_{ab}$ is a Killing tensor in a Ricci flat spacetime, the operator 
\begin{equation}\label{eq:OpK=K} 
\OperatorKCarter = \nabla_a K^{ab} \nabla_b
\end{equation} 
is a commuting symmetry operator for the d'Alembertian,
$$
[\nabla^a \nabla_a , \OperatorKCarter ] = 0
$$
In particular there is a second order symmetry operator for the wave equation, i.e. an operator which maps solutions to solutions, 
$$
\nabla^a \nabla_a \psi = 0 \quad \Rightarrow \quad \nabla^a \nabla_a \OperatorKCarter \psi = 0 
$$
Due to the form of the Carter Killing tensor, $\TensorKCarter_{ab}$, cf. \eqref{eq:KDefinition}, the operator $\OperatorKCarter$ defined by \eqref{eq:OpK=K} contains derivatives with respect to all coordinates. 

Recall that $\nabla^a \nabla_a = \frac{1}{\mu_g} \partial_a \mu_g g^{ab} \partial_b$, where $\mu_g = \sqrt{\det(g_{ab})}$ is the volume element. For Kerr in Boyer-Lindquist coordinates, we have from \eqref{eq:Kerrvol} that $\mu_g = \KSigma \bmu$, with $\bmu = \sin\theta$. 
After rescaling the d'Alembertian by $\KSigma$, and using the just mentioned facts, one finds 
\begin{align}
\KSigma \nabla^a \nabla_a ={}& - \partial_r \KDelta \partial_r + 
\frac{\CurlyR(r;\partial_t,\partial_\phi,\OperatorQ)}{\KDelta} \label{eq:SigmadAl}
\end{align} 
where  
$$
\OperatorQ = \frac{1}{\bmu} \partial_a \bmu \TensorQ^{ab} \partial_b
$$
In view of the form of $\TensorQ^{ab}$ given in \eqref{eq:TensorQdef}, we see that $\OperatorQ$ contains derivatives only with respect to $\theta,\phi,t$, but not with respect to $r$. Thus, it is clear from \eqref{eq:SigmadAl} that $\OperatorQ$ is a commuting symmetry operator for the rescaled d'Alembertian $\Sigma \nabla^a \nabla_a$, 
$$
[ \Sigma \nabla^a \nabla_a , \OperatorQ ] = 0
$$
In addition to the symmetry operator $\OperatorQ$ related to the Carter constant, we have the second order symmetry operators generated by the Killing fields $\xi^a \nabla_a = \partial_t$, $\eta^a \nabla_a = \partial_\phi$. The operator $\OperatorQ$ can be termed a hidden symmetry, since it cannot be represented in terms of operators generated by the Killing fields.

The above shows that we can write 
$$
\KSigma \nabla^a \nabla_a = \Rop + \Sop
$$
where the operators $\Rop, \Sop$ commute, $[\Rop,\Sop]=0$, and $\Rop$ contains derivatives with respect to the non-symmetry coordinate $r$, and the two symmetry coordinates $t,\phi$, while $\Sop$ contains derivatives with respect to the non-symmetry coordinate $\theta$, and with respect to $t,\phi$.

By making a separated ansatz 
$$
\psi_{\omega,\ell,m}(t,r,\theta,\phi) = e^{-i\omega t} e^{im\phi} \Rsol_{\omega,\ell,m}(r) \Ssol_{\omega,\ell,m}(\theta)
$$
the equation $\nabla^a \nabla_a \psi = 0$ becomes a pair of scalar ordinary differential equations
\begin{subequations} \label{eq:RSeq} 
\begin{align} 
\Rop \Rsol + \lambda \Rsol ={}& 0 \label{eq:Req} \\
\Sop \Ssol ={}& \lambda \Ssol \label{eq:Seq}
\end{align}
\end{subequations} 
where $\lambda = \lambda_{\omega,\ell,m}$. Here it should be noted that equation \eqref{eq:Seq} is to be considered as a boundary value problem on $[0,\pi]$ with boundary conditions determined by the requirement that $\phi$ be smooth. 
In the Schwarzschild case $a = 0$, we can take $\Sop = \angDelta$, the angular Laplacian. The eigenfunctions of $\angDelta$ are the spherical harmonics $Y_{\ell,m}(\theta,\phi) = e^{im\phi}Y_\ell(\theta)$. The eigenvalues of $\angDelta$ are $\lambda_{\ell,m} = - \ell(\ell+1)$.

The solutions to the eigenproblem 
$\Sop \Ssol = \lambda \Ssol$ are the spheroidal harmonics, the eigenvalues in this case are not known in closed form, depend on the time frequency $\omega$, and are indexed by $\ell,m$. For real $\omega$, it is known that the eigensystem is complete, but for general $\omega$ this is not known. 

One may now apply a Fourier transform and represent a typical solution $\psi$ to the wave equation in the form
$$
\psi = \int d\omega \sum_{\ell,m}  e^{-i\omega t} e^{im\phi} \Rsol_{\omega,\ell,m} \Ssol_{\omega,\ell,m} ,
$$
analyze the behavior of the separated modes $\psi_{\omega,\ell,m}$, and recoved estimates for $\psi$ after inverting the Fourier transform. 
In order to do this, one must show a priori that the Fourier transform can be applied. This can be done by applying cutoffs, and removing these after estimates have been proved using Fourier techniques. This approach has been followed in eg. \cite{DafermosRodnianski:KerrEnergy,AnderssonNicolasBlue,andersson:blue:maxwell:2013arXiv1310.2664A}. In recent work by Dafermos, Rodnianski and Shlapentokh-Rothman, see \cite{2014arXiv1402.7034D}, proving boundedness and decay for the wave equation on Kerr for the whole range $|a| < M$, makes use of the technical condition of time integrability, i.e. that the solution to the wave equation and its derivatives to a sufficiently high order is bounded in $L^2$ on time lines,
$$
\int_{-\infty}^{\infty} dt |\partial^{\alpha} \psi(t,r,\theta,\phi)| 
$$
This condition is consistent with integrated local energy decay and is removed at the end of the argument. 

However, by working directly with currents defined in terms of second order symmetry operators, one may prove a Morawetz estimate directly for the wave equation on the Kerr spacetime. This was carried out for the case $|a| \ll M$ in  \cite{AnderssonBlue:KerrWave}. This involves introducing a generalization of the vector field method to allow for currents defined in terms of generalized, operator valued, vector fields. These are operator analogs of the generalized vector field $\vecMGeodesic^a$ introduced in section \ref{sec:geodesics}. 

Fundamental for either of the above mentioned approaches, is that the analysis of the wave equation on the Kerr spacetime is based on the hidden symmetry manifested in the existence of the Carter constant, or the conserved quantity $\GeodesicQ$, and its corresponding symmetry operator $\OperatorQ$. 

\subsection{Symmetry operators for the Maxwell field} \label{eq:symMax}
There are two spin-$1$ equations (left and right) depending on the helicity  of the spinor. These are 
$$ 
(\sCurlDagger_{2,0} \phi)_{AA'} = 0 \quad \text{(left), \quad and } \quad (\sCurl_{0,2} \varphi)_{AA'} = 0 \quad \text{ (right)}
$$
 The real Maxwell equation $\nabla^a F_{ab} = 0$, $\nabla_{[a} F_{bc]} = 0$ for a real two form $F_{ab} = F_{[ab]}$ is equivalent to either the right or the left Maxwell equations. 
Henceforth we will always assume that $\phi_{AB}$ solves the left Maxwell equation.

Given a conformal Killing vector $\GenVec^{AA'}$, we follow \cite[Equations (2) and (15)]{anco:pohjanpelto:2003:CRM:MR2056970}, see also \cite{anco:pohjanpelto:2003:ProcRSoc:MR1997098},
and define a conformally weighted Lie derivative acting on a symmetric valence $(2s,0)$ spinor field as follows 
\begin{definition}
For $\GenVec^{AA'} \in \ker \sTwist_{1,1}$, and $\varphi_{A_1\dots A_{2s}}\in \mathcal{S}_{2s,0}$, we define
\begin{align}
\hat{\mathcal{L}}_{\GenVec}\varphi_{A_1\dots A_{2s}}\equiv{}&\GenVec^{BB'} \nabla_{BB'}\varphi_{A_1\dots A_{2s}}+s \varphi_{B(A_2\dots A_{2s}} \nabla_{A_1)B'}\GenVec^{BB'}
 + \tfrac{1-s}{4} \varphi_{A_1\dots A_{2s}} \nabla^{CC'}\GenVec_{CC'}.
\end{align}
\end{definition}
If $\GenVec^a$ is a conformal Killing field, then 
$(\sCurlDagger_{2,0} \hat{\mathcal{L}}_\GenVec \varphi)_{AA'} = \hat{\mathcal{L}}_\GenVec (\sCurlDagger_{2,0} \varphi)_{AA'}$.
It follows that the first order operator $\varphi \to \hat{\mathcal{L}}_\GenVec \varphi$ 
defines a symmetry operator of first order, which is also of the first kind.  For the equations of spins $0$ and $1$, the only first order symmetry operators are given by conformal Killing fields. For the spin-$1$ equation, we may have symmetry operators of the first kind, taking left fields to left, i.e. $\ker \sCurlDagger \mapsto \ker \sCurlDagger$ and of the second kind, taking left fields to right, $\ker \sCurlDagger \mapsto \ker \sCurl$.  
Observe that symmetry operators of the first kind are linear symmetry operators in the usual sense, while symmetry operators of the second kind followed by complex conjugation gives anti-linear symmetry operators in the usual sense.

Recall that the Kerr spacetime admits a constant of motion for geodesics $\GeodesicQ$ which is not reducible to the conserved quantities defined in terms of Killing fields, but rather is defined in terms of a Killing tensor. Similarly, in a spacetime with Killing spinors, the geometric field equations may admit symmetry operators of order greater than one, not expressible in terms of the symmetry operators defined in terms of (conformal) Killing fields. We refer to such symmetry operators as ``hidden symmetries''.  

In general, the existence of symmetry operators of the second order implies the existence of Killing spinors (of valence $(2,2)$ for the conformal wave equation and for Maxwell symmetry operators of the first kind for Maxwell, or $(4,0)$ for Maxwell symmetry operators for of the second kind) satisfying certain auxiliary conditions. The conditions given in \cite{ABB:symop:2014CQGra..31m5015A} are are valid in arbitrary 4-dimensional spacetimes, with no additional conditions on the curvature. As shown in \cite{ABB:symop:2014CQGra..31m5015A}, the existence of a valence $(2,0)$ Killing spinor is a sufficient condition for the existence of second order symmetry operators for the spin-$s$ equations, for $s=0,1/2,1$. 

\begin{remark} \label{rem:remark5:2}
\begin{enumerate} 
\item If $\kappa_{AB}$ is a Killing spinor of valence $(2,0)$, then $L_{ABA'B'} =  \kappa_{AB} \bar \kappa_{A'B'}$ and $L_{ABCD} = \kappa_{(AB} \kappa_{CD)}$ are Killing spinors of valence $(2,2)$ and $(4,0)$, respectively, satisfying the auxiliary conditions given in \cite{ABB:symop:2014CQGra..31m5015A}. 

\item In the case of aligned matter with respect to $\Psi_{ABCD}$, any valence $(4,0)$ Killing spinor $L_{ABCD}$ factorizes, i.e. $L_{ABCD} = \kappa_{(AB} \kappa_{CD)}$ for some Killing spinor $\kappa_{AB}$ of valence $(2,0)$ \cite[Theorem 8]{ABB:symop:2014CQGra..31m5015A}. An example of a spacetime with aligned matter which admits a valence $(2,2)$ Killing spinor that does not factorize is given in \cite[\S 6.3]{ABB:symop:2014CQGra..31m5015A}, see also \cite{2014SIGMA..10..016M}. 
\end{enumerate} 
\end{remark}

\begin{proposition}[\protect{\cite{ABB:symop:2014CQGra..31m5015A}}]
\label{prop:SymOpPot}
\begin{enumerate} 
\item The general symmetry operator of the first kind for the Maxwell field, of order at most two, is of the form 
\begin{align}
\chi_{AB}={}&Q \phi_{AB}+(\sCurl_{1,1} A)_{AB}, \label{eq:SymFirstPot} 
\end{align} 
where $\phi_{AB}$ is a Maxwell field, and $A_{AA'}$ is a  linear concomitant\footnote{A concomitant is a covariant, local partial differential operator.} of first order, such that $A_{AA'} \in \ker \sCurlDagger_{1,1}$ and $Q \in \ker \sTwist_{0,0}$, i.e. locally constant.
\item
The general symmetry operator of the second kind for the Maxwell field is of the form 
\begin{align}
\omega_{A'B'}={}&(\sCurlDagger_{1,1} B)_{A'B'}, \label{eq:SymSecondPot} 
\end{align} 
where $B_{AA'}$ is a first order linear concomitant of $\phi_{AB}$ such that $B_{AA'} \in \ker\sCurl_{1,1}$.
\end{enumerate} 
\end{proposition}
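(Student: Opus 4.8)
The plan is to prove both inclusions: that every expression of the stated form is a symmetry operator (the easy direction), and that every symmetry operator of order at most two is, up to equivalence, of that form (the substantive direction). Throughout, two symmetry operators are identified if they differ by an operator factoring through $\sCurlDagger_{2,0}$, and I would use freely that the spacetime carries a valence $(2,0)$ Killing spinor $\kappa_{AB}$, together with the induced Killing spinors $\kappa_{AB}\bar\kappa_{A'B'}\in\KillSpin_{2,2}$, $\kappa_{(AB}\kappa_{CD)}\in\KillSpin_{4,0}$ and the Killing field $\xi_{AA'}=(\sCurlDagger_{2,0}\kappa)_{AA'}$, as in Remark~\ref{rem:remark5:2}.

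For the easy direction: if $A_{AA'}$ is a first-order linear concomitant of $\phi_{AB}$ with $A_{AA'}\in\ker\sCurlDagger_{1,1}$, then $(\sCurl_{1,1}A)_{AB}$ is again an anti-self-dual Maxwell field whenever $\phi$ is, because $(\sCurlDagger_{2,0}\sCurl_{1,1}A)_{AA'}$ can be written, via the commutator identities of \cite{ABB:symop:2014CQGra..31m5015A}, purely in terms of $(\sCurlDagger_{1,1}A)_{A'B'}$ and curvature contractions which vanish here; the condition $A\in\ker\sCurlDagger_{1,1}$ plays the role of a Lorenz-type gauge on the potential. The analogous identity shows $(\sCurlDagger_{1,1}B)_{A'B'}$ solves the right Maxwell equation when $B\in\ker\sCurl_{1,1}$, and $Q\phi_{AB}$ is trivially a solution for locally constant $Q$; explicit admissible $A_{AA'}$, $B_{AA'}$ are produced as $\kappa$-weighted first-order concomitants of $\phi$.

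For the substantive direction I would first put a general operator of order $\le 2$ into a normal form. Using the irreducible decomposition of $\nabla_{AA'}$ into the fundamental operators $\sDiv,\sCurl,\sCurlDagger,\sTwist$ together with the Maxwell equation $\sCurlDagger_{2,0}\phi=0$, the first derivative of $\phi$ reduces modulo the field equation to $(\sTwist_{2,0}\phi)_{ABCA'}$, and the second derivative to a fixed finite list of irreducible compositions ($\sDiv\sTwist$, and $\sCurl\sTwist$, $\sCurlDagger\sTwist$, $\sTwist\sTwist$ contracted back into the right bundle, plus the curvature terms produced by the commutators). Hence $\mathcal O\phi$ is a linear combination of these pieces and of lower-order terms, with a priori arbitrary spinor-field coefficients. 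Imposing $\sCurlDagger_{2,0}\mathcal O\phi=0$ (resp. $\sCurl_{0,2}\mathcal O\phi=0$) for all $\phi\in\ker\sCurlDagger_{2,0}$, commuting derivatives with the relations of \cite{ABB:symop:2014CQGra..31m5015A} and using Bianchi/Buchdahl-type identities, one obtains an expression that must vanish identically in the $2$-jet of $\phi$; equating the coefficients of the independent jet components to zero yields an overdetermined linear system on the coefficient fields. The content of the argument is that this system forces the coefficients to be built from a conformal Killing vector for the genuinely first-order part (which, given the real Killing field, collapses to $\hat{\mathcal{L}}_{\GenVec}\phi$ plus $Q\phi$ with $Q$ locally constant) and from a valence $(2,0)$ Killing spinor for the genuinely second-order part — via $\kappa_{AB}\bar\kappa_{A'B'}$ for the first kind and $\kappa_{(AB}\kappa_{CD)}$ for the second — with the equations $\sTwist_{1,1}\GenVec=0$ and $\sTwist_{2,0}\kappa=0$ arising precisely as the integrability conditions. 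Re-summing the surviving coefficients then recovers $\chi_{AB}=Q\phi_{AB}+(\sCurl_{1,1}A)_{AB}$, respectively $\omega_{A'B'}=(\sCurlDagger_{1,1}B)_{A'B'}$, with the kernel properties of $A$, $B$ following from $\sTwist_{2,0}\kappa=0$ and $\sCurlDagger_{2,0}\phi=0$.

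The main obstacle is this last step. The commutators generate a rapidly proliferating collection of curvature-coupled terms, and the bookkeeping has to be organized — grouping by jet order of $\phi$, repeatedly exploiting the symmetry and trace properties of the curvature spinors and the commutator algebra of the fundamental operators — so that the integrability conditions collapse exactly to the Killing spinor and conformal Killing equations rather than to a larger, accidental set of constraints, and so that the many trivial symmetry operators (those factoring through $\sCurlDagger_{2,0}$) are systematically quotiented out rather than laboriously re-derived. Handling the second kind is structurally parallel, the only new point being the correct identification of the potential $B_{AA'}\in\ker\sCurl_{1,1}$ and the commutator identity that makes $(\sCurlDagger_{1,1}B)_{A'B'}$ right-Maxwell.
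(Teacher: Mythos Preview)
The paper does not prove this proposition; it is quoted verbatim from \cite{ABB:symop:2014CQGra..31m5015A} and used as a black box, with Lemma~\ref{lem:ABpot} then supplying explicit potentials $A_{AA'}$, $B_{AA'}$ in the presence of a valence $(2,0)$ Killing spinor. So there is no in-paper proof to compare against. Your outline is, however, essentially the strategy of the cited reference: decompose a general order $\leq 2$ operator via the fundamental operators, reduce modulo $\sCurlDagger_{2,0}\phi=0$, and read off the constraints on the coefficient fields by matching independent jet components.

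That said, you are conflating two distinct statements. Proposition~\ref{prop:SymOpPot} is purely structural: \emph{every} second-order symmetry operator factors (up to equivalence) as $Q\phi_{AB}+(\sCurl_{1,1}A)_{AB}$, respectively $(\sCurlDagger_{1,1}B)_{A'B'}$, for some first-order concomitant lying in the appropriate adjoint kernel. It is stated and proved in \cite{ABB:symop:2014CQGra..31m5015A} for arbitrary four-dimensional spacetimes, with no Killing-spinor hypothesis. Your opening move --- ``I would use freely that the spacetime carries a valence $(2,0)$ Killing spinor'' --- is therefore not an admissible assumption for this proposition, and your ``substantive direction'' actually aims at the stronger assertion that the integrability conditions force the coefficients to be built from Killing spinors of valence $(2,2)$ or $(4,0)$. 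That stronger assertion is also in \cite{ABB:symop:2014CQGra..31m5015A}, but it is a separate theorem there (and in this paper it is split off into Remark~\ref{rem:remark5:2} and Lemma~\ref{lem:ABpot}); it is not part of Proposition~\ref{prop:SymOpPot}. Structurally, the hard direction you need here is only that the surviving second-order piece can be rewritten as $\sCurl_{1,1}$ of a first-order concomitant in $\ker\sCurlDagger_{1,1}$ --- the identification of that concomitant with a Killing-spinor expression is an additional step you should separate out. On the easy direction, the relevant identity is essentially $\sCurlDagger_{2,0}\sCurl_{1,1}=-\sCurl_{0,2}\sCurlDagger_{1,1}$ on $\SymSpinSec_{1,1}$ (the spinor form of $d^2=0$), which holds without any curvature vanishing; your remark about ``curvature contractions which vanish here'' is not needed and slightly obscures the mechanism.
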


\begin{remark} 
The operators $\sCurlDagger_{1,1}$ and $\sCurl_{1,1}$ are the adjoints of the left and right Maxwell operators $\sCurlDagger_{2,0}$ and $\sCurl_{0,2}$.  
The conserved currents for the Maxwell field can be characterized in terms of solutions of the adjoint Maxwell equations
\begin{subequations}\label{eq:AdjointMaxwell}
\begin{align} 
(\sCurlDagger_{1,1} A)_{A'B'} &= 0 \label{eq::LeftAdjointMaxwell} \\
(\sCurl_{1,1}B)_{AB} &= 0 \label{eq::RightAdjointMaxwell} 
 \end{align} 
 \end{subequations} 
\end{remark}

\begin{definition}
Given a spinor  
$\kappa_{AB} \in \SymSpinSec_{2,0}$ we define the operators $\sExt_{2,0}: \SymSpinSec_{2,0}\rightarrow \SymSpinSec_{2,0}$ and $\bar{\sExt}_{0,2}: \SymSpinSec_{0,2}\rightarrow \SymSpinSec_{0,2}$ by
\begin{subequations}
\begin{align}
(\sExt_{2,0}\varphi)_{AB}={}&-2 \kappa_{(A}{}^{C}\varphi_{B)C}, \label{eq:Ephidef} \\
(\bar{\sExt}_{0,2}\phi)_{A'B'}={}&-2 \bar{\kappa}_{(A'}{}^{C'}\phi_{B')C'}.
\end{align}
\end{subequations}
\end{definition}
Let $\kappa_i$ be the Newman-Penrose scalars for $\kappa_{AB}$. If $\kappa_{AB}$ is of algebraic type $\{1,1\}$ then $\kappa_0 = \kappa_2 = 0$, in which case $\kappa_{AB} = - 2 \kappa_1 o_{(A} \iota_{B)}$. A direct calculations gives the following result. 
\begin{lemma} \label{lem:Ephi-extreme} 
Let $\kappa_{AB} \in \SymSpinSec_{2,0}$ and assume that $\kappa_{AB}$ is of algebraic type $\{1,1\}$. Then the operators $\sExt_{2,0}, \bar{\sExt}_{2,0}$ remove the middle component and rescale the extreme components as 
\begin{subequations}
\begin{align}
(\sExt_{2,0}\varphi)_{0}={}&-2 \kappa_1 \varphi_0,&
(\sExt_{2,0}\varphi)_{1}={}&0,&
(\sExt_{2,0}\varphi)_{2}={}&2 \kappa_1 \varphi_2,\\
(\bar{\sExt}_{0,2}\phi)_{0'}={}&-2 \bar{\kappa}_{1'} \phi_{0'},&
(\bar{\sExt}_{0,2}\phi)_{1'}={}&0,&
(\bar{\sExt}_{0,2}\phi)_{2'}={}&2 \bar{\kappa}_{1'} \phi_{2'}.
\end{align}
\end{subequations}
\end{lemma}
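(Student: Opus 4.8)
The plan is to carry out the computation entirely in the spin dyad $(o_A,\iota_A)$ fixed by the normalisation \eqref{eq:dyad-normalization}, in which $\epsilon_{AB} = o_A\iota_B - \iota_A o_B$ and any symmetric valence $(2,0)$ spinor decomposes as $\varphi_{AB} = \varphi_2\, o_A o_B - \varphi_1\bigl(o_A\iota_B + \iota_A o_B\bigr) + \varphi_0\,\iota_A\iota_B$, the $\varphi_i$ being the Newman--Penrose scalars of \eqref{eq:varphi_i-def}. The only dyad contractions that enter are $o_A\iota^A = 1$, $\iota_A o^A = -1$ and $o_A o^A = \iota_A\iota^A = 0$. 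First I would rewrite the hypothesis, $\kappa_{AB} = -2\kappa_1 o_{(A}\iota_{B)} = -\kappa_1\bigl(o_A\iota_B + \iota_A o_B\bigr)$, and raise the last index with the convention $\lambda^A = \epsilon^{AB}\lambda_B$ to obtain $\kappa_A{}^{C} = -\kappa_1\bigl(o_A\iota^{C} + \iota_A o^{C}\bigr)$.

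Next I would substitute both expansions into $(\sExt_{2,0}\varphi)_{AB} = -2\kappa_{(A}{}^{C}\varphi_{B)C}$ and contract over $C$. A short calculation gives, before symmetrising,
\begin{align*}
\kappa_A{}^{C}\varphi_{BC} = -\kappa_1\bigl(\varphi_2\, o_A o_B - \varphi_1\, o_A\iota_B + \varphi_1\,\iota_A o_B - \varphi_0\,\iota_A\iota_B\bigr),
\end{align*}
and on symmetrising in $A,B$ the two middle terms cancel since $o_{(A}\iota_{B)} = \iota_{(A}o_{B)}$. Hence
\begin{align*}
(\sExt_{2,0}\varphi)_{AB} = 2\kappa_1\bigl(\varphi_2\, o_A o_B - \varphi_0\,\iota_A\iota_B\bigr),
\end{align*}
and contracting this with $o^A o^B$, with $\iota^A o^B$, and with $\iota^A\iota^B$, using the normalisations above, yields $(\sExt_{2,0}\varphi)_0 = -2\kappa_1\varphi_0$, $(\sExt_{2,0}\varphi)_1 = 0$ and $(\sExt_{2,0}\varphi)_2 = 2\kappa_1\varphi_2$. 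The middle component is annihilated precisely because $(\sExt_{2,0}\varphi)_{AB}$ is supported on $o_A o_B$ and $\iota_A\iota_B$, both of which have vanishing $\iota^A o^B$ contraction.

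For the second line I would apply complex conjugation: $\bar\kappa_{A'B'}$ is again of algebraic type $\{1,1\}$, with $\bar\kappa_{A'B'} = -2\bar\kappa_{1'}\bar o_{(A'}\bar\iota_{B')}$ in the conjugate dyad, and the primed scalars $\phi_{i'}$ behave under it exactly as the unprimed scalars above, so the identical computation produces the stated formulas for $\bar{\sExt}_{0,2}$. I do not expect a genuine obstacle here — the statement is a direct consequence of the definition together with the algebraically special form of $\kappa_{AB}$. The only point requiring care is the bookkeeping with the index conventions $\lambda^A = \epsilon^{AB}\lambda_B$, $\lambda_B = \lambda^A\epsilon_{AB}$ and with the signs of the dyad contractions $o\cdot\iota$, an error in which would flip a sign in the surviving extreme components.
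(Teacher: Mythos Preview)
Your proof is correct and is precisely the direct dyad computation the paper alludes to (the paper offers no detailed argument, stating only that ``a direct calculation gives the following result''). Your bookkeeping with the conventions $o_A\iota^A=1$, $\iota_A o^A=-1$ and the expansion $\varphi_{AB}=\varphi_2\,o_Ao_B-\varphi_1(o_A\iota_B+\iota_Ao_B)+\varphi_0\,\iota_A\iota_B$ is accurate, and the cancellation of the middle terms upon symmetrisation is exactly the mechanism behind the vanishing of $(\sExt_{2,0}\varphi)_1$.
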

\begin{remark} 
If $\kappa_{AB}$ is a Killing spinor in a Petrov type $\PetrovD$  spacetime, then $\kappa_{AB}$ is of algebraic type $\{1,1\}$. 
\end{remark}

\begin{definition} 
Define the first order 1-form linear concomitants $A_{AA'}, B_{AA'}$ by 
\begin{subequations}\label{eq:ABdef} 
\begin{align} 
A_{AA'}[\kappa_{AB}, \phi_{AB} ] ={}&- \tfrac{1}{3} (\sExt_{2,0} \phi)_{AB}  (\sCurl_{0,2} \bar{\kappa})^{B}{}_{A'}
 + \bar{\kappa}_{A'B'} (\sCurlDagger_{2,0} \sExt_{2,0} \phi)_{A}{}^{B'},\label{eq:Adef} \\
A_{AA'}[\GenVec_{AA'}, \phi_{AB} ] ={}&  \GenVec_{BA'} \phi_{A}{}^{B} \\
B_{AA'}[\kappa_{AB}, \phi_{AB} ] ={}&\kappa_{AB} (\sCurlDagger_{2,0} \sExt_{2,0} \phi)^{B}{}_{A'}
 + \tfrac{1}{3} (\sExt_{2,0} \phi)_{AB} (\sCurlDagger_{2,0} \kappa)^{B}{}_{A'}, \label{eq:Bdef} 
 \end{align}
\end{subequations} 
\end{definition} 
When there is no room for confusion, we suppress the arguments, and write simply $A_{AA'}, B_{AA'}$. 
The following result shows that $A_{AA'}, B_{AA'}$ solves the adjoint Maxwell equations, 
provided $\phi_{AB}$ solves the Maxwell equation. 
\begin{lemma}[\protect{\cite[\S 7]{ABB:symop:2014CQGra..31m5015A}}] 
\label{lem:ABpot}
Assume that $\kappa_{AB}$ is a Killing spinor of valence $(2,0)$, that $\GenVec_{AA'}$ is a conformal Killing field, and that $\phi_{AB}$ is a Maxwell field. Then, with $A_{AA'}, B_{AA'}$ given by \eqref{eq:ABdef} it holds that $A_{AA'}[\kappa_{AB}, \phi_{AB}]$ and $A_{AA'}[\GenVec_{AA'}, \phi_{AB}]$ satisfy $(\sCurlDagger_{1,1} A)_{A'B'} = 0$, and $B_{AA'}[\kappa_{AB}, \phi_{AB}]$ satisfies $(\sCurl_{1,1} B)_{AB} = 0$. 
\end{lemma}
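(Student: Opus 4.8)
The plan is to verify the three claims by direct computation in the spinor calculus of \cite{ABB:symop:2014CQGra..31m5015A}, expanding the concomitants \eqref{eq:ABdef} and reducing them to zero using four ingredients: the Maxwell equation $(\sCurlDagger_{2,0}\phi)_{AA'}=0$; the Killing spinor equation $(\sTwist_{2,0}\kappa)_{ABCA'}=0$, respectively the conformal Killing equation $(\sTwist_{1,1}\GenVec)_{ABA'B'}=0$; the commutator identities among the fundamental operators $\sDiv,\sCurl,\sCurlDagger,\sTwist$; and the curvature restrictions forced by a valence $(2,0)$ Killing spinor, in particular $\Psi_{(ABC}{}^{E}\kappa_{D)E}=0$ from \eqref{eq:curv-restrict-valence-2} together with the algebraic speciality of $\kappa_{AB}$, cf. Lemma \ref{lem:Ephi-extreme}. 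First I would record that $(\sCurlDagger_{2,0}\kappa)_{AA'}=\xi_{AA'}$ is the Killing field of \eqref{eq:xikappadef} and, by complex conjugation, $(\sCurl_{0,2}\bar\kappa)_{A'A}=\bar\xi_{A'A}$, so that the terms in \eqref{eq:ABdef} built from $\sCurl_{0,2}\bar\kappa$ and $\sCurlDagger_{2,0}\kappa$ become algebraic in $\sExt_{2,0}\phi$ once these replacements are made.

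For the conformal Killing field case I would argue directly. With $A_{AA'}=\GenVec_{BA'}\phi_{A}{}^{B}$ the Leibniz rule gives
\[
(\sCurlDagger_{1,1} A)_{A'B'} = (\nabla_{C(A'}\GenVec_{|D|B')})\,\phi^{CD} + \GenVec_{D(B'}\,\nabla_{|C|A')}\phi^{CD}.
\]
In the second term the contraction $\nabla_{CA'}\phi^{CD}$ is, up to sign and relabelling, $(\sCurlDagger_{2,0}\phi)$, hence vanishes. In the first term only the part of $\nabla_{CA'}\GenVec_{DB'}$ symmetric in $C,D$ contributes against the symmetric $\phi^{CD}$, and once the symmetrization in $A',B'$ is also used this part is precisely $(\sTwist_{1,1}\GenVec)_{CDA'B'}$, which vanishes because $\GenVec^a$ is conformal Killing. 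This establishes $(\sCurlDagger_{1,1}A)_{A'B'}=0$ for $A_{AA'}[\GenVec,\phi]$.

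For the Killing spinor case I would proceed in two stages. First, compute $(\sCurlDagger_{2,0}\sExt_{2,0}\phi)_{AA'}$ and $(\sCurl_{0,2}\sExt_{2,0}\phi)$ by commuting the Maxwell operators through the algebraic operator $\sExt_{2,0}$, using the Killing spinor equation to trade each derivative of $\kappa_{AB}$ for $\xi_{AA'}$ and the Maxwell equation to remove the divergence of $\phi_{AB}$; this expresses those quantities through $\kappa,\phi,\xi$, one derivative of $\phi$, and curvature terms which in vacuum collapse by \eqref{eq:curv-restrict-valence-2}. Second, apply $\sCurlDagger_{1,1}$ to \eqref{eq:Adef} and $\sCurl_{1,1}$ to \eqref{eq:Bdef}, expand by Leibniz, substitute the stage-one identities, and cancel the remaining terms pairwise using the commutator relations and the algebraic type of $\kappa_{AB}$. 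Conceptually this cancellation is forced: by Proposition \ref{prop:SymOpPot} the map $\phi\mapsto(\sCurl_{1,1}A)_{AB}$ is a genuine symmetry operator precisely when $A_{AA'}\in\ker\sCurlDagger_{1,1}$, and \eqref{eq:Adef}, \eqref{eq:Bdef} are tailored to reproduce the known second-order symmetry operators on Petrov type $\PetrovD$ backgrounds.

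The main obstacle is not any single identity but the combinatorics of the Killing spinor case: keeping the expansions of $\sCurlDagger_{1,1}A$ and $\sCurl_{1,1}B$ organized, recognizing which commutator identities and which consequences of the Killing spinor integrability conditions are needed, and confirming that no curvature term survives in vacuum. This bookkeeping is exactly what the systematic spinor-calculus and computer-algebra framework of \cite{ABB:symop:2014CQGra..31m5015A} is designed to handle, and it is from that analysis that the present lemma is extracted.
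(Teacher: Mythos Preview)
The paper does not provide its own proof of this lemma; it is stated with attribution to \cite[\S 7]{ABB:symop:2014CQGra..31m5015A} and is immediately followed by a remark, with no \texttt{proof} environment. Thus there is no in-paper argument to compare against.

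That said, your proposal is essentially the correct approach and matches the methodology of the cited reference. Your treatment of the conformal Killing case $A_{AA'}[\GenVec,\phi]=\GenVec_{BA'}\phi_A{}^{B}$ is a clean direct computation: after Leibniz, the $\nabla\phi$ term is $(\sCurlDagger_{2,0}\phi)$ and vanishes, and in the $\nabla\GenVec$ term the contraction with symmetric $\phi^{CD}$ together with the symmetrization on $A'B'$ leaves only the $\sTwist_{1,1}\GenVec$ irreducible, which vanishes. For the Killing spinor cases you correctly identify the necessary ingredients---the Killing spinor equation replacing derivatives of $\kappa_{AB}$ by $\xi_{AA'}$, the Maxwell equation, the commutator identities among $\sDiv,\sCurl,\sCurlDagger,\sTwist$, and the integrability condition \eqref{eq:curv-restrict-valence-2}---and honestly flag the combinatorial bookkeeping as the main difficulty. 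That is precisely what the systematic (and computer-algebra assisted) calculus of \cite{ABB:symop:2014CQGra..31m5015A} is built to handle, and the lemma is extracted from that work rather than reproved here.
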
 
\begin{remark} Proposition \ref{prop:SymOpPot} together with Lemma \ref{lem:ABpot} show that the existence of a valence $(2,0)$ Killing spinor implies that there are non-trivial second order symmetry operators of the first and second kind for the Maxwell equation. 
\end{remark}

\section{Conservation laws for the Teukolsky system} \label{sec:teuk} 
Recall that the operators $\sCurl$ and $\sCurlDagger$ are adjoints, and hence their composition yields a wave operator.  We have the identities (valid in a general spacetime)
\begin{subequations}\label{eq:wavegen}
\begin{align}\label{eq:MaxWaveGen}
\square \varphi_{AB} + 8 \Lambda \varphi_{AB} - 2 \Psi_{ABCD} \varphi^{CD}={}&-2 (\sCurl_{1,1} \sCurlDagger_{2,0} \varphi)_{AB}, \\
\label{eq:PsiWaveGen} 
\square \varphi_{ABCD} - 6 \Psi_{(AB}{}^{FH}\varphi_{CD)FH}={}&-2 (\sCurl_{3,1} \sCurlDagger_{4,0} \varphi)_{ABCD}.
\end{align} 
\end{subequations}
Here $\varphi_{AB}$ and $\varphi_{ABCD}$ are elements of $\SymSpinSec_{2,0}$ and $\SymSpinSec_{4,0}$, respectively. This means that the the Maxwell equation 
$
(\sCurlDagger_{2,0} \phi)_{AA'} = 0
$
in a vacuum spacetime implies the wave equation 
\begin{align}\label{eq:MaxWave}
\square \phi_{AB} - 2 \Psi_{ABCD} \phi^{CD}= 0 .
\end{align} 
Similarly, in a vacuum spacetime, the Bianchi system $(\sCurlDagger_{4,0} \Psi)_{A'ABC} = 0$ holds for the Weyl spinor, and we arrive at the Penrose wave equation 
\begin{align}\label{eq:PsiWave} 
\square \Psi_{ABCD} - 6 \Psi_{(AB}{}^{FH}\Psi_{CD)FH}= 0
\end{align} 
Restricting to a vacuum type $\PetrovD$ spacetime, and projecting the Maxwell wave equation \eqref{eq:MaxWave} and the linearized Penrose wave equation \eqref{eq:PsiWave} on the principal spin dyad, one obtains wave equations for the extreme Maxwell scalars $\phi_0, \phi_2$ and the extreme linearized Weyl scalars $\dot \Psi_0, \dot \Psi_4$. 

\newcommand{\sfrak}{\mathfrak{s}}
Letting $\psi^{(\sfrak)}$ denote $\phi_0, \Psi_2^{-2/3} \phi_2$ for $\sfrak = 1, -1$, respectively, and $\dot \Psi_0, \Psi_2^{-4/3} \dot \Psi_4$ for $\sfrak = 2, -2$, respectively, one finds that these fields satisfy the system  
\begin{equation}\label{eq:squareTME} 
[ \squareTME_{2\sfrak} - 4 \sfrak^2 \Psi_2 ] \psi^{(\sfrak)} = 0 ,
\end{equation} 
see \cite[\S 3]{aksteiner:andersson:2011CQGra..28f5001A}, 
where, in GHP notation 
\begin{equation}\label{eq:squaretme-restrict}
\squareTME_{p} =
2(\tho -p\rho -\bar{\rho})({\tho}'-\rho')- 2(\edt-p\tau
-\bar{\tau}')({\edt}'-\tau')  + (3p-2)\Psi_2 .
\end{equation}
The equation \eqref{eq:squareTME} was first derived by Teukolsky \cite{teukolsky:1972PhRvL..29.1114T,teukolsky:1973} for massless spin-$s$ fields and linearized gravity on Kerr, and is referred to as the Teukolsky Master Equation (TME). It was shown by Ryan \cite{ryan:1974PhRvD..10.1736R} that the tetrad projection of the linearized Penrose wave equation yields the TME, see also Bini et al \cite{bini:etal:2002PThPh.107..967B,bini:etal:2003IJMPD..12.1363B}. In the Kerr case, the TME admits a commuting symmetry operator, and hence allows separation of variables. The TME applies to fields of all half-integer spins between $0$ and $2$. 

As discussed above, the TME is a wave equation for the weighted field $\psi^{(\sfrak)}$.  
It is derived from the spin-$s$ field equation by applying a first order operator and hence is valid for the extreme scalar components of the field, rescaled as explained above. It is important to emphasize that there is a loss of information in deriving the TME from the spin-$s$ equation. For example, if we consider two independent solutions of the TME with spin weights $\sfrak=\pm 1$, these will not in general be components of a single Maxwell field. If indeed this is the case, the Teukolsky-Starobinsky identities (TSI) (also referred to as Teukolsky-Press relations), see \cite{kalnins:etal:1989JMP....30.2925K} and references therein, hold. 

The TME admits commuting symmetry operators $\Sop_{\sfrak}, \Rop_{\sfrak}$, so that 
$$
 \squareTME_{2\sfrak} - 4 \sfrak^2 \Psi_2 = \Rop_{\sfrak} + \Sop_{\sfrak}
$$
with $[\Rop_{\sfrak},\Sop_{\sfrak}] = 0$, and such that as in the case for the wave equation discussed in section \ref{sec:kerrwave}, the operators $\Rop_{\sfrak},\Sop_{\sfrak}$ involve derivatives with respect to $r$ and $\theta$, respectively, in addition to derivatives in the symmetry directions $t,\phi$. This shows that one may make a consistent separated ansatz 
$$
\psi^{(\sfrak)}(t,r,\theta,\phi) = e^{-i\omega t} e^{im\phi} \Rsol^{(\sfrak)}(r) \Ssol^{(\sfrak)}(\theta) 
$$
where $\Rsol^{(\sfrak)}$ solves the radial TME 
$$
(\Rop_\sfrak + \lambda_{\sfrak,\omega,\ell,m} ) \Rsol^{(\sfrak)} = 0
$$
where $\lambda_{\sfrak,\omega,\ell,m}$ is an eigenvalue for the angular Teukolsky equation $\Sop_\sfrak \Ssol^{(\sfrak)} = \lambda \Ssol^{(\sfrak)}$, which is the equation for a spin-weighted spheroidal harmonic. 

Although the TSI are usually discussed in terms of separated forms of $\psi^{(\sfrak)}$, we are here interested in the TSI as differential relations between the scalars of extreme spin weights. From this point of view, the TSI expresses the fact that the Debye potential construction starting from the different Maxwell scalars for a given Maxwell field $\phi_{AB}$ yields scalars of the \emph{the same} Maxwell field. The equations for the Maxwell scalars in terms of Debye potentials can be found in Newman-Penrose notation in \cite{cohen:kegeles:1974PhRvD..10.1070C}. These expressions correspond to the components of a symmetry operator of the second kind. See \cite[\S 5.4.2]{aksteiner:thesis} for further discussion, where also the GHP version of the formulas can be found. 
An analogous situation obtains for the case of linearized gravity, see \cite{lousto:whiting:2002PhRvD..66b4026L}. In this case, the TSI are of fourth order.  
Thus, for a Maxwell field, or a solution of the linearized Einstein equations on a Kerr, or more generally a vacuum type $\PetrovD$ background, the pair of Newman-Penrose scalars of extreme spin weights for the field satisfy a system of differential equations consisting of both the TME and the TSI. 

Although the TME is derived from an equation governed by a variational principle, it has been argued by Anco, see the discussion in \cite{perjes:lukacs_2005AIPC..767..306P}, that the Teukolsky system admits no \emph{real} variational principle, due to the fact that the operator $\squareTME_p $ defined by the above fails to be formally self-adjoint. Hence, the issue of real conserved currents for the Teukolsky system, which appear to be necessary for estimates of the solutions, appears to be open. 
However, as we shall demonstrate here, if we consider the \emph{combined} TME and TSI in the spin-$1$ or Maxwell case, as a system of equations for both of the extreme Maxwell scalars $\phi_0, \phi_2$, this system does admit both a conserved current and a conserved stress-energy like tensor.

\subsection{A new conserved tensor for Maxwell} \label{sec:Vab} 
Let $T_{ab}$ be the Maxwell stress-energy tensor, and let 
$\phi_{AB} \to \chi_{AB}$ be the second order symmetry operator of the first kind given by \eqref{eq:SymFirstPot} with $Q = 0$ and $A_{AA'}$ given by \eqref{eq:Adef}. Then the current $T_{ab} \xi^b$ is conserved. In fact, as discussed in \cite[section 6]{2015arXiv150402069A}, it  is equivalent to a current $V_{ab} \xi^b$ defined in terms of a symmetric tensor $V_{ab}$ which we shall now introduce. 
Let 
\begin{align}
\eta_{AA'}\equiv{}&(\sCurlDagger_{2,0} \sExt_{2,0}\phi)_{AA'}. \label{eq:etadef}
\end{align}
where $(\sExt_{2,0}\phi)_{AB}$ is given by \eqref{eq:Ephidef} and define the symmetric tensor $V_{ab}$ by  
\begin{align}
V_{ABA'B'}\equiv{}&\tfrac{1}{2} \eta_{AB'} \bar{\eta}_{A'B}
 + \tfrac{1}{2} \eta_{BA'} \bar{\eta}_{B'A}
 + \tfrac{1}{3} (\sExt_{2,0}\phi)_{AB} (\hat{\mathcal{L}}_{\bar\xi}\bar{\phi})_{A'B'}
 + \tfrac{1}{3} (\bar{\sExt}_{2,0} \bar{\phi})_{A'B'} (\hat{\mathcal{L}}_{\xi}\phi)_{AB}. \label{eq:Vdef}
\end{align}
Then,  as we shall now show, $V_{ab}$ is itself conserved,  
$$
\nabla^a V_{ab} = 0, 
$$
and hence may be viewed as a higher-order stress-energy tensor for the Maxwell field. 
The tensor $V_{ab}$ has several important properties. First of all, if $\Mcal$ is of Petrov type $\PetrovD$, it depends only on the extreme Maxwell scalars $\phi_0, \phi_2$, and hence cancels the static Coulomb Maxwell field \eqref{eq:coulomb} on Kerr which has only the middle scalar non-vanishing. This can be proved using Lemma \ref{lem:Ephi-extreme}, cf. \cite[Corollary 6.2]{2015arXiv150402069A}. Further, the tensor 
$$
U_{AA'BB'} = \tfrac{1}{2} \eta_{AB'} \bar{\eta}_{A'B}
 + \tfrac{1}{2} \eta_{BA'} \bar{\eta}_{B'A}
 $$
is a superenergy tensor for the 1-form field $\eta_{AA'}$, and hence satisfies the dominant energy condition, cf. \cite{bergqvist:1999CMaPh.207..467B,senovilla:2000CQGra..17.2799S}. Note that the notion of superenergy tensor extends to spinors of arbitrary valence. Similarly to the wave equation stress energy, $V_{ab}$ has non-vanishing trace, $V^a{}_a = U^a{}_a = - \bar \eta^a \eta_a$.  

In order to analyze $V_{ab}$, we first collect some properties of the one-form $\eta_{AA'}$ as defined in \eqref{eq:etadef}. 

\begin{lemma}[\protect{\cite[Lemma 2.4]{2014arXiv1412.2960A}}] 
\label{lem:etaeqs} 
Let $\kappa_{AB} \in \KillSpin_{2,0}$, and assume the aligned matter condition holds with respect to $\kappa_{AB}$. Let $\xi_{AA'}$ be given by \eqref{eq:xikappadef}. Further, let $\phi_{AB}$ be a Maxwell field, and let $\eta_{AA'}$ be given by \eqref{eq:etadef}. Then we have 

\begin{subequations} \label{eq:etafacts} 
\begin{align}
(\sDiv_{1,1} \eta)={}&0,\label{diveta1}\\
(\sCurl_{1,1} \eta)_{AB}={}&\tfrac{2}{3} (\hat{\mathcal{L}}_{\xi}\phi)_{AB},\label{curleta1b2}\\
(\sCurlDagger_{1,1} \eta)_{A'B'}={}&0,  \label{curleta2}\\
\eta_{AA'} \xi^{AA'}={}&\kappa^{AB} (\hat{\mathcal{L}}_{\xi}\phi)_{AB}. \label{eq:etaLphi} 
\end{align}
\end{subequations}
\end{lemma}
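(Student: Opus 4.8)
The statement to prove is Lemma~\ref{lem:etaeqs}, giving four identities for the one-form $\eta_{AA'} = (\sCurlDagger_{2,0}\sExt_{2,0}\phi)_{AA'}$.

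\textbf{Overall approach.} The plan is to expand $\eta_{AA'}$ using the definition of $\sExt_{2,0}$ in \eqref{eq:Ephidef}, namely $(\sExt_{2,0}\phi)_{AB} = -2\kappa_{(A}{}^C\phi_{B)C}$, and then apply the various fundamental operators $\sDiv_{1,1}$, $\sCurl_{1,1}$, $\sCurlDagger_{1,1}$ to the resulting expression. At each stage I would use the Leibniz rule for $\nabla_{AA'}$ acting on the product $\kappa_{(A}{}^C\phi_{B)C}$, converting covariant derivatives of $\kappa_{AB}$ into irreducible pieces. The Killing spinor equation $(\sTwist_{2,0}\kappa)_{ABCA'}=0$ says precisely that the totally symmetric part $\nabla_{(A}{}^{A'}\kappa_{BC)}$ vanishes, so $\nabla_{AA'}\kappa_{BC}$ reduces to a term of the form $\epsilon_{A(B}\xi_{C)A'}$ where $\xi_{AA'} = (\sCurlDagger_{2,0}\kappa)_{AA'}$ by \eqref{eq:xikappadef} — i.e. $\nabla_{A}{}^{A'}\kappa_{BC} = -\tfrac{2}{3}\epsilon_{A(B}\xi_{C)}{}^{A'}$ up to a numerical constant I would pin down from the operator normalizations. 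Meanwhile derivatives of $\phi_{AB}$ are controlled by the Maxwell equation $(\sCurlDagger_{2,0}\phi)_{AA'}=0$, so $\nabla_{A}{}^{A'}\phi_{BC}$ reduces to its symmetric (curl) part only. Finally the aligned matter condition $\Psi_{(ABC}{}^F\kappa_{D)F}=0$, equivalently \eqref{eq:curv-restrict-valence-2}, is needed to kill curvature terms that arise when commuting derivatives — these enter through the spinor Ricci identities relating $\nabla_{AA'}\nabla_B{}^{A'}$ to $\Psi_{ABCD}$ and $\Lambda$.

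\textbf{Step-by-step.} First, I would establish the reduction formula for $\nabla_{AA'}\kappa_{BC}$ and for $\nabla_{AA'}\phi_{BC}$ as above, and record the commutator/Ricci identities in 2-spinor form (following Penrose--Rindler). Second, for \eqref{diveta1} I would compute $(\sDiv_{1,1}\eta) = \nabla^{AA'}(\sCurlDagger_{2,0}\sExt_{2,0}\phi)_{AA'}$; since $\sDiv$ followed by $\sCurlDagger$ on a valence $(2,0)$ object gives a second-order operator whose symbol is a box plus curvature terms, I expect the Maxwell equation plus the Buchdahl-type/aligned-matter identity to force this to vanish. Third, \eqref{curleta2} $(\sCurlDagger_{1,1}\eta)_{A'B'}=0$: here $\eta = \sCurlDagger_{2,0}(\cdots)$, and $\sCurlDagger_{1,1}\sCurlDagger_{2,0}$ is, up to curvature, a composition that annihilates because of the commutator structure (this is morally $\sCurlDagger^2$-type vanishing, valid when the relevant curvature obstruction vanishes — supplied by aligned matter). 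Fourth, \eqref{curleta1b2}: compute $(\sCurl_{1,1}\eta)_{AB} = \nabla_{(A}{}^{B'}\eta_{B)B'}$ and massage the result, using the reduction formulas, into $\tfrac{2}{3}(\hat{\mathcal L}_\xi\phi)_{AB}$; here I would need the explicit definition of the conformally weighted Lie derivative $\hat{\mathcal L}_\xi\phi$ and the fact that $\xi$ is (conformally) Killing — which follows since $\xi_{AA'}=(\sCurlDagger_{2,0}\kappa)_{AA'}$ is a Killing field by the discussion around \eqref{eq:xikappadef}. Fifth, \eqref{eq:etaLphi}: contract $\eta_{AA'}$ with $\xi^{AA'}$ and again reduce using Leibniz and the Killing spinor equation; the $\kappa^{AB}(\hat{\mathcal L}_\xi\phi)_{AB}$ on the right should emerge after the symmetrizations collapse.

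\textbf{Main obstacle.} The hard part will be bookkeeping the numerical constants and the index symmetrizations correctly through compositions of the fundamental operators, and — more substantively — making sure every curvature term that appears upon commuting derivatives is actually eliminated. The aligned matter condition is exactly what is needed: it gives $\Psi_{(ABC}{}^F\kappa_{D)F}=0$, but one must check that the precise contraction of $\Psi_{ABCD}$ with $\kappa$ and $\phi$ that shows up is of this annihilated form (possibly after using the Maxwell equation and the symmetries of $\Psi$). I would expect this to be the step where a naive computation stalls, and where one has to invoke the identities \eqref{eq:wavegen}--type relations or the commutator tables of \cite{ABB:symop:2014CQGra..31m5015A} rather than brute force. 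A cleaner alternative, which I would try first, is to use the operator identities from \cite{ABB:symop:2014CQGra..31m5015A} directly: express $\eta = \sCurlDagger_{2,0}\sExt_{2,0}\phi$ and apply known commutators for $\sDiv\sCurlDagger$, $\sCurlDagger\sCurlDagger$, $\sCurl\sCurlDagger$ composed with $\sExt_{2,0}$, reducing the whole lemma to algebra in the abstract operator calculus plus the Killing spinor and aligned matter hypotheses.
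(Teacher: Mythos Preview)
The paper does not give its own proof of this lemma; it simply cites \cite[Lemma~2.4]{2014arXiv1412.2960A}. So there is no in-paper argument to compare against. Your strategy---expand $\eta_{AA'}$ via the definition of $\sExt_{2,0}$, reduce $\nabla\kappa$ using the Killing spinor equation and $\nabla\phi$ using the Maxwell equation, then kill curvature remainders with the commutator tables of \cite{ABB:symop:2014CQGra..31m5015A}---is the right one, and your alternative of working directly in the operator calculus is indeed the cleaner route and almost certainly what the cited proof does.

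One genuine correction, though. You identify the aligned matter condition with \eqref{eq:curv-restrict-valence-2}, i.e.\ $\Psi_{(ABC}{}^{E}\kappa_{D)E}=0$. These are \emph{not} the same thing. The relation \eqref{eq:curv-restrict-valence-2} is an \emph{integrability condition} on the Weyl spinor that follows automatically from $\kappa_{AB}\in\KillSpin_{2,0}$ (cf.\ the discussion around \eqref{eq:killspin-cond-general}--\eqref{eq:curv-restrict-valence-2}); it is not an additional hypothesis. The aligned matter condition is a separate assumption on the \emph{Ricci spinor} $\Phi_{ABA'B'}$---roughly that the matter is aligned with the principal structure determined by $\kappa_{AB}$---and in vacuum it holds trivially. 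When you commute derivatives via the spinor Ricci identities, both $\Psi$-type and $\Phi$-type curvature terms appear: the former are eliminated by the automatic integrability condition \eqref{eq:curv-restrict-valence-2}, the latter by the aligned matter hypothesis. Your outline conflates the two, so as written you would be invoking the wrong condition at the key step. This is easy to fix once you see the $\Phi_{ABA'B'}$ terms appear explicitly, but you should be aware going in that the aligned matter condition is about $\Phi$, not $\Psi$.
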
 

The following Lemma gives a general condition, without assumptions on the spacetime geometry, for a tensor constructed along the lines of $V_{ab}$ to be conserved. The proof is a straightforward computation. 
\begin{lemma}[\protect{\cite[Lemma 3.1]{2014arXiv1412.2960A}}] \label{lem:Tconserved}
Assume that $\varphi_{AB}\in \SymSpinSec_{2,0}$ satisfies the system
\begin{subequations} \label{eq:varphieqs} 
\begin{align}
(\sCurlDagger_{1,1} \sCurlDagger_{2,0} \varphi)_{A'B'}={}&0,\label{eq:CurlDgCurlDgvarphi1}\\
(\sCurl_{1,1} \sCurlDagger_{2,0} \varphi)_{AB}={}&\varpi_{AB},\label{eq:CurlCurlDgvarphi1}
\end{align}
\end{subequations}
for some $\varpi_{AB} \in \SymSpinSec_{2,0}$. 
Let   
 \begin{align}\label{eq:etavarphidef} 
\etavarphi_{AA'}={}&(\sCurlDagger_{2,0} \varphi)_{AA'},
\end{align}
and define the symmetric tensor 
$\EMTensorT_{ABA'B'}$ by 
\begin{align}
\EMTensorT_{ABA'B'}={}&\tfrac{1}{2} \etavarphi_{AB'} \bar{\etavarphi}_{A'B}
 + \tfrac{1}{2} \etavarphi_{BA'} \bar{\etavarphi}_{B'A}
 + \tfrac{1}{2} \bar{\varpi}_{A'B'} \varphi_{AB}
 + \tfrac{1}{2} \varpi_{AB} \bar{\varphi}_{A'B'}.
\end{align}
Then
\begin{align} \label{eq:EMTensorcons} 
\nabla^{BB'}\EMTensorT_{ABA'B'}={}&0.
\end{align}
\end{lemma}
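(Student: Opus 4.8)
The plan is to compute $\nabla^{BB'}\EMTensorT_{ABA'B'}$ by the Leibniz rule and then to rewrite each differentiated factor --- with its index contracted against $\nabla^{BB'}$ --- in terms of the fundamental operators, so that the hypotheses \eqref{eq:CurlDgCurlDgvarphi1}--\eqref{eq:CurlCurlDgvarphi1} can be substituted directly. First I would collect the input data. By \eqref{eq:etavarphidef}, $\etavarphi_{AA'}=(\sCurlDagger_{2,0}\varphi)_{AA'}=\nabla^{B}{}_{A'}\varphi_{AB}$, and since $\varphi_{AB}$ and $\bar\varphi_{A'B'}$ are symmetric, contracting a derivative into them reproduces $\etavarphi$, i.e. $\nabla^{BB'}\varphi_{AB}=\etavarphi_{A}{}^{B'}$ and $\nabla^{BB'}\bar\varphi_{A'B'}=\bar\etavarphi_{A'}{}^{B}$. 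The hypotheses read $(\sCurlDagger_{1,1}\etavarphi)_{A'B'}=0$ and $(\sCurl_{1,1}\etavarphi)_{AB}=\varpi_{AB}$; conjugating gives $(\sCurl_{1,1}\bar\etavarphi)_{AB}=0$ and $(\sCurlDagger_{1,1}\bar\etavarphi)_{A'B'}=\bar\varpi_{A'B'}$. I also record two identities which hold for \emph{every} $\varphi\in\SymSpinSec_{2,0}$, with no assumption on the geometry: $\sDiv_{1,1}\sCurlDagger_{2,0}\varphi=0$, and --- granted $\sCurlDagger_{1,1}\etavarphi=0$ --- also $\sCurlDagger_{2,0}\varpi=\sCurlDagger_{2,0}\sCurl_{1,1}\etavarphi=0$ (with conjugate $\sCurl_{0,2}\bar\varpi=0$). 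Both are most transparent in $2$-form language: writing $F_{ab}=\varphi_{AB}\epsilon_{A'B'}$ for the anti-self-dual $2$-form of $\varphi$, the spinor $\sCurlDagger_{2,0}\varphi$ is (up to sign) its coderivative $\delta F$, so $\sDiv_{1,1}\sCurlDagger_{2,0}\varphi\leftrightarrow\delta\delta F=0$; and the condition $\sCurlDagger_{1,1}\etavarphi=0$ says precisely that $d\etavarphi$ has vanishing self-dual part, hence is a closed anti-self-dual $2$-form, which is therefore co-closed, and $\delta(d\etavarphi)$ is a non-zero multiple of $\sCurlDagger_{2,0}\varpi$. (Alternatively, both identities follow from the composition relations among the fundamental operators in \cite{ABB:symop:2014CQGra..31m5015A}.) In particular $\sDiv_{1,1}\etavarphi=\sDiv_{1,1}\bar\etavarphi=0$.

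Next I would expand $\nabla^{BB'}$ over the four summands of $\EMTensorT_{ABA'B'}$, producing eight terms, and in each one convert the contracted derivative using the canonical decomposition of $\nabla_{AA'}$ on a symmetric spinor into its $\sDiv,\sCurl,\sCurlDagger,\sTwist$ parts. The bookkeeping collapses each piece to one of $\varpi,\bar\varpi,\etavarphi,\bar\etavarphi$ or $0$: the pieces built from $\sCurlDagger_{1,1}\etavarphi$, $\sCurl_{1,1}\bar\etavarphi$, $\sDiv_{1,1}\etavarphi$, $\sDiv_{1,1}\bar\etavarphi$, $\sCurlDagger_{2,0}\varpi$ and $\sCurl_{0,2}\bar\varpi$ all vanish, while the two terms in which $\nabla^{BB'}$ hits $\varphi$ or $\bar\varphi$ give $\bar\varpi_{A'B'}\etavarphi_{A}{}^{B'}$ and $\varpi_{AB}\bar\etavarphi_{A'}{}^{B}$. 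The remaining four terms fall into two cancelling pairs: the $\varpi\,\bar\etavarphi$ contribution coming from the super-energy part of $\EMTensorT$ cancels the one coming from the potential part (they differ only by an $\epsilon$-raising, hence a sign), and likewise the conjugate pair $\bar\varpi\,\etavarphi$ cancels. Hence $\nabla^{BB'}\EMTensorT_{ABA'B'}=0$.

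The only genuinely content-bearing step is the vanishing of $\sCurlDagger_{2,0}\sCurl_{1,1}\etavarphi$; this is where the hypothesis $\sCurlDagger_{1,1}\etavarphi=0$ (rather than merely some ambient curvature assumption) is used, and it is the reason no condition on the Weyl or Ricci spinor is needed in the statement. Everything else is $\epsilon$-contraction bookkeeping, and the main practical obstacle is to keep all signs and contraction patterns exact through the eight Leibniz terms so that the two cancellations above are clean; this is the ``straightforward but somewhat lengthy'' computation, most safely carried out within the $2$-spinor/GHP calculus or symbolically. It should mirror the concrete situation of Lemma~\ref{lem:etaeqs}, where the abstract hypotheses on $(\etavarphi,\varpi)$ are realised by $(\eta,\tfrac23\hat{\mathcal L}_{\xi}\phi)$.
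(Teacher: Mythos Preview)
Your proposal is correct and is precisely a detailed fleshing-out of what the paper records only as ``The proof is a straightforward computation.'' You have correctly identified the two auxiliary identities that make the computation close without curvature assumptions: $\sDiv_{1,1}\sCurlDagger_{2,0}\varphi=0$ (which is $\delta^2=0$ on the anti-self-dual $2$-form of $\varphi$) and $\sCurlDagger_{2,0}\varpi=0$ (which follows because $\sCurlDagger_{1,1}\etavarphi=0$ forces $d\etavarphi$ to be anti-self-dual, hence closed iff co-closed). With these in hand, your Leibniz expansion and the pairwise cancellation between the super-energy and potential parts via an $\epsilon$-contraction sign is exactly right; a quick check shows, e.g., that term (i-a) reduces to $-\tfrac12\varpi_{AB}\bar\etavarphi_{A'}{}^{B}$ and cancels term (iv-b). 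One small wording issue: after the vanishing pieces are discarded there are only \emph{two} nonzero terms from the super-energy part, not four, so your ``remaining four terms'' really refers to the four nonzero terms total --- two from each part --- forming two cancelling pairs.
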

We now have the following result, which follows directly from Lemma \ref{lem:Tconserved} and the identities for $\eta_{AA'}$ given in Lemma \ref{lem:etaeqs} together with the above remarks. 
\begin{theorem}[\protect{\cite[Theorem 1.1]{2014arXiv1412.2960A}}] Assume that $(\Mcal, \met_{ab})$ admits a valence $(2,0)$ Killing spinor $\kappa_{AB}$ and assume that the aligned matter condition holds with respect to $\kappa_{AB}$.  
Let $\phi_{AB}$ be a solution of the Maxwell equation. Then the tensor 
$V_{ABA'B'}$ given by \eqref{eq:Vdef} is conserved, i.e. 
$$
\nabla^{AA'} V_{ABA'B'} = 0
$$
If in addition $(\Mcal, \met_{ab})$ is of Petrov type $\PetrovD$, then $V_{ab}$ depends only on the extreme components of $\phi_{AB}$.  
\end{theorem}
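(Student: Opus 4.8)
The plan is to deduce the conservation statement from the abstract identity of Lemma~\ref{lem:Tconserved}, applied to the spinor field $\varphi_{AB} = (\sExt_{2,0}\phi)_{AB}$, which lies in $\SymSpinSec_{2,0}$ because $\phi_{AB}$ does. With this choice the one-form $\etavarphi_{AA'} = (\sCurlDagger_{2,0}\varphi)_{AA'}$ of \eqref{eq:etavarphidef} is exactly the one-form $\eta_{AA'} = (\sCurlDagger_{2,0}\sExt_{2,0}\phi)_{AA'}$ of \eqref{eq:etadef}, so the conserved tensor $\EMTensorT_{ABA'B'}$ produced by Lemma~\ref{lem:Tconserved} should turn out to be $V_{ABA'B'}$ once the right source term $\varpi_{AB}$ has been identified.

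First I would check that $\varphi_{AB} = (\sExt_{2,0}\phi)_{AB}$ satisfies the system \eqref{eq:varphieqs}. Since the aligned matter condition holds with respect to $\kappa_{AB}$ and $\phi_{AB}$ is a Maxwell field, Lemma~\ref{lem:etaeqs} applies to $\eta_{AA'}$. Because $(\sCurlDagger_{2,0}\varphi)_{AA'} = \eta_{AA'}$, equation \eqref{curleta2} gives $(\sCurlDagger_{1,1}\sCurlDagger_{2,0}\varphi)_{A'B'} = (\sCurlDagger_{1,1}\eta)_{A'B'} = 0$, which is \eqref{eq:CurlDgCurlDgvarphi1}; and \eqref{curleta1b2} gives $(\sCurl_{1,1}\sCurlDagger_{2,0}\varphi)_{AB} = (\sCurl_{1,1}\eta)_{AB} = \tfrac{2}{3}(\hat{\mathcal{L}}_{\xi}\phi)_{AB}$, so \eqref{eq:CurlCurlDgvarphi1} holds with $\varpi_{AB} = \tfrac{2}{3}(\hat{\mathcal{L}}_{\xi}\phi)_{AB} \in \SymSpinSec_{2,0}$ (equation \eqref{diveta1} is not needed). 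Lemma~\ref{lem:Tconserved} then delivers $\nabla^{BB'}\EMTensorT_{ABA'B'} = 0$ for
\[
\EMTensorT_{ABA'B'} = \tfrac12 \eta_{AB'}\bar\eta_{A'B} + \tfrac12 \eta_{BA'}\bar\eta_{B'A} + \tfrac12 \bar\varpi_{A'B'}\varphi_{AB} + \tfrac12 \varpi_{AB}\bar\varphi_{A'B'}.
\]

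The step requiring the most care — and the one I would flag as the main (though ultimately mild) obstacle — is verifying that this $\EMTensorT_{ABA'B'}$ is literally the tensor $V_{ABA'B'}$ of \eqref{eq:Vdef}. Substituting $\varpi_{AB} = \tfrac{2}{3}(\hat{\mathcal{L}}_{\xi}\phi)_{AB}$ and $\varphi_{AB} = (\sExt_{2,0}\phi)_{AB}$, and using the complex-conjugation rules $\overline{(\sExt_{2,0}\phi)}_{A'B'} = (\bar{\sExt}_{0,2}\bar\phi)_{A'B'}$ and $\overline{(\hat{\mathcal{L}}_{\xi}\phi)}_{A'B'} = (\hat{\mathcal{L}}_{\bar\xi}\bar\phi)_{A'B'}$, the last two terms become $\tfrac13 (\sExt_{2,0}\phi)_{AB}(\hat{\mathcal{L}}_{\bar\xi}\bar\phi)_{A'B'} + \tfrac13 (\bar{\sExt}_{0,2}\bar\phi)_{A'B'}(\hat{\mathcal{L}}_{\xi}\phi)_{AB}$, which is precisely the corresponding pair in \eqref{eq:Vdef}, while the first two terms already coincide. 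Hence $V_{ABA'B'} = \EMTensorT_{ABA'B'}$ and $\nabla^{AA'}V_{ABA'B'} = 0$. The delicate points are purely bookkeeping: tracking the numerical factors ($\tfrac12\cdot\tfrac23 = \tfrac13$) and keeping $\bar\xi$ rather than $\xi$ throughout, since $\xi_{AA'}$ from \eqref{eq:xikappadef} need not be real.

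For the final claim, assume in addition that $(\Mcal,\met_{ab})$ is of Petrov type $\PetrovD$. Then the Killing spinor $\kappa_{AB}$ is of algebraic type $\{1,1\}$, so Lemma~\ref{lem:Ephi-extreme} shows that $(\sExt_{2,0}\phi)_{AB}$ has vanishing middle component and extreme components $-2\kappa_1\phi_0$ and $2\kappa_1\phi_2$; thus $\sExt_{2,0}\phi$, and with it $\eta_{AA'} = (\sCurlDagger_{2,0}\sExt_{2,0}\phi)_{AA'}$, is built only from $\phi_0,\phi_2$ and their derivatives. Rearranging \eqref{curleta1b2} gives $(\hat{\mathcal{L}}_{\xi}\phi)_{AB} = \tfrac{3}{2}(\sCurl_{1,1}\eta)_{AB}$, so the remaining constituent of \eqref{eq:Vdef} likewise depends only on $\phi_0,\phi_2$. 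Therefore every term of $V_{ABA'B'}$ is expressible in terms of the extreme scalars alone, as claimed; in particular $V_{ab}$ annihilates the Coulomb field \eqref{eq:coulomb}, for which $\phi_0 = \phi_2 = 0$. This completes the argument.
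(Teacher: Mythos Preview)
Your proof is correct and follows essentially the same route as the paper: the conservation of $V_{ab}$ is obtained by feeding $\varphi_{AB}=(\sExt_{2,0}\phi)_{AB}$ and $\varpi_{AB}=\tfrac{2}{3}(\hat{\mathcal L}_\xi\phi)_{AB}$ into Lemma~\ref{lem:Tconserved}, with the hypotheses \eqref{eq:varphieqs} supplied by Lemma~\ref{lem:etaeqs}, and the Petrov type~$\PetrovD$ statement is deduced from Lemma~\ref{lem:Ephi-extreme}. Your extra step of rewriting $(\hat{\mathcal L}_\xi\phi)_{AB}=\tfrac{3}{2}(\sCurl_{1,1}\eta)_{AB}$ to handle the last constituent of \eqref{eq:Vdef} is a clean way to make the extreme-component dependence explicit.
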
 

The properties of $V_{ab}$ indicate that $V_{ab}$, rather than the Maxwell stress-energy $T_{ab}$ may be used in proving dispersive estimates for the Maxwell field. In section \ref{sec:morawetz} we shall outline the proof of a Morawetz estimate for the Maxwell field on the Schwarzschild background, making use of a related approach.

\subsection{Teukolsky equation and conservation laws} 
We end this section by pointing out the relation between the fact that $V_{ab}$ is conserved, and the TME and TSI which follow from the Maxwell equation in a Petrov type $\PetrovD$ spacetime. 

A computation shows that the identities  
\begin{subequations} \label{eq:TME-TSI-cov}
\begin{align} 
(\sCurlDagger_{1,1} \sCurlDagger_{2,0} \sExt_{2,0} \phi)_{A'B'} &= 0 
\label{eq:bCdCdEphi}
\\
(\sExt_{2,0}\sCurl_{1,1} \sCurlDagger_{2,0} \sExt_{2,0} \phi)_{AB}={}&\tfrac{2}{3} (\hat{\mathcal{L}}_{\xi}\sExt_{2,0} \phi)_{AB}.
\label{eq:ECCdEphi}
\end{align} 
\end{subequations}
follow from the Maxwell equations, cf. \cite[Eq. (3.5)]{2014arXiv1412.2960A}. We see that this system is equivalent to \eqref{eq:CurlDgCurlDgvarphi1}, \eqref{eq:CurlCurlDgvarphi1}, with $\varphi_{AB}=\sExt_{2,0} \phi$ and $\varpi_{AB}=\tfrac{2}{3} (\hat{\mathcal{L}}_{\xi}\phi)_{AB}$. This shows that the fact that $V_{ab}$ is conserved is a direct consequence of \eqref{eq:TME-TSI-cov}, which in fact are the covariant versions of the TME and TSI. In order to make this clear for the case of the TME, we project \eqref{eq:ECCdEphi}on the dyad. A calculation shows that 
\begin{subequations}
\begin{align}
0={}&- \tho \tho' \varphi_0
 + \rho \tho' \varphi_0
 + \bar{\rho} \tho' \varphi_0
 + \edt \edt' \varphi_0
 -  \tau \edt' \varphi_0
 -  \overline{\tau '} \edt' \varphi_0,\\
0={}&- \rho ' \tho \varphi_2
 -  \overline{\rho '} \tho \varphi_2
 + \tho' \tho \varphi_2
 + \bar{\tau} \edt \varphi_2
 + \tau ' \edt \varphi_2
 -  \edt' \edt \varphi_2.
\end{align}
\end{subequations}
where $\varphi_{0}=-2 \kappa_1 \phi_0$ and $\varphi_{2}=2 \kappa_1 \phi_2$. 
We see from this that \eqref{eq:ECCdEphi} is equivalent to the scalar form of TME for Maxwell given in \eqref{eq:squareTME} above. Further, one can show along the same lines that 
\eqref{eq:bCdCdEphi} is equivalent to the TSI for Maxwell given in scalar form in \cite[\S 5.4.2]{aksteiner:thesis}, cf. \cite[\S 3.1]{2014arXiv1412.2960A}.

\section{A Morawetz estimate for the Maxwell field on Schwarzschild} \label{sec:morawetz} 
In this section, we shall outline the proof of the Morawetz estimate for the Maxwell field on the Schwarzschild spacetime given recently in \cite{2015arXiv150104641A}. 

Assume that $\kappa_{AB}$ is a valence $(2,0)$ Killing spinor, such that $\kappa_{CD} \kappa^{CD}\neq 0$. In the case of the Schwarzschild spacetime, $\kappa_{AB}$ is given by \eqref{eq:kappaKerrNewman}. Define the Killing fields $\xi^{AA'}, \eta^{AA'}$ in terms of $\kappa_{AB}$ by \eqref{eq:xikappadef} and \eqref{eq:etadef}, respectively. 

Now let $\phi_{AB}$ be a solution to the source-free Maxwell equation $(\sCurlDagger_{2,0} \phi)_{AA'} = 0$ and define
\begin{subequations}
\begin{align}
U_{AA'}={}&- \tfrac{1}{2} \nabla_{AA'}\log(-\kappa_{CD} \kappa^{CD}),\\
\Upsilon={}&\kappa^{AB} \phi_{AB},\\
\Theta_{AB}={}&(\sExt_{2,0}\phi)_{AB},\label{ThetaDef}\\
\beta_{AA'}={}&\eta_{AA'}+U^{B}{}_{A'} \Theta_{AB}. \label{betaDef}
\end{align}
\end{subequations}
In the Schwarzschild case, we have 
\begin{align}
\xi^{AA'}={}&(\partial_t)^{AA'},&
U_{AA'}={}&- r^{-1}\nabla_{AA'}r.
\end{align}
Analogously to Lemma~\ref{lem:etaeqs}, we have
\begin{lemma}[\protect{\cite[Lemma 8]{2015arXiv150104641A}}] \label{lem:betaprop}
\begin{subequations}
\begin{align}
\beta_{AA'}={}&- U_{AA'} \Upsilon
 + (\sTwist_{0,0} \Upsilon)_{AA'},\label{eq:betaToUpsilon}\\
(\sDiv_{1,1} \beta)={}&- U^{AA'} \beta_{AA'},\label{eq:Divbeta}\\
(\sCurl_{1,1} \beta)_{AB}={}&U_{(A}{}^{A'}\beta_{B)A'},\label{eq:Curlbeta}\\
(\sCurlDagger_{1,1} \beta)_{A'B'}={}&U^{A}{}_{(A'}\beta_{|A|B')}.\label{eq:CurlDgbeta}
\end{align}
\end{subequations}
\end{lemma}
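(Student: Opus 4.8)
The statement to prove is Lemma~\ref{lem:betaprop}: the four GHP/spinor identities for the one-form $\beta_{AA'} = \eta_{AA'} + U^B{}_{A'}\Theta_{AB}$, where $\Upsilon = \kappa^{AB}\phi_{AB}$, $\Theta_{AB} = (\sExt_{2,0}\phi)_{AB}$, and $U_{AA'} = -\tfrac12\nabla_{AA'}\log(-\kappa_{CD}\kappa^{CD})$. Here is how I would proceed.

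\medskip

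\emph{Step 1: Express $\eta_{AA'}$ and $\Theta_{AB}$ in terms of $\Upsilon$, $U_{AA'}$ and $\kappa_{AB}$.} The key structural fact is that in a Petrov type $\PetrovD$ (or more generally, whenever $\kappa_{AB}$ is a Killing spinor of algebraic type $\{1,1\}$) the operator $\sExt_{2,0}$ annihilates the middle component and rescales the extreme ones by $\mp 2\kappa_1$, by Lemma~\ref{lem:Ephi-extreme}. I would first rewrite everything covariantly: using the Killing spinor equation $(\sTwist_{2,0}\kappa)_{ABCA'}=0$, i.e. $\nabla_{A'(A}\kappa_{BC)}=0$, one gets $\nabla_{BA'}\kappa_{AC} = \tfrac12 \epsilon_{AB}(\sCurlDagger_{2,0}\kappa)_{CA'} + \tfrac12\epsilon_{CB}(\sCurlDagger_{2,0}\kappa)_{AA'} + \text{(trace-adjusted)}$; combined with the definition $\xi_{AA'}=(\sCurlDagger_{2,0}\kappa)_{AA'}$ and the identity relating $\nabla_{AA'}(\kappa_{CD}\kappa^{CD})$ to $\kappa$ and $\xi$, this should produce $U_{AA'}$ proportional to a contraction of $\kappa$ with $\xi$. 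The plan is to show the algebraic identity $\kappa_A{}^B U_{BA'} \propto \xi_{AA'}$ (up to the scalar $\kappa_{CD}\kappa^{CD}$), which is where the special structure enters. Then $\eta_{AA'} = (\sCurlDagger_{2,0}\Theta)_{AA'}$ with $\Theta_{AB} = -2\kappa_{(A}{}^C\phi_{B)C}$, and one expands $(\sCurlDagger_{2,0}\Theta)_{AA'} = \nabla^B{}_{A'}\Theta_{AB}$ using the Leibniz rule and the Maxwell equation $(\sCurlDagger_{2,0}\phi)_{AA'}=0$.

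\medskip

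\emph{Step 2: Prove the defining relation \eqref{eq:betaToUpsilon}.} This is the heart of the lemma: once $\beta_{AA'} = -U_{AA'}\Upsilon + (\sTwist_{0,0}\Upsilon)_{AA'} = -U_{AA'}\Upsilon + \nabla_{AA'}\Upsilon$ is established, the remaining three identities follow by pure operator calculus. To get \eqref{eq:betaToUpsilon}, I would compute $\nabla_{AA'}\Upsilon = \nabla_{AA'}(\kappa^{CD}\phi_{CD}) = (\nabla_{AA'}\kappa^{CD})\phi_{CD} + \kappa^{CD}\nabla_{AA'}\phi_{CD}$, reduce the first term via the Killing spinor equation (producing a term in $\xi$ and $\phi$, hence in $\Theta$ and $U$), and reduce the second term's irreducible pieces using the Maxwell equation (the $(\sCurlDagger_{2,0}\phi)$ piece vanishes, leaving a $(\sCurl_{0,2}$-type or symmetrized piece that recombines into $\eta$). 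Matching this against $\eta_{AA'} + U^B{}_{A'}\Theta_{AB}$ term by term, using the algebraic $\kappa U \propto \xi$ relation from Step 1, should close it. I anticipate this is the main obstacle — it requires carefully tracking the decomposition of $\nabla_{AA'}\phi_{CD}$ into its $\SymSpin_{3,1}$ and $\SymSpin_{1,1}$ irreducible parts and keeping the numerical coefficients ($\tfrac13$, $2$, etc.) consistent, and one must invoke the aligned-matter / type $\PetrovD$ structure at exactly the right point.

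\medskip

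\emph{Step 3: Derive \eqref{eq:Divbeta}, \eqref{eq:Curlbeta}, \eqref{eq:CurlDgbeta} from \eqref{eq:betaToUpsilon}.} Apply $\sDiv_{1,1}$, $\sCurl_{1,1}$, $\sCurlDagger_{1,1}$ to $\beta_{AA'} = -U_{AA'}\Upsilon + (\sTwist_{0,0}\Upsilon)_{AA'}$. The composition $\sDiv_{1,1}\sTwist_{0,0}$ acting on a scalar is (up to curvature terms that vanish in vacuum) a multiple of the wave operator, but since $\Upsilon$ here is not assumed to satisfy a wave equation directly, the better route is: $\sCurl_{1,1}\sTwist_{0,0}\Upsilon = 0$ and $\sCurlDagger_{1,1}\sTwist_{0,0}\Upsilon = 0$ identically (the twistor operator followed by its "opposite-chirality curl" vanishes on scalars — this is a standard commutator identity from \cite{ABB:symop:2014CQGra..31m5015A}), so $(\sCurl_{1,1}\beta)_{AB} = -(\sCurl_{1,1}(U\Upsilon))_{AB} = -\Upsilon(\sCurl_{1,1}U)_{AB} - U_{(A}{}^{A'}\nabla_{B)A'}\Upsilon$. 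Then substitute $\nabla_{B)A'}\Upsilon = \beta_{B)A'} + U_{B)A'}\Upsilon$ from \eqref{eq:betaToUpsilon} and use that $(\sCurl_{1,1}U)_{AB}$ vanishes — because $U_{AA'}$ is a gradient, $U_{AA'} = \nabla_{AA'}(-\tfrac12\log(-\kappa_{CD}\kappa^{CD}))$, so $(\sCurl_{1,1}U)_{AB} = (\sCurl_{1,1}\sTwist_{0,0}(\cdot))_{AB} = 0$ and $U_{(A}{}^{A'}U_{B)A'} = 0$ by symmetry. This leaves exactly $(\sCurl_{1,1}\beta)_{AB} = -U_{(A}{}^{A'}\beta_{B)A'}$ — wait, sign check needed against the stated $+U_{(A}{}^{A'}\beta_{B)A'}$; I would reconcile this by tracking $U^B{}_{A'}$ vs $U_{B A'}$ index conventions carefully. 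The divergence identity \eqref{eq:Divbeta} follows identically: $(\sDiv_{1,1}\beta) = -\nabla^{AA'}(U_{AA'}\Upsilon) + \nabla^{AA'}\nabla_{AA'}\Upsilon$; the second term, modulo the scalar wave operator which I would show cancels against the divergence of $U\Upsilon$ using $\Upsilon$'s equation, leaves $-U^{AA'}\beta_{AA'}$ after one more substitution of \eqref{eq:betaToUpsilon}. The $\sCurlDagger_{1,1}$ case is the complex/primed-index analogue of the $\sCurl_{1,1}$ case and goes through verbatim with $A \leftrightarrow A'$. Throughout, I would cite the commutator and operator-composition identities of \cite{ABB:symop:2014CQGra..31m5015A} rather than re-derive them, and cite Lemma~\ref{lem:etaeqs} for the analogous facts about $\eta_{AA'}$ as a consistency cross-check.
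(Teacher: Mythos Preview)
The paper does not give its own proof of this lemma; it is simply stated with a citation to \cite[Lemma 8]{2015arXiv150104641A}. Your strategy --- establish \eqref{eq:betaToUpsilon} first by direct computation, then derive \eqref{eq:Divbeta}--\eqref{eq:CurlDgbeta} from it --- is sound and is essentially how the result is obtained in the cited reference.

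A few comments on your Step 3. Your key structural observations are correct: since $U_{AA'}$ is the gradient of a scalar, $(\sCurl_{1,1}U)_{AB}=(\sCurlDagger_{1,1}U)_{A'B'}=0$; and the identity $V_{(A}{}^{A'}V_{B)A'}=0$ holds for \emph{any} spinor $V_{AA'}$ (it follows from $V_A{}^{A'}V_{BA'}=-V_{AA'}V_B{}^{A'}$), so no special property of $U$ is needed there. Your sign worry in \eqref{eq:Curlbeta} resolves once the raising/lowering conventions are tracked: from $U_{BB'}\beta_A{}^{B'}=-U_B{}^{C'}\beta_{AC'}$ (Penrose--Rindler convention $\lambda^{A'}=\epsilon^{A'B'}\lambda_{B'}$) one gets exactly $+U_{(A}{}^{A'}\beta_{B)A'}$, matching the statement. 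For \eqref{eq:Divbeta}, you correctly note that after substituting \eqref{eq:betaToUpsilon} the identity reduces to $\square\Upsilon=(\nabla^aU_a+U^aU_a)\Upsilon$; this is precisely the Fackerell--Ipser equation for the rescaled middle Maxwell scalar, which follows from the Maxwell equation on a type $\PetrovD$ vacuum background and is what ``$\Upsilon$'s equation'' must mean. So there is no gap, only bookkeeping.
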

The superenergy tensors for $\beta_{AA'}$ and $\Theta_{AB}$ are given by
\begin{subequations}
\begin{align}
\EMTensorH_{ABA'B'}={}&\tfrac{1}{2} \beta_{AB'} \overline{\beta}_{A'B}
 + \tfrac{1}{2} \beta_{BA'} \overline{\beta}_{B'A},  \label{eq:Hdef} \\
\mathbf{W}_{ABA'B'} ={}& \Theta_{AB} \overline\Theta_{A'B'} .
\end{align}
\end{subequations}
Choosing the principal tetrad in Schwarzschild given by specializing \eqref{eq:KerrTetrad} to $a = 0$ gives in a standard manner an orthonormal frame, 
\begin{align*}
\OrthFrameT^{AA'}\equiv{}&\tfrac{1}{\sqrt{2}}(o^{A} \bar o^{A'}
 + \iota^{A} \bar\iota^{A'}),&
\OrthFrameX^{AA'}\equiv{}&\tfrac{1}{\sqrt{2}}(\bar o^{A'} \iota^{A}
 + o^{A} \bar\iota^{A'}),\\
\OrthFrameY^{AA'}\equiv{}&\tfrac{i}{\sqrt{2}}(- \bar o^{A'} \iota^{A}
 + o^{A} \bar\iota^{A'}),&
\OrthFrameZ^{AA'}\equiv{}&\tfrac{1}{\sqrt{2}}(o^{A} \bar o^{A'}
 -  \iota^{A} \bar\iota^{A'}).
\end{align*}
The tensor $\EMTensorH_{ABA'B'}$, which agrees up to lower order terms with the conserved tensor $V_{ABA'B'}$ introduced in section \ref{sec:Vab}, is not itself conserved, it yields a conserved energy current.  
\begin{lemma}[\protect{\cite[Lemma 11]{2015arXiv150104641A}}]\label{lem:divH}
For the Schwarzschild spacetime we have
\begin{subequations}\label{eq:divHeqs}
\begin{align}
\nabla^{BB'}\EMTensorH_{ABA'B'}={}&- U_{AA'} \beta^{BB'} \bar{\beta}_{B'B},\label{eq:divH}\\
\xi^{AA'} \nabla^{BB'}\EMTensorH_{ABA'B'}={}&0.\label{eq:xidivH}
\end{align}
\end{subequations}
In particular, $\xi^{BB'}\EMTensorH_{ABA'B'}$ is a future causal conserved current.
\end{lemma}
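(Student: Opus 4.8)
The plan is to verify the two identities in \eqref{eq:divHeqs} by direct computation, exploiting the algebraic structure of $\EMTensorH_{ABA'B'}$ together with the differential relations for $\beta_{AA'}$ collected in Lemma~\ref{lem:betaprop}. First I would compute $\nabla^{BB'}\EMTensorH_{ABA'B'}$ from the definition \eqref{eq:Hdef}. Expanding the divergence and using the decomposition of $\nabla^{BB'}\beta_{BA'}$ into its irreducible pieces — namely $(\sDiv_{1,1}\beta)$, $(\sCurl_{1,1}\beta)_{AB}$ and $(\sCurlDagger_{1,1}\beta)_{A'B'}$ — one can replace each of these by the right-hand sides given in \eqref{eq:Divbeta}, \eqref{eq:Curlbeta}, \eqref{eq:CurlDgbeta}. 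The key point is that all three of these are proportional to $U$ contracted with $\beta$, so after the substitutions every term that survives carries a factor of $U$; collecting them should produce precisely $-U_{AA'}\beta^{BB'}\bar\beta_{B'B}$. This is the content of \eqref{eq:divH}. Some care is needed with the index symmetrizations and with the fact that $\EMTensorH$ is built from both $\beta$ and $\bar\beta$, so the complex-conjugate relations (obtained by barring Lemma~\ref{lem:betaprop}) must be used for the $\bar\beta$ factor.

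Next, for \eqref{eq:xidivH} I would contract the result of \eqref{eq:divH} with $\xi^{AA'}$, obtaining $-\xi^{AA'}U_{AA'}\,\beta^{BB'}\bar\beta_{B'B}$. Here I would use the explicit Schwarzschild expressions $\xi^{AA'}=(\partial_t)^{AA'}$ and $U_{AA'}=-r^{-1}\nabla_{AA'}r$: since $\partial_t$ is tangent to the $t$-flow and $r$ is $t$-independent, $\xi^{AA'}\nabla_{AA'}r = (\partial_t)r = 0$, hence $\xi^{AA'}U_{AA'}=0$ and the whole expression vanishes. Alternatively, and more invariantly, one notes $\xi^a\nabla_a r$ vanishes because $\xi^a$ is the stationary Killing field and $r$ is a function of the orbit space; this avoids explicit coordinates. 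Either way \eqref{eq:xidivH} follows immediately from \eqref{eq:divH}.

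Finally, the assertion that $\xi^{BB'}\EMTensorH_{ABA'B'}$ is a future causal conserved current has two parts. Conservation, $\nabla^{AA'}(\xi^{BB'}\EMTensorH_{ABA'B'})=0$, follows from \eqref{eq:xidivH} together with the fact that $\xi^a$ is Killing, so that $\nabla^{(A}{}_{(A'}\xi_{B)B')}=0$ and the extra term from differentiating $\xi$ against the symmetric tensor $\EMTensorH$ drops out (one uses that $\EMTensorH$ is symmetric and that $\nabla_a\xi_b$ is antisymmetric). The causality statement is a general property of superenergy (Bel--Robinson-type) tensors: $\EMTensorH_{ABA'B'}$ has the Rainich/dominant-energy structure $\tfrac12(\beta_{AB'}\bar\beta_{A'B}+\beta_{BA'}\bar\beta_{B'A})$, which is exactly the form of the symmetrized square of the $1$-form spinor $\beta_{AA'}$, and such tensors satisfy the dominant energy condition — see \cite{bergqvist:1999CMaPh.207..467B,senovilla:2000CQGra..17.2799S} — so contracting with the future-directed timelike $\xi^a$ yields a future causal vector. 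I expect the main obstacle to be the bookkeeping in the first step: correctly sorting the irreducible components of $\nabla\beta$ and matching symmetrizations so that the $U$-terms assemble into the single clean expression on the right of \eqref{eq:divH} rather than leaving stray curl-type remainders; once that identity is in hand, the rest is essentially immediate.
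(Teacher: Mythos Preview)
Your approach is correct and is essentially the natural route; the paper itself does not give a proof here but only cites the result and adds a remark. One point worth sharpening: you locate the Schwarzschild-specific input at the second step, where $\xi^{AA'}U_{AA'}=0$, but in fact the key Schwarzschild feature enters already in the first identity. When you bar Lemma~\ref{lem:betaprop} to handle the $\bar\beta$ factors, the resulting terms carry $\bar U$, not $U$; the clean form \eqref{eq:divH} with a single factor of $U_{AA'}$ only emerges because in Schwarzschild $U_{AA'}=-r^{-1}\nabla_{AA'}r$ is real, so $U=\bar U$ and the two families of terms collapse together. This is exactly what the paper's remark after the lemma flags: in Kerr, $\kappa_1\propto r-ia\cos\theta$ makes $U$ complex, the $U$ and $\bar U$ pieces no longer match, and \eqref{eq:divH} fails in the stated form (hence also \eqref{eq:xidivH}, even though $\xi^{AA'}U_{AA'}$ still vanishes in Kerr). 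So when you do the bookkeeping, make the reality of $U$ explicit at the point where the $\bar\beta$ contributions are merged with the $\beta$ contributions; otherwise your outline is exactly right, including the argument for conservation via the Killing property of $\xi^a$ and the causality via the dominant-energy property of superenergy tensors.
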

This result makes use of the fact that the Schwarzschild spacetime is non-rotating. 
For the Kerr spacetime with non-vanishing angular momentum, the 1-form $U_{AA'}$ fails to be real and the current $\EMTensorH_{ab} \xi^b$ is not conserved.

For a vector field $\vecMprimary^a$ and a scalar $\scalMprimary$, define the Morawetz current $\MorawetzCurrJ_a$ by 
\begin{align}
\MorawetzCurrJ_{AA'}={}&\EMTensorH_{ABA'B'} \vecMprimary^{BB'}
 -  \tfrac{1}{2} \scalMprimary \bar{\beta}_{A'}{}^{B} \Theta_{AB}
 -  \tfrac{1}{2} \scalMprimary \beta_{A}{}^{B'} \overline{\Theta}_{A'B'}
  + \tfrac{1}{2} \Theta_{A}{}^{B} \overline{\Theta}_{A'}{}^{B'} (\sTwist_{0,0} \scalMprimary )_{BB'}.\label{eq:MorawetzCurrJDef}
\end{align}
For any spacelike hypersurface $\Sigma$, we define the energy integrals
\begin{align}
E_\xi(\Sigma)={}&\int_{\Sigma}	\EMTensorH_{ab}\xi^{b}
N^{a} \mbox{d}\mu_{\Sigma}, \label{eq:Exidef}\\
E_{\xi+\vecMprimary,\scalMprimary}(\Sigma)
={}&\int_{\Sigma}  \left(\EMTensorH_{ab}\xi^b+\MorawetzCurrJ_a\right)
N^{a} \mbox{d}\mu_{\Sigma}. \label{eq:ExiAqdef}
\end{align}
In view of Lemma \ref{lem:divH}, $E_{\xi}(\Sigma)$ is nonnegative and conserved.

We shall make the following explicit choices of the $\vecMprimary^a$ and $\scalMprimary$,
\begin{subequations}
\begin{align}
\vecMprimary^a={}&\frac{(r - 3 M) (r - 2 M)}{2 r^2}(\partial_r)^a, \label{eq:vecMdef} \\
\scalMprimary ={}&\frac{9 M^2 (r - 2 M) (2 r - 3 M)}{4 r^5}. \label{eq:qdef}
\end{align}
\end{subequations}

\subsection{Positive energy}\label{sec:posenergy}
Before proving the integrated decay estimate, we shall verify that the energy \eqref{eq:ExiAqdef} is be non-negative, and uniformly equivalent to the energy \eqref{eq:Exidef}.

From the properties of spin-weighted spherical harmonics, one derives the inequalities 
\begin{subequations} \label{eq:HardyTheta}
\begin{align}
\int_{S_r} |\varphi_{0}|^2 d\mu_{S_r}
\leq{}& r^2\int_{S_r} |\edt' \varphi_{0}|^2 \mbox{d}\mu_{S_r},\label{eq:HardyTheta0}\\
\int_{S_r} |\varphi_{2}|^2 d\mu_{S_r}
\leq{}& r^2\int_{S_r} |\edt \varphi_{2}|^2 \mbox{d}\mu_{S_r}.\label{eq:HardyTheta2}
\end{align}
\end{subequations}
for the extreme scalars $\varphi_0, \varphi_2$ of a smooth symmetric spinor field $\varphi_{AB}$, cf. \cite[Lemma 6]{2015arXiv150104641A}. Here $S_r$ is a sphere with constant $t,r$ in the Schwarzschild spacetime.

By making use of the Cauchy-Schwarz inequality, and the Hardy type inequalities \eqref{eq:HardyTheta}, we get
\begin{theorem}[\protect{\cite[Theorem 13, Corollary 14]{2015arXiv150104641A}}]\label{thm:PositiveEnergy}
Let $\vecMprimary^{AA'}$ and $\scalMprimary$ be given by \eqref{eq:vecMdef} and \eqref{eq:qdef}.
\begin{enumerate}
\item 
For any constant $|c_1|\leq 10/9$ and any spherically symmetric slice $\Sigma$ with future pointing timelike normal $N^{AA'}$ such that $N^{AA'}N_{AA'}=1$ we have a positive energy
\begin{align}
\int_{\Sigma} N^{AA'} (\EMTensorH_ {ABA'B'} \xi^{BB'} + c_ 1\MorawetzCurrJ_{AA'})  \mbox{d}\mu_{\Sigma_i} \geq{}&0.
\end{align}
\item 
For any spherically symmetric slice $\Sigma$ with future pointing timelike normal $N^{AA'}$ such that $N^{AA'}N_{AA'}=1$ the energies $E_\xi(\Sigma)$ and
$E_{\xi +\vecMprimary, \scalMprimary}(\Sigma)$ are uniformly equivalent,
\begin{align}
\tfrac{1}{10}E_\xi(\Sigma)
\leq{}&
E_{\xi +\vecMprimary, \scalMprimary}(\Sigma)
\leq
\tfrac{19}{10}E_\xi(\Sigma).
\end{align}
\end{enumerate} 
\end{theorem}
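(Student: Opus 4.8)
The plan is to deduce part (2) directly from part (1), so that essentially all the effort goes into part (1). For part (2), note from \eqref{eq:Exidef}--\eqref{eq:ExiAqdef} that $E_{\xi+\vecMprimary,\scalMprimary}(\Sigma) = E_\xi(\Sigma) + \int_\Sigma \MorawetzCurrJ_a N^a\,\mbox{d}\mu_\Sigma$. Applying part (1) once with $c_1 = 10/9$ and once with $c_1 = -10/9$, together with $E_\xi(\Sigma) \geq 0$ from Lemma~\ref{lem:divH}, yields $-\tfrac{9}{10}E_\xi(\Sigma) \leq \int_\Sigma \MorawetzCurrJ_a N^a\,\mbox{d}\mu_\Sigma \leq \tfrac{9}{10}E_\xi(\Sigma)$, and hence $\tfrac{1}{10}E_\xi(\Sigma) \leq E_{\xi+\vecMprimary,\scalMprimary}(\Sigma) \leq \tfrac{19}{10}E_\xi(\Sigma)$. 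So everything reduces to part (1).

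Since $\Sigma$, the unit normal $N^{AA'}$, and the coefficients $\vecMprimary^a$, $\scalMprimary$ are all spherically symmetric, for part (1) it suffices to prove that the integral over each orbit sphere $S_r$ of $N^{AA'}\bigl(\EMTensorH_{ABA'B'}\xi^{BB'} + c_1\MorawetzCurrJ_{AA'}\bigr)$ is non-negative for $|c_1| \leq 10/9$. First I would regroup, using \eqref{eq:MorawetzCurrJDef}, so that the two $\EMTensorH$-terms combine into $\EMTensorH_{ABA'B'}(\xi + c_1\vecMprimary)^{BB'}$ while the rest of $c_1\MorawetzCurrJ_{AA'}$ forms a remainder built from the $\scalMprimary\,\beta\,\Theta$ terms and the $(\sTwist_{0,0}\scalMprimary)$ term. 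Because $\EMTensorH$ is the superenergy tensor of the $1$-form $\beta_{AA'}$ (see \eqref{eq:Hdef}), it satisfies the dominant energy condition; and since in Schwarzschild $\xi^a\xi_a = f$, $\xi^a\vecMprimary_a = 0$, and $\vecMprimary^a\vecMprimary_a = -\tfrac{f(r-3M)^2}{4r^2}$ by \eqref{eq:vecMdef}, the vector $\xi^a + c_1\vecMprimary^a$ has norm $f\bigl(1 - c_1^2\tfrac{(r-3M)^2}{4r^2}\bigr)$, which is future causal on the whole exterior $r \geq 2M$ provided $|c_1| \leq 2$, in particular for $|c_1| \leq 10/9$. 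Hence $N^{AA'}\EMTensorH_{ABA'B'}(\xi + c_1\vecMprimary)^{BB'} \geq 0$ pointwise, and this term carries, through $\beta$, the favourable derivative densities $|\tho'\Theta_0|^2$, $|\tho\Theta_2|^2$, $|\edt'\Theta_0|^2$, $|\edt\Theta_2|^2$.

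It then remains to control the indefinite remainder $c_1 N^{AA'}\bigl(-\tfrac{1}{2}\scalMprimary\bar\beta_{A'}{}^B\Theta_{AB} - \tfrac{1}{2}\scalMprimary\beta_A{}^{B'}\bar\Theta_{A'B'} + \tfrac{1}{2}\Theta_A{}^B\bar\Theta_{A'}{}^{B'}(\sTwist_{0,0}\scalMprimary)_{BB'}\bigr)$. Here I would expand in the principal dyad, using Lemma~\ref{lem:Ephi-extreme} to write $\Theta_1 = 0$, $\Theta_0 = -2\kappa_1\phi_0$, $\Theta_2 = 2\kappa_1\phi_2$, and \eqref{eq:betaToUpsilon} to write $\beta_{AA'} = -U_{AA'}\Upsilon + (\sTwist_{0,0}\Upsilon)_{AA'}$; then I would integrate the $\beta$--$\Theta$ cross terms by parts over $S_r$ and apply Cauchy--Schwarz, so that the remainder is bounded by a small multiple of $\int_{S_r}|\beta|^2\,\mbox{d}\mu_{S_r}$ together with multiples of $\int_{S_r}\bigl(|\Theta_0|^2 + |\Theta_2|^2\bigr)\,\mbox{d}\mu_{S_r}$. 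Those last, zeroth-order, terms would then be traded, via the Hardy inequalities \eqref{eq:HardyTheta}, for $r^2\int_{S_r}|\edt'\Theta_0|^2$ and $r^2\int_{S_r}|\edt\Theta_2|^2$, which are already present with a good sign in the superenergy term (after absorbing the lower-order $U\Theta$ contributions, $U_{AA'}$ being of order $1/r$). The explicit choices \eqref{eq:vecMdef} and \eqref{eq:qdef} — in particular the fact that $\scalMprimary = O(M^2/r^3)$ is genuinely lower order, which tames the extra factor $r^2$ produced by the Hardy inequalities — are tuned precisely so that this absorption closes with a definite margin for all $|c_1| \leq 10/9$.

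The main obstacle is this last, computational step: the dyad expansion of $N^{AA'}\MorawetzCurrJ_{AA'}$, the angular integration by parts, and the Cauchy--Schwarz together with the Hardy estimates must be pushed through carefully enough to pin down the sharp constants $10/9$ (hence $1/10$ and $19/10$). The structural ingredients — the dominant energy condition for the superenergy tensor of $\beta_{AA'}$, the causality of $\xi + c_1\vecMprimary$ throughout $r \geq 2M$, the identities of Lemma~\ref{lem:betaprop}, and the inequalities \eqref{eq:HardyTheta} — are all already available; only the bookkeeping is delicate.
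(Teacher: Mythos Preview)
Your proposal is correct and follows essentially the same approach that the paper indicates: the paper's only comment on the proof is that it follows ``by making use of the Cauchy--Schwarz inequality, and the Hardy type inequalities \eqref{eq:HardyTheta}'', citing \cite[Theorem 13, Corollary 14]{2015arXiv150104641A} for the details, and your outline does exactly this --- using the dominant energy condition for the superenergy tensor $\EMTensorH_{ab}$ of $\beta_{AA'}$ together with the causality of $\xi^a + c_1\vecMprimary^a$ to handle the principal part, then Cauchy--Schwarz and the Hardy inequalities \eqref{eq:HardyTheta} to absorb the $\scalMprimary$-remainder. Your derivation of part (2) from part (1) is also the intended one.
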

In particular, we find that using Theorem \ref{thm:PositiveEnergy}, we can dominate the integral of the bulk term for the Morawetz current over a spacetime domain bounded by Cauchy surfacs $\Sigma_1$, $\Sigma_2$, in terms of the energies $E_{\xi}(\Sigma_1)$, $E_\xi(\Sigma_2)$.  This is the essential step in the proof of an integrated energy decay (or Morawetz) estimate.

We shall apply \eqref{eq:HardyTheta} to $\Theta_{AB}$. 
We have
\begin{subequations}
\begin{align}
|\beta_{\OrthFrameZ}|^2 + | \beta_{\OrthFrameZ}|^2={}&|\edt \Theta_{2}|^2
 + |\edt' \Theta_{0}|^2,\label{eq:TbetaZbetatocomp}\\
\intertext{and}
\mathbf{W}_{\OrthFrameT\OrthFrameT}=\OrthFrameT^{AA'} \OrthFrameT^{BB'} \Theta_{AB}\overline\Theta_{A'B'}={}&
\tfrac{1}{2} |\Theta_{0}|^2
 + \tfrac{1}{2} |\Theta_{2}|^2.\label{eq:E2tocomp}
\end{align}
\end{subequations}
Equations \eqref{eq:TbetaZbetatocomp}, \eqref{eq:E2tocomp} and the inequalities \eqref{eq:HardyTheta} combine to give the estimate 
\begin{align}
\int_{S_r} \mathbf{W}_{\OrthFrameT\OrthFrameT}	d\mu_{S_r}
\leq{}&\frac{r^2}{2} \int_{S_r} |\beta_{\OrthFrameT}|^2 + |\beta_{\OrthFrameZ}|^2 \mbox{d}\mu_{S_r}, 
\end{align}
cf. \cite[Lemma 15]{2015arXiv150104641A}.
From the form \eqref{eq:MorawetzCurrJDef} of the Morawetz current $\MorawetzCurrJ_a$, the definition of $\beta_{AA'}$ and the properties of $\beta_a$ given in Lemma~\ref{lem:betaprop} we get
\begin{align}
- (\sDiv_{1,1} \MorawetzCurrJ)={}&-  \beta^{AA'} \bar{\beta}^{B'B} (\sTwist_{1,1} \vecMprimary)_{ABA'B'}
+\beta^{AA'} \bar{\beta}_{A'A} \bigl(
\tfrac{1}{4} (\sDiv_{1,1} \vecMprimary)
 + \vecMprimary^{BB'} U_{BB'}
- \scalMprimary\bigr)\nonumber\\
& + \Theta_{AB} \overline{\Theta}_{A'B'} \bigl(U^{AA'} (\sTwist_{0,0} \scalMprimary )^{BB'}
 -  \tfrac{1}{2} (\sTwist_{1,1} \sTwist_{0,0} \scalMprimary )^{ABA'B'}\bigr).\label{eq:divJJeq1}
\end{align}
With the explicit choices \eqref{eq:vecMdef} and \eqref{eq:qdef} for the Morawetz vector field $\vecMprimary^a$ and the scalar $\scalMprimary$, respectively, the above estimates now yield 
\begin{align}
\int_{\Omega}-(\sDiv_{1,1} \MorawetzCurrJ)d\mu_{\Omega}\geq{}&
\int_{\Omega}\frac{1}{8}|\beta_{AA'}|^2_{1,\text{deg}}+\frac{M}{100 r^4}|\Theta_{AB}|^2_2 d\mu_{\Omega}, \label{eq:divPint}
\end{align}
for any spherically symmetric spacetime region $\Omega$ of the Schwarzschild spacetime. 

We now make use of Gauss' formula to evaluate the left hand side of \eqref{eq:divPint}. Theorem \ref{thm:PositiveEnergy} and the estimates just proved then yield the following energy bound and Morawetz estimate for the Maxwell field on the Schwarzschild spacetime. 
\begin{theorem}[\protect{\cite[Theorem 2]{2015arXiv150104641A}}]
Let $\Sigma_1$ and $\Sigma_2$ be spherically symmetric spacelike
hypersurfaces in the exterior region of the Schwarzschild spacetime
such that $\Sigma_2$ lies in the future of $\Sigma_1$ and
$\Sigma_2\cup -\Sigma_1$ is the oriented boundary of a spacetime
region $\Omega$.

If $\phi_{AB}$ is a solution of the Maxwell equations on the Schwarzschild exterior,
and $\Theta_{AB}$ and $\beta_{AA'}$
are defined by equations
\eqref{betaDef}-\eqref{ThetaDef}, then
\begin{align}
E_{\xi}(\Sigma_2)={}&E_{\xi}(\Sigma_1),\\
\int_{\Omega} \vert \beta_{AA'}\vert^2_{1,\text{deg}}
+\frac{2 M}{25 r^4}\vert \Theta_{AB}\vert^2_{2} d\mu_{\Omega}
\leq{}& \frac{72}{5}E_{\xi}(\Sigma_1), \label{eq:MorIneq}
\end{align}
where $E_\xi(\Sigma_i)$ is the energy associated with $\xi^a$, evaluated on $\Sigma_i$, and
$|\beta_{AA'}|_{1,deg}$ and $|\Theta_{AB}|_{2}$ are, respectively, the degenerate norm of $\beta_{AA'}$ and the norm of $\Theta_{AB}$ defined by
\begin{subequations}
\begin{align*}
\vert \beta_{AA'}\vert^2_{1,\text{deg}}={}&
\frac{(r - 3 M)^2}{r^3}\bigl(|\beta_{\OrthFrameX}|^2 + |\beta_{\OrthFrameY}|^2\bigr)  + \frac{M (r - 2 M)}{r^3}|\beta_{\OrthFrameZ}|^2
 + \frac{M  (r - 3 M)^2 (r - 2 M)}{r^5}|\beta_{\OrthFrameT}|^2,\\
\vert \Theta_{AB}\vert^2_{2}={}& \frac{ (r - 2 M)}{r} \mathbf{W}_{\OrthFrameT\OrthFrameT}.
\end{align*}
\end{subequations}
\qed
\end{theorem}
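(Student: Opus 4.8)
The plan is to follow the strategy laid out in the preceding subsections, assembling the ingredients into a Gauss-law argument. First I would recall that the tensor $\EMTensorH_{ABA'B'}$ defined in \eqref{eq:Hdef} satisfies, by Lemma~\ref{lem:divH}, the identity $\xi^{AA'}\nabla^{BB'}\EMTensorH_{ABA'B'} = 0$, so the current $\EMTensorH_{ab}\xi^b$ is conserved; applying the divergence theorem to the region $\Omega$ bounded by $\Sigma_2 \cup -\Sigma_1$ immediately gives the energy identity $E_\xi(\Sigma_2) = E_\xi(\Sigma_1)$. This uses crucially that the hypersurfaces are spherically symmetric (so that the flux through any lateral timelike boundary, if present, vanishes, and more importantly that $E_\xi$ is well-defined) and that $\xi^a = (\partial_t)^a$ is Killing in Schwarzschild.

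The second, main assertion \eqref{eq:MorIneq} is proved by considering the modified current $\EMTensorH_{ab}\xi^b + \MorawetzCurrJ_a$ with $\MorawetzCurrJ_a$ given by \eqref{eq:MorawetzCurrJDef} and the explicit choices \eqref{eq:vecMdef}, \eqref{eq:qdef} for $\vecMprimary^a$ and $\scalMprimary$. Applying Gauss' formula to this current over $\Omega$ yields
$$
E_{\xi+\vecMprimary,\scalMprimary}(\Sigma_2) - E_{\xi+\vecMprimary,\scalMprimary}(\Sigma_1) = \int_\Omega \bigl(-(\sDiv_{1,1}\MorawetzCurrJ)\bigr) d\mu_\Omega,
$$
since $\EMTensorH_{ab}\xi^b$ contributes nothing to the bulk by \eqref{eq:xidivH}. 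The left-hand side is controlled: by Theorem~\ref{thm:PositiveEnergy}(2) the energies $E_{\xi+\vecMprimary,\scalMprimary}(\Sigma_i)$ are uniformly equivalent to $E_\xi(\Sigma_i)$ with constants $1/10$ and $19/10$, and by the conservation just established $E_\xi(\Sigma_2) = E_\xi(\Sigma_1)$, so the difference on the left is bounded in absolute value by $\bigl(\tfrac{19}{10} - \tfrac{1}{10}\bigr)E_\xi(\Sigma_1) = \tfrac{9}{5} E_\xi(\Sigma_1)$, hence (after tracking the correct sign and a slightly more careful bookkeeping of the positive-energy statement with $c_1 = 1$) by $\tfrac{72}{5}E_\xi(\Sigma_1)$ — this is where the explicit numerical constant comes from. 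For the right-hand side, I would invoke \eqref{eq:divPint}: with the chosen $\vecMprimary^a$, $\scalMprimary$, one has $-(\sDiv_{1,1}\MorawetzCurrJ) \geq \tfrac18 |\beta_{AA'}|^2_{1,\text{deg}} + \tfrac{M}{100 r^4}|\Theta_{AB}|^2_2$ pointwise, which after multiplying through by the appropriate factor and combining with the bound on the boundary terms gives exactly \eqref{eq:MorIneq}. The pointwise computation \eqref{eq:divJJeq1} of $-(\sDiv_{1,1}\MorawetzCurrJ)$ in terms of the twistor operator applied to $\vecMprimary^a$ and $\scalMprimary$, Lemma~\ref{lem:betaprop}, and the Hardy inequalities \eqref{eq:HardyTheta} applied to $\Theta_{AB}$ to absorb the $\mathbf{W}_{\OrthFrameT\OrthFrameT}$ term are the analytic heart of this step.

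The hard part, and the step requiring genuine care rather than bookkeeping, is verifying that with the specific rational functions \eqref{eq:vecMdef} and \eqref{eq:qdef} the density $-(\sDiv_{1,1}\MorawetzCurrJ)$ is nonnegative with the stated coefficients — in particular that the coefficient of the angular/extreme components of $\beta$ degenerates only at the photon sphere $r = 3M$ (reflecting trapping), that the coefficient of $|\beta_{\OrthFrameZ}|^2$ and $|\beta_{\OrthFrameT}|^2$ stays positive down to the horizon, and that the dangerous $\Theta$-term generated by the $\scalMprimary$ and $U_{AA'}$ contributions can be controlled by the Hardy inequality rather than swamping the good terms. This is precisely the place where the Maxwell case differs from the scalar wave equation: one cannot use $T_{ab}$ directly because of the Coulomb obstruction, and the point of working with $\EMTensorH_{ab}$, $\beta_{AA'}$, $\Theta_{AB}$ (built from the Killing spinor via \eqref{ThetaDef}, \eqref{betaDef}) is to reintroduce a positive structure. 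All of this is done in \cite{2015arXiv150104641A}, so in this survey I would present the identities \eqref{eq:divJJeq1}, \eqref{eq:divPint} and Theorem~\ref{thm:PositiveEnergy} as the inputs and simply assemble them via the divergence theorem as above, citing the original reference for the lengthy positivity verification.
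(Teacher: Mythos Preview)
Your proposal is correct and follows essentially the same approach as the paper: energy conservation from Lemma~\ref{lem:divH}, then Gauss' formula applied to $\EMTensorH_{ab}\xi^b + \MorawetzCurrJ_a$, with the bulk controlled from below by \eqref{eq:divPint} and the boundary fluxes controlled by Theorem~\ref{thm:PositiveEnergy}. One small point of clarity: your passage from $\tfrac{9}{5}$ to $\tfrac{72}{5}$ reads as if further ``bookkeeping'' of the positive-energy constants is needed, but in fact $\tfrac{72}{5} = 8 \cdot \tfrac{9}{5}$ simply comes from clearing the factor $\tfrac{1}{8}$ in \eqref{eq:divPint}; the energy comparison $\tfrac{19}{10} - \tfrac{1}{10} = \tfrac{9}{5}$ is already the sharp boundary estimate.
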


\subsection*{Acknowledgements} We are grateful to Steffen Aksteiner, Siyuan Ma, Marc Mars and Claudio Paganini for helpful remarks. 
PB and TB were supported by EPSRC grant EP/J011142/1. LA thanks Institut Henri Poincar\'e, Paris, for hospitality and support during part of the work on these notes.

\newcommand{\arxivref}[1]{\href{http://www.arxiv.org/abs/#1}{{arXiv.org:#1}}}
\newcommand{\mnras}{Monthly Notices of the Royal Astronomical Society}
\newcommand{\prd}{Phys. Rev. D} 
\bibliographystyle{abbrv} %

\end{document}